\newtheorem{theorem}{Theorem}[section]
\newtheorem{proposition}[theorem]{Proposition}
\newtheorem{corollary}[theorem]{Corollary}
\newtheorem{lemma}[theorem]{Lemma}
\theoremstyle{definition}
\newtheorem{definition}[theorem]{Definition}
\newtheorem{question}[theorem]{Question}
\newcommand{\pred}{\mathsf{Pred}}
\newcommand{\var}{\mathsf{Var}}
\newcommand{\propvar}{\mathsf{Prop}}
\newcommand{\liff}{\leftrightarrow}
\newcommand{\B}{\Box}
\newcommand{\D}{\Diamond}
\newcommand{\Dh}{\D_h}
\newcommand{\Dv}{\D_v}
\let\oldexists\exists
\renewcommand{\exists}[1][]{{\oldexists_{#1}}}
\newcommand{\ML}{\mathcal{ML}}
\newcommand{\MLu}{\mathcal{ML}_u}
\newcommand{\bfL}{\mathbf{L}}
\newcommand{\K}{\mathbf{K}}
\newcommand{\Kc}{\mathbf{K^\ast}}
\newcommand{\Kfour}{\mathbf{K4}}
\newcommand{\Kfourt}{\mathbf{K4.3}}
\newcommand{\Sfive}{\mathbf{S5}}
\newcommand{\Diff}{\mathbf{Diff}}
\newcommand{\Ku}{\K_u}
\newcommand{\ex}{\mathit{exp}}
\newcommand{\dc}{\mathit{dec}}
\newcommand{\QML}[1][]{\mathcal{Q^{\#}\!ML}_{#1}}
\newcommand{\QMLcf}{\mathcal{QML}}
\newcommand{\QMLun}{\QML[un]}
\newcommand{\QMLone}[1][]{\QML[#1]^{1}}
\newcommand{\QMLonecf}{\QMLcf^{1}}
\newcommand{\QMLoneun}{\QMLone[un]}
\newcommand{\QMLn}[2]{\mathcal{Q^{\#}\!ML}^{#1}_{#2}}
\newcommand{\va}{\mathfrak{h}} 
\newcommand{\quant}[1]{\mathbf{Q}^{\#}\!{#1}}
\newcommand{\QL}{\quant{\bfL}}
\newcommand{\QK}{\quant{\K}}
\newcommand{\QKexp}{\quant{\K}^\ex}
\newcommand{\QKdec}{\quant{\K}^\dc}
\newcommand{\QLexp}{\quant{\bfL}^\ex}
\newcommand{\QLdec}{\quant{\bfL}^\dc}
\newcommand{\QKc}{\quant{\Kc}}
\newcommand{\QKfour}{\quant{\Kfour}}
\newcommand{\QKfourt}{\quant{\Kfourt}}
\newcommand{\Q}{\mathfrak{Q}}
\newcommand{\q}{\textbf{\textit{q}}}
\newcommand{\R}{\mathfrak{R}}
\newcommand{\subR}{\mathfrak{S}}
\newcommand{\tp}[2]{\mathrm{tp}^\M_{#1}[#2]}
\newcommand{\run}[1]{\widehat{#1}}
\newcommand{\Trans}[2]{\mathsf{Trans}({#2})}
\newcommand{\Sym}[1]{\mathsf{Sym}({#1})}
\newcommand{\id}{\mathsf{id}}
\newcommand{\G}{\mathfrak{G}}
\newcommand{\M}{\mathfrak{M}}
\newcommand{\F}{\mathfrak{F}}
\newcommand{\V}{\mathfrak{V}}
\newcommand{\I}{\mathfrak{I}}
\renewcommand{\phi}{\varphi}
\newcommand{\ie}{\emph{i.e.}}
\newcommand{\fmp}{fmp}
\renewcommand{\iff}[1][]{\stackrel{{\normalfont\mbox{\scriptsize #1}}}{\Longleftrightarrow}}
\newcommand{\dom}{\mathfrak{d}}
\newcommand{\Pow}[1]{{2^{#1}}}
\newcommand{\size}[1]{\left\Vert{#1}\right\Vert}
\newcommand{\card}[1]{{\left|{#1}\right|}}
\newcommand{\sub}[1]{\mathsf{sub}(#1)}
\newcommand{\md}[1]{\mathsf{md}({#1})}
\newcommand{\cpt}[1]{\mathsf{cap}(#1)}
\newcommand{\etimes}{\times^{\ex}}
\newcommand{\dtimes}{\times^{\dc}}
\newcommand{\PSpace}{\textsc{PSpace}}
\newcommand{\ExpTime}[1][]{\textrm{#1}\textsc{ExpTime}}
\newcommand{\NExpTime}[1][]{\textsc{N}\textrm{#1}\textsc{ExpTime}}
\newcommand{\lefttoright}{\item[($\Rightarrow$)]}
\newcommand{\righttoleft}{\item[($\Leftarrow$)]}
\newcommand{\QKworld}[1][]{\textsf{QK-WORLD}_{#1}}
\newcommand{\Kworld}{\textsf{K-WORLD}}
\newcommand{\nb}[1]{\textcolor{red}{$\blacksquare$}\todo{\footnotesize #1}}
\newenvironment{draftproof}
	{\color{black!40}\killcontents}
	{\color{black}\endkillcontents}
\begin{document}
\title{Decidable fragments of first-order modal logics with counting quantifiers over varying domains}
\author{Christopher Hampson\\ \small \vspace{-5pt}Department of Informatics, King's College London\\ \vspace{-10pt}}

\maketitle
\begin{abstract}
This paper explores the computational complexity of various natural one-variable fragments of first-order modal logics with the addition of counting quantifiers, over both constant and varying domains. 
The addition of counting quantifiers provides us a rich language with which to succinctly express statements about the quantity of objects satisfying a given first-order property, using a single variable. 
Optimal \NExpTime\ upper-bounds are provided for the satisfiability problems of the one-variable fragment of the minimal first-order modal logic $\mathbf{QK}$, over both constant and expanding/decreasing domain models, where counting quantifiers are encoded as binary strings. 
For the case where the counting quantifiers are encoded as unary strings, or are restricted to a finite set of quantifiers, 
it is shown that the satisfiability problem over expanding domains is \PSpace-complete, whereas over decreasing domains the problem is shown to be \ExpTime-hard.

%

\end{abstract}

\section{Introduction}
\label{sec:intro}

\newcommand{\cell}[2][c]{\begin{tabular}{#1}#2\end{tabular}}

Following the negative results of Church and Turing on the \emph{Entschiedungsproblem}, there was a surge of interest in the 1920--30s in establishing where lay the boundary between decidable and undecidable fragments of classical first-order logic (see the monograph~\cite{Boerger2001}, and references therein). First-order \emph{modal} logics, by contrast, have enjoyed far less attention, despite their numerous applications in temporal query languages~\cite{Chomicki1994,Chomicki1995} and in modelling of distributed or multi-agent systems~\cite{Fagin2004,Belardinelli2009}.

This lack of attention is arguably not a result of disinterest, but rather a result of the relatively weak expressive power required to obtain undecidability results. 
For example, while the two-variable fragment of classical first-order logic is decidable~\cite{Gradel1997}, even the two-variable, monadic fragment
of many first-order modal logics is already undecidable~\cite{Kripke1962}. 
There is, therefore, a practical demand for finding expressive fragments of first-order modal logics that retain their decidability, and establishing where the boundary between decidability and undecidability lies.


%
One approach is to consider the one-variable fragment, which is typically decidable whenever the underlying modal logic is decidable~\cite{missing}. However, these are often too weakly expressive for most practical uses. 
More expressive, yet still often decidable, are the \emph{monodic} fragments, in which the modalities \emph{de re} (about the thing) are restricted to formulas containing at most one free variable, with no restrictions are placed on modalities \emph{de dicto} (about the statement)~\cite{HodkinsonEtal2000,Wolter2001}.

Another approach is to expand the language with countably many \emph{counting quantifiers} $\exists[\leq m]$ able to succinctly express statements about the quantity of objects satisfying a given property, without requiring many auxiliary variables to address each such object independently. 
Our choice of how to encode the numerical bound $m$ associated with our counting quantifiers may potentially affect the computational complexity of our satisfiability problem, depending on whether we choose a succinct \emph{binary} encoding for $m$ or a more verbose \emph{unary} encoding. For the purposes of upper bounds, it is the binary encoding that is the more stringent.
Counting quantifiers, in the form of cardinality restrictions on concepts, also play a central role in several expressive fragments of description logics~\cite{BaaderEtal1996,Baader2003}.

It is well-known that counting quantifiers can be safely added to the two-variable fragment of classical first-order logic without increasing the complexity of its satisfiability problem. Moreover, it makes no difference, whether our counting quantifiers are encoded as unary strings~\cite{Pacholski1997} or as binary strings~\cite{PrattHartmann2005}. 
They, therefore, suggest an attractive direction in the quest to gain greater expressive power from finite variable fragments of first-order modal logics, without jeopardizing their decidability. Some examples of first-order formulas with counting quantifiers include:
\begin{itemize}
	\item \emph{``At most four components are believed to be faulty''}:\\ $\exists[\leq 4]x\big(\mathsf{Component}(x) \land \B\mathsf{Faulty}(x)\big)$
	\item \emph{``It is possible that there is life on more than one planet''}:\\ $\D\exists[\geq 2]x\big(\mathsf{Planet}(x) \land \mathsf{hasLife}(x)\big)$, 
	\item Generalised Barcan formula:\\ $\exists[\leq c]x \D P(x) \to \D\exists[\leq c]x\, P(x)$, for $c<\omega$.
\end{itemize}
Unfortunately, the the addition of counting quantifiers is not always so amicable in the case of first-order modal logics, where even the most modest admission of counting quantifiers can result in a jump from decidability to the undecidable~\cite{Hampson2012}, or even the highly undecidable~\cite{Hampson2015}.

Turning our attention towards more practical first-order modal logics, this paper explores several decidable one-variable fragments over both constant and varying domains that are permitted to either expand or contract relative to the modal accessibility relation. Furthermore, we shall see that over varying domains, the choice of encoding for the numerical bounds appearing in the counting quantifiers can have a dramatic effect on the overall complexity. As too can placing any finite bound on subscripts appearing in the quantifiers, regardless of their encoding. Table~\ref{tab:results}, below, summarizes the results contained herein.

\begin{table}[ht!]\footnotesize
\centerline{
\begin{tabular}{|c||c|c|c|}
\hline & \cell{Constant\\ domains} & \cell{Expanding\\ domains} & \cell{Decreasing\\ domains}\\
\hline\hline
	\cell{Binary\\ Quantifiers\\ (Unbounded)} 
			& \cell{\NExpTime-complete\\[2pt] Theorem~\ref{thm:QK-unbounded}}
			& \cell{\NExpTime-complete\\[2pt] Theorem~\ref{thm:QKexp-unbounded}}
			& \cell{\NExpTime-complete\\[2pt] Theorem~\ref{thm:QKdec-unbounded}}
			\\
\hline
	\cell{Binary\\ Quantifiers\\ (Bounded)}
			& \cell{\NExpTime-complete\\[2pt] Theorem~\ref{thm:QK-unbounded}}
			& \cell{\PSpace-complete\\[2pt] Theorem~\ref{thm:QKexp-bounded}}
			& \cell{\ExpTime-hard,\\ in \NExpTime\quad  \fbox{?}\\[2pt] Theorem~\ref{thm:QKdec-bounded}} 
			\\
\hline
	\cell{Unary\\ Quantifiers}
			& \cell{\NExpTime-complete\\[2pt] Theorem~\ref{thm:QK-unbounded}}
			& \cell{\PSpace-complete\\[2pt] Theorem~\ref{thm:QKexp-bounded}}
			& \cell{\ExpTime-hard,\\ in \NExpTime\quad  \fbox{?}\\[2pt] Theorem~\ref{thm:QKdec-bounded}} 
			\\
\hline
\end{tabular}}

\caption{Summary of results contained herein}
\label{tab:results}
\end{table}

\subsection{Outline of paper}

In Section~\ref{sec:definitions}, we introduce the the one-variable fragment of first-order logic with counting quantifiers and establish the definitions for the fragments of first-order modal logics with which we shall be working.  
In Section~\ref{sec:constant} we consider first-order logics over over constant domains and show that the satisfiability problem for the one variable fragment of $\QK$ where our quantifier subscripts are encoded as binary strings, is \NExpTime-complete. 
This results is extended, in Section~\ref{sec:unbounded}, to logics over expanding and decreasing domains.

In Section~\ref{sec:expanding}, we investigate the affect that the choice of encoding of the counting quantifiers has on the complexity, over expanding domains, while Section~\ref{sec:decreasing} presents partial results over decreasing domains. 
%
%
In Section~\ref{sec:bimodal} we explore the connections between the aforementioned first-order modal logics with counting quantifiers and two-dimensional products of propositional modal logics related to von~Wright's logic of `elsewhere'.
We conclude with a discussion of some open problems in Section~\ref{sec:discussion}.

\section{Preliminaries}
\label{sec:definitions}
\newcommand{\bbN}{\mathbb{N}}
\subsection{First-order Modal Logics with Counting Quantifiers}

The reader is assumed to be familiar with the basics of propositional and first-order modal logics, such as can be found in Fitting and Mendelsohn~\cite{FittingMendelsohn1998}, for example. 
In what follows, we shall consider the first-order modal language comprising a countably infinite set of predicate symbols $\pred=\{P_0,P_1,\dots\}$, each with an associated arity, together with a set of first-order variables $\var=\{x_0,x_1,\dots\}$. We will write $\QMLcf$ to denote the set of all traditional (counting-free) first-order modal formulas with a sole quantifier $\exists$, and distinguish this from the set $\QML$ of all first-order modal formulas with \emph{counting quantifiers} defined by the following grammar:
\begin{equation*}
	\phi\ ::=\ P_i(x_{i_1},\dots, x_{i_n}) \ \mid \ \neg \phi \ \mid \ (\phi_1\land \phi_2) \ \mid \ \D\phi \ \mid \ \exists[\leq c]x\ \phi
\end{equation*}
where $P_i\in \pred$ is an $n$-ary predicate symbol, $x,x_{i_1},\dots, x_{i_n}\in \var$ are first-order variables, and $c\in \bbN$ is encoded as a \emph{binary string}. For the language where counting quantifier subscripts are encoded as \emph{unary} strings, we denote $\QMLun$. 
The other Boolean connectives are defined in the usual way, with the addition of the usual dual modal operator $\B \phi := \neg \D \neg \phi$. 
We will identify the counting free fragment $\QMLcf$ to be a sublogic of $\QML$ with the abbreviations $\exists x \phi:=\neg \exists[\leq 0] x \phi$ and $\forall x \phi := \neg\exists x \neg \phi = \exists[\leq 0] x \neg\phi$.

%

Other counting quantifiers $\exists[\geq c] x \phi$ and $\exists[=c] x \phi$ can similarly be expressed in terms of $\exists[\leq c] x \phi$. It should be noted that the `obvious' definition $\exists[=c]x \phi :=\exists[\leq c]x \phi \land \exists[\geq c]x \phi$ leads to an exponential increase in the size of the formula over the succinct abbreviation. This, however, can be avoided by rewriting $\exists[=c] \phi := \exists[\leq c]x Q(x) \land \exists[\geq c]x Q(x) \land \forall x (Q(x) \liff \phi)$, for some fresh monadic predicate symbol $Q\in \pred$.

We distinguish between the fragments $\QML$ having \emph{unbounded} counting quantifiers and the fragments $\QMLn{}{k}$ whose formulas that do not contain quantifiers with subscripts larger than $k<\omega$. 
For each $\ell<\omega$, let $\QMLn{\ell}{}$ denote the $\ell$-variable fragment comprising only those formulas containing the variables $x_1, \dots, x_\ell$,
and denote by $\QMLn{\ell}{k}=\QMLn{\ell}{}\cap \QMLn{}{k}$ the $\ell$-variable fragment with quantifiers subscripts not exceeding $k$.\\

We define $\sub{\phi}\subseteq \QML$ to be the set of all \emph{subformulas} of $\phi$, 
$\md{\phi}<\omega$ to be the \emph{modal depth} of $\phi$, taken to be the maximum nesting depth of modal operators, 
and $\cpt{\phi}<\omega$ to be the \emph{capacity} of $\phi$, taken to be the value of the largest quantifier subscript occurring in~$\phi$.

\subsection{Semantics}

Formulas of $\QML$ are interpreted in \emph{first-order Kripke models} of the form $\M=(\F,D,\dom,\I)$, where $\F=(W,R)$ is a unimodal \emph{Kripke frame} comprising a set of possible wolds $W$ and a binary accessibility relation $R\subseteq W\times W$ on $W$, $D$ is a non-empty set of \emph{domain objects} from which the domain function $\dom:W\to (\Pow{D}-\{\emptyset\})$ selects a non-empty subset $\dom(w)\subseteq D$, for each $w\in W$. Finally, $\I:W\times \pred \to \bigcup_{n<\omega} D^n$ is a function associating each $w\in W$ and and each $n$-ary predicate symbol $P_i\in \pred$ with a $n$-ary relation $\I(w,P_i)\subseteq \dom(w)^n$ on $\dom(w)$. 

In general, there may be no proscription on the behaviour of the domain function $\dom$, in what we call \emph{varying domain models}. However, of most interest to us are those cases where the domain of interpretation is constant or is permitted to either only expand or only contract relative to the direction of the underlying Kripke frame. 
We say that the model is:
\begin{itemize}
	\item[--] a \emph{constant domain model} if $\dom(u)=\dom(v)$ for all $u,v\in W$,
	\item[--] an \emph{expanding domain model} if $\dom(u)\subseteq \dom(v)$, whenever $u R v$, and 
	\item[--] a \emph{decreasing domain model} if $\dom(u)\supseteq \dom(v)$, whenever $u R v$.
\end{itemize}

\smallskip
Given a model $\M=(\F,D,\dom,\I)$ and a variable assignment $\va:\var\to D$, mapping variables to domain objects, we define satisfiability in $\M$ in the standard way by taking:
\begin{equation*}
	\begin{array}{lcl}
	\M,w\models^\va P_i(x_1,\dots, x_n) &\ \iff \ &\big(\va(x_1),\dots, \va(x_n)\big)\in \I(w,P_i),\\[5pt]
	\M,w\models^\va \neg \phi & \iff & \M,w\not \models^\va \phi,\\[5pt]
	\M,w\models^\va (\phi_1\land \phi_2) & \iff & \M,w\models^\va \phi_1 \mbox{ and } \M,w\models^\va \phi_2,\\[5pt]
	\M,w\models^\va \D\phi & \iff & w R v \mbox{ and } \M,v\models^\va\phi,\ \mbox{for some $v\in W$},
	\end{array}
\end{equation*}
for all $w\in W$, where $P_i\in \pred$ is an $n$-ary predicate symbol. Counting quantifiers are interpreted, by taking:
\begin{equation*}
	\M,w\models^\va (\exists[\leq c] x \ \phi) \quad  \iff \quad \card{\{a\in D\ :\ \M,w\models^{\va(x/a)} \phi\}}\ \leq \ c,
\end{equation*}
for $c\in \bbN$, where $\card{X}$ denotes the cardinality of $X$, and $\va(x/a):\var \to D$ is the variable assignment that agrees with $\va$ on all variables except $x$, for which it assigns the value $a\in D$. 
In the case where $\phi$ contains only a single variable $x\in \var$, it will be convenient to write $\M,w\models^a \phi$ in place of $\M,w\models^\va \phi$, where $a=\va(x)$ is the assignment of $x$. 

%

%

We say that a formula $\phi$ is valid in $\M$ if $\M,w\not\models^\va \neg \phi$, for all $w\in W$ under any variable assignment $\va$. 
For each propositional modal logic $\bfL$, there is an associated first-order modal logic $\QL$ (resp. $\QLexp$, $\QLdec$) taken to be the set of all first-order modal formulas that are valid in every constant (resp. expanding, decreasing) domain model $\M=(\F,D,\dom,\I)$ whose underlying Kripke frame $\F$ is a frame for $\bfL$. 
We say that $\phi$ is satisfiable with respect to $\QL$ (resp. $\QLexp$, $\QLdec$) if $\phi$ is satisfiable in some constant (resp. expanding, decreasing) domain model whose underlying Kripke frame is a frame for $\bfL$. 

The fragment $\QL\cap \mathcal{K}$ is said to have the poly-size (resp. exponential-sized) domain property if every formula $\phi\in \mathcal{K}$ that is satisfiable with respect to $\QL$ can be satisfied in a model in which $D$ is at most polynomial (resp. exponential) in the size of $\phi$. 
Similarly, we say that $\QL\cap \mathcal{K}$ poly-size (resp. exponential-sized) model property if every formula $\phi\in \mathcal{K}$ that is satisfiable with respect to $\QL$ can be satisfied in a model in which both $\F$ and $D$ are at most polynomial (resp. exponential) in the size of $\phi$.

\newcommand{\comment}[1]{}
\comment{
\begin{definition}[Finite domain/model properties]
	Given\nb{revise} a monotonic function $f:\omega\to\omega$, 
	we say that a fragment $\mathcal{K}\subseteq \QML$ has the \emph{$f$-size domain property} with respect to a first-order modal logic $\QL$ if every formula $\phi$ that is satisfiable with respect to $\QL$ can be satisfied in a model $\M=(\F,D,\dom,\I)$ for $L$, with $\F=(W,R)$, where $\card{\dom(w)}\leq f(\size{\phi})$, for all $w\in W$. 
Furthermore, we say that $\mathcal{K}$ has that \emph{$f$-size finite model property (fmp)} with respect to $L$ if $\card{W}\leq f(\size{\phi})$.
	Furthermore, we say that $L$ has the \emph{$f$-size finite model property (fmp)} if we also have that $\card{W}\leq f(\size{\phi})$. 

We say that $\mathcal{K}$ has the \emph{poly-size domain property} (resp.\emph{poly-size fmp}) with respect to $L$ if $f(n)\in O(n^k)$ can be chosen to be a polynomial function, and the \emph{exponential domain property} (resp. \emph{exponential fmp}) with respect to $L$ if $f(n)\in O(2^{n^k})$ can be chosen to be an exponential function, for some constant~$k$	
\end{definition}}

Throughout the remainder of this paper, we will be chiefly concerned with various fragments of the logics $\QK$, $\QKexp$ and $\QKdec$, characterized by the class of all frames. However, owing to a standard `bulldozing' argument (see~\cite{BdRV01}), we may assume without any loss of generality that if $\phi$ is satisfiable with respect to $\QK$ (resp. $\QKexp$, $\QKdec$), then it must be satisfiable in a model whose underlying Kripke frame is an irreflexive, intransitive tree of depth $\leq \md{\phi}$.

\section{Logics with unbounded quantifiers over constant domains}
\label{sec:constant}

We note, first, that the satisfiability problem for the counting-free fragment of $\QK$ in one-variable is \NExpTime-complete~\cite{Marx1999}, while that of the two-variable monadic fragment is already undecidable~\cite{Kripke1962,GabbayShehtman1993}. 
The effect of adding even the most modest counting quantifiers $\exists[\leq c] x$, for $c\in \{0,1\}$, to the decidable one-variable fragment of $\QKfourt$, whose models are based on \emph{linear} frames, is known to result in undecidability~\cite{Hampson2015}\footnote{Many of these results are presented in the context of two-dimensional propositional modal logics, the connection with which is explored in Section~\ref{sec:bimodal}.}.

In marked contrast to these negative results, in this section we show that endowing the one-variable fragment $\QK\cap \QMLonecf$ with infinitely many counting quantifiers $\exists[\leq m]$, for each $m\in \bbN$, does not result in any increase in the computational complexity of its satisfiability problem over its counting-free counterpart; indeed the problem remains \NExpTime-complete. This result was first proved in~\cite{Hampson2016}.

\begin{theorem}
\label{thm:QK-unbounded}
	The satisfiability problem for the one-variable fragment $\QK\cap\QMLone$ with unbounded counting quantifiers is \NExpTime-complete.
\end{theorem}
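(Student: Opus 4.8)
The plan is to establish the two matching bounds separately. The lower bound ($\NExpTime$-hardness) is inherited: the counting-free one-variable fragment $\QK\cap\QMLonecf$ is already $\NExpTime$-hard by Marx~\cite{Marx1999}, and since $\QMLonecf$ embeds into $\QMLone$ via the abbreviations $\exists x\,\phi := \neg\exists[\leq 0]x\,\phi$, hardness transfers immediately. So the whole content is in the $\NExpTime$ upper bound, and for that the natural route is to prove an \emph{exponential-sized model property}: show that every satisfiable $\phi\in\QK\cap\QMLone$ has a model whose underlying (irreflexive, intransitive) tree $\F$ has at most exponentially many worlds in $\size{\phi}$ and whose domain $D$ has at most exponentially many objects. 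Given such a property, a nondeterministic procedure guesses the model (of exponential size, hence writable in exponential time), guesses the truth values, and verifies the satisfaction clauses — including the counting clauses, which amount to comparing a cardinality bounded by $\card{D}$ against the subscript $c$, a comparison of binary-encoded numbers of polynomial length — all in exponential time. The one subtlety to flag: since $c$ is binary-encoded it may be astronomically larger than $\card{D}$, so whenever $c\geq\card{D}$ the quantifier $\exists[\leq c]x\,\psi$ is trivially true and must be handled by a size comparison rather than by literal counting.

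**The heart of the argument is the model-building step.** Working on an irreflexive intransitive tree of depth $\leq\md{\phi}$, I would use a type-based/selection argument. At each world $w$ the relevant information about an object $a$ is its \emph{one-variable type} $\tp{w}{a}$ — the set of subformulas $\psi(x)\in\sub{\phi}$ with $\M,w\models^a\psi$ — and there are at most $2^{\card{\sub{\phi}}}$ such types, i.e.\ exponentially many. The key observation is that a counting quantifier $\exists[\leq c]x\,\psi$ only cares about \emph{how many} objects realize a type containing $\psi$, and only up to the threshold $\min(c,\,\text{something polynomial})$: if more than $\cpt{\phi}+1$ objects realize a given type, then for \emph{every} counting quantifier in $\phi$ either all of them count (if $c$ is large) or the count already exceeds $c$ regardless — so the exact multiplicity beyond $\cpt{\phi}+1$ is irrelevant. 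Hence one can "thin out" the domain: keep, for each type-pattern that is realized, at most $\cpt{\phi}+1$ witnessing objects. Since a global object is (on a tree of bounded depth) described by the tuple of its types along the branches, the number of distinct such descriptions is at most exponential, and keeping $\cpt{\phi}+1 = O(\size{\phi})$ copies of each still leaves an exponential-sized domain. One must check this thinning preserves satisfaction of $\phi$ at the designated world by induction on subformula structure, the only nontrivial clause being the counting one, which holds by the "$\leq\cpt{\phi}+1$ suffices" principle just described.

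**For bounding the number of worlds**, I would run the standard filtration/selection on trees: at the root, $\phi$ fixes finitely many $\D$-subformulas that demand successors; recursively, each world needs only one successor per $\D\psi$ it must satisfy, so the branching is bounded by $\card{\sub{\phi}}$ and the depth by $\md{\phi}$, giving $\size{\phi}^{O(\md{\phi})}$ worlds — exponential in $\size{\phi}$. The two selections (domain-thinning and world-pruning) need to be performed compatibly: prune worlds first to get an exponential tree, then thin the domain world-by-world, or interleave them; either way the bookkeeping is routine once the counting principle is in hand.

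**The main obstacle** I anticipate is getting the domain-thinning to interact correctly with \emph{varying truth of} $\psi$ \emph{across worlds}: an object's type is world-dependent, so when I delete an object I am simultaneously affecting counts at every world of the (exponential) tree, and I must ensure no counting constraint anywhere is violated. The clean way around this is to classify objects by their \emph{global profile} (the function $w\mapsto\tp{w}{a}$) rather than by a single-world type, observe there are only exponentially many realized profiles, and keep $\cpt{\phi}+1$ representatives of each realized profile; then at every world simultaneously, each local type either retains all its realizers or retains at least $\cpt{\phi}+1$, which is exactly what the counting principle needs. Verifying that this preserves $\phi$ — particularly that deleting objects never \emph{creates} a new violation of an upper-bound quantifier (it cannot: deletion only decreases counts) and never destroys a needed lower bound $\neg\exists[\leq c]$ (guaranteed because we kept $>c$ whenever $>c$ were originally present, using $\cpt{\phi}+1$) — is the crux of the correctness proof.
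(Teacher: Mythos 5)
Your high-level plan (hardness inherited from the counting-free fragment via Marx, plus an exponential-size model property checked by guess-and-verify) and your counting principle (multiplicities only matter up to $\cpt{\phi}+1$, so that many representatives per class suffice) do coincide with the paper, which formalises the latter as the multiplicity function capped at $C+1$ in its quasistates. The genuine gap is precisely in the step you dismiss as ``routine bookkeeping'': making the world-pruning and the domain-thinning compatible. Neither ordering you offer works as stated. If you prune worlds first, each kept world needs one $\D$-witness per \emph{pair} (object, $\D\psi$) --- two objects with the same local type may require different witness worlds, since their types can diverge at successors --- so with an a priori unbounded (possibly infinite) domain the branching is unbounded; your claim that branching is bounded by $\card{\sub{\phi}}$ is only valid in the propositional setting. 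If you thin the domain first via global profiles $w\mapsto\tp{w}{a}$, the number of realized profiles is controlled by the size of $W$, which has not yet been bounded, so ``only exponentially many realized profiles'' is unjustified. And the obvious interleaving need not terminate within exponential bounds: every object added to repair counts at a newly selected world exists (constant domains) at \emph{all} kept worlds, hence needs $\D$-witnesses everywhere, spawning new worlds at all depths, which in turn need new objects; these repair rounds are not tied to $\md{\phi}$ and the sizes compound round after round.

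Breaking this circularity is exactly the content of the paper's Lemma~\ref{lem:pruning}, and it needs an idea your proposal does not supply. In Step~1 the paper selects $\D$-witnesses only for one distinguished run $s_{(w,t)}$ per (world, type) pair, which keeps the tree exponential but deliberately destroys the saturation condition \ref{qm:saturation} for all other runs; Step~2 then repairs saturation \emph{without adding any objects}, by taking exponentially many copies of each selected world indexed by transpositions of runs, so that a run needing a witness is re-routed through the distinguished run of its type in the appropriate copy, while the number of runs through each type --- and hence every count --- is invariant under these permutations. Without this device (or some substitute achieving the same simultaneous bound on worlds and domain), the exponential \fmp, and with it the \NExpTime\ upper bound, is not established.
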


We show that for a formula $\phi$ to be satisfiable with respect to $\QK\cap \QMLone$, it must be satisfiable in a model that is at most exponential in the size of $\phi$; \ie the one-variable fragment $\QK\cap\QMLone$ has the exponential \fmp. 
This provides us with an effective mechanism by which to check the satisfiability of any formula $\phi\in \QMLone$ --- nondeterministically select a `small' model whose size is at most exponential in $\phi$, and check whether it satisfies $\phi$. Since model-checking can be performed in polynomial time in the size of the model and size of the formula, we obtain an \NExpTime\ upper-bound on the complexity of the satisfiability problem for $\QK\cap \QMLone$.

\begin{theorem}\label{thm:QKfmp}
	The one-variable fragment $\QK\cap \QMLone$ with unbounded quantifiers has the exponential \fmp.
\end{theorem}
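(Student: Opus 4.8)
The goal is to prove that any formula $\phi\in\QMLone$ satisfiable with respect to $\QK$ is satisfiable in a model of size exponential in $\size{\phi}$. Since we may assume (by the bulldozing remark in the preliminaries) that $\phi$ is satisfied at the root of an irreflexive intransitive tree $\M=(\F,D,\dom,\I)$ of depth $\le\md{\phi}$, the task splits into two independent "pruning" problems: bounding the branching of the tree $\F$, and bounding the size of the domain $D$. The key semantic notion driving both is the \emph{one-variable type}: for $a\in D$ and $w\in W$, let $\tau_w(a) = \{\psi\in\sub{\phi} : \M,w\models^a\psi\}$, and for the world itself let $\sigma(w)$ record which subformulas of the form $\D\psi$ and $\exists[\leq c]x\,\psi$ hold. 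Two worlds (or two domain objects at a world) that agree on all relevant types should be indistinguishable by $\phi$, and there are only exponentially many types since $\card{\sub{\phi}}\le\size{\phi}$.

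\textbf{Bounding the domain.} First I would fix the tree $\F$ and shrink $D$. The subtlety compared with the counting-free case is that counting quantifiers $\exists[\leq c]x\,\psi$ are sensitive to \emph{how many} objects realise a given type, not merely to whether the type is realised. So the selection cannot simply keep one witness per type: it must keep $\min(n_\theta,\,\cpt{\phi}+1)$ witnesses of each type $\theta$ realised in the \emph{whole model}, where $n_\theta$ is the number of objects of that type. Since $\cpt{\phi}+1$ is at most exponential in $\size{\phi}$ (binary encoding) and there are at most $2^{\size{\phi}}$ types, this leaves a domain $D'$ of exponential size. One then defines $\M'$ by restricting $\dom$ and $\I$ to $D'$ and checks, by induction on subformula structure, that truth of every $\psi\in\sub{\phi}$ is preserved at every world for every surviving object; the only nontrivial inductive case is the counting quantifier, where preservation follows because we kept enough (or all) witnesses of each type, so the cardinality comparison against $c\le\cpt{\phi}$ comes out the same. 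One must check this truth-preservation is compatible with the domain condition $\dom'(w)\ne\emptyset$ and, if one wishes, with whatever frame restrictions apply — here $\K$ imposes none, so there is nothing extra to verify. (For the expanding/decreasing cases later one would need the selection to respect inclusions, but that is not needed for Theorem~\ref{thm:QKfmp}.)

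\textbf{Bounding the frame.} Next, with $D$ now of exponential size, I would cut down the branching of the tree level by level, working bottom-up or via a standard selective-filtration/selection argument. At a world $w$, for each subformula $\D\psi\in\sub{\phi}$ that holds at $w$ and each object $a\in\dom(w)$, mark one $R$-successor of $w$ witnessing it; also, since $\M,w\models^a\neg\D\psi$ must be preserved, no extra successors can falsify this, and discarding successors only removes diamonds, so $\neg\D\psi$ is safe. The number of successors retained at $w$ is therefore at most $\card{\sub{\phi}}\cdot\card{\dom(w)}$, which is exponential in $\size{\phi}$; with depth $\le\md{\phi}\le\size{\phi}$, the total number of worlds is bounded by $(\text{exp})^{\size{\phi}}$, still $2^{\mathrm{poly}(\size{\phi})}$, i.e. exponential. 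A routine induction on the co-depth of a world shows truth of all subformulas is preserved for the surviving worlds and objects.

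\textbf{Main obstacle.} The genuinely new point, and the one I would be most careful about, is the interaction of the two prunings with the counting quantifiers: the domain-pruning step must be done \emph{globally} (keeping enough representatives of each type as realised anywhere in the model, not just at one world), because a counting quantifier evaluated at a world $w$ ranges over all of $D$, not over $\dom(w)$, and because an object's type can differ from world to world. One has to be sure that after fixing $D'$ first and \emph{then} pruning the frame, the types of the surviving objects do not silently change in a way that breaks a counting quantifier — which is why I would prune the domain first (frame-pruning can only delete diamonds, and one checks it does not alter, for surviving objects, the truth value of any subformula, hence not the relevant type counts either). Making this ordering and the "keep $\min(n_\theta,\cpt{\phi}+1)$ witnesses" bookkeeping precise, and verifying the counting-quantifier case of the truth-preservation induction in both steps, is the technical heart of the argument; everything else is standard filtration.
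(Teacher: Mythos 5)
There is a genuine gap, and it sits exactly at the step you single out as the one to be careful about: the order of the two prunings. Your plan shrinks the domain \emph{first}, over the original (unpruned, possibly infinite) tree, and claims an exponential-size $D'\subseteq D$ can be chosen so that the truth of every $\psi\in\sub{\phi}$ is preserved at \emph{every} world for every surviving object. This is impossible in general, because the per-type selection has to be coordinated across all worlds simultaneously while an object's type changes from world to world. Concretely, take $\phi=\B\bigl(\neg\exists[\leq 0]x\,P(x)\land\exists[\leq 1]x\,P(x)\bigr)$, satisfied at the root $r$ of a tree whose successors are $w_0,w_1,\dots$, with constant domain $D=\{a_i : i<\omega\}$ and $\I(w_i,P)=\{a_i\}$. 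Any $D'$ for which truth of all subformulas is preserved at all worlds must contain every $a_i$ (if $a_i\notin D'$ then no element of $D'$ satisfies $P$ at $w_i$, so $\neg\exists[\leq 0]x\,P(x)$ fails there and the box fails at $r$), so no small global selection exists --- on either reading of ``keep $\min(n_\theta,\cpt{\phi}+1)$ witnesses of each type realised in the whole model'' (per type globally it is unsound, as above; per world--type pair the bound is $\card{W}\cdot 2^n\cdot(\cpt{\phi}+1)$, which is not exponential before the frame is pruned). Reversing the order does not rescue the argument, because your bound on the branching, $\card{\sub{\phi}}\cdot\card{\dom(w)}$, presupposes that the domain has already been made small: keeping one successor per (object, diamond) pair is unbounded when $D$ is, while keeping one successor per (type, diamond) pair deprives the other objects of the same type of their witnesses. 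Breaking this circularity is precisely what the theorem requires, and your sketch contains no device for it.

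The paper's proof supplies that device. It abstracts the model into a quasimodel (each world carries only a set of types with multiplicities capped at $\cpt{\phi}+1$, objects become runs), prunes worlds and runs by fixing one representative run per (world, type) and keeping diamond-witness worlds only for these representatives --- so the branching is governed by the number of types, not the number of objects --- and then repairs the saturation condition that this destroys for the non-representative runs by taking \emph{multiple copies of each quasistate indexed by transpositions of runs}: when a run needs a $\D\alpha$-witness at a world, it is rerouted through the representative of its type at a suitably permuted copy. That rerouting construction (Step 2 of Lemma~\ref{lem:pruning}) is the technical heart of the result; your proposal would need it, or some equivalent mechanism, before either pruning step can be carried out with the stated bounds. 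The parts of your sketch that do work --- counting types, capping multiplicities at $\cpt{\phi}+1$, and the observation that frame pruning preserves boxes and within-world counting once the domain is bounded --- correspond to the routine portions of the paper's argument.
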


To prove this, we employ a version of the method of \emph{quasimodels}~\cite{Wolter2000,GKWZ03}.
Our quasimodels closely resemble full Kripke models, however, each first-order structure is replaced with a \emph{quasistate}, which can be finitely represented. The basic structure of our quasimodels may still be infinite, so we require additional non-trivial `pruning' techniques to ensure that large quasimodels can be reduced to smaller finite quasimodels without sacrificing satisfiability.
Therein lies the crux of the problem we must~solve.\\

First, let us fix some arbitrary first-order modal formula $\phi\in \QMLone$, and throughout what follows, let $n=\card{\sub{\phi}}$ denote the number of subformulas of~$\phi$, $m=\md{\phi}$ denote the modal depth of $\phi$, and $C=\cpt{\phi}$ denote the value of the largest quantifier subscript occurring in $\phi$. In particular we note that $n,m\leq \size{\phi}$, while $C\leq 2^{\size{\phi}}$, owing to the binary encoding of the quantifier subscripts.

\medskip

\begin{definition}[Types and Quasistates]
We define a \emph{type} for $\phi$ to be any subset $t\subseteq \sub{\phi}$ that is \emph{Boolean-saturated} in the sense that:
\begin{enumerate}\itemindent=1.0em\itemsep=6pt
	\item[{\bf (tp1)}] for all $\neg\psi\in \sub{\phi}$, $\neg\psi \in t$ if and only if $\psi\not\in t$, and
	\item[{\bf (tp2)}] for all $(\psi_1 \land \psi_2) \in \sub{\phi}$, $(\psi_1\land \psi_2) \in t$ if and only if $\psi_1 \in t$ and~\mbox{$\psi_2\in t$}.
\end{enumerate}
A \emph{quasistate} for $\phi$ is defined to be a pair $(T,\mu)$ such that: 
\begin{enumerate}[label={\bf (qs\arabic*)}]\itemindent=1.0em\itemsep=6pt
	\item $T$ is a non-empty set of types for $\phi$,
	\item $\mu:T \to \{1,\dots, C,C+1\}$ is a bounded \emph{`multiplicity'} function,
	\item \label{qs:saturation}\emph{($\exists[\leq c]$-saturation)} For all $t\in T$ and $(\exists[\leq c] x\ \xi)\in \sub{\phi}$,
	\begin{equation*}
		(\exists[\leq c] x\ \xi) \in t \qquad \iff \qquad \sum_{t'\in T(\xi)} \mu(t') \ \leq \ c,
	\end{equation*}
	where $T(\xi) = \{t\in T\ :\ \xi\in t\}$ denotes the set of types belonging to $T$ that contain the formula $\xi\in \QMLone$.
\end{enumerate}
\end{definition}

\smallskip

Note that the size of each quasistate cannot exceed the number of distinct types for $\phi$, which is to say that $\card{T}\leq 2^n$, and $\card{\mu}\leq (C+1)\cdot\card{T}\leq (C+1)2^n$, since $\mu\subseteq T\times \{1,\dots, C+1\}$.
The multiplicity function indicates how many `duplicates' of each type are required in order to transform the quasistate into an appropriate first-order structure. 
Note that $\phi$ is indifferent to any duplicates in excess of the value of its largest quantifier subscript, and does not discern between `large' quantities which are beyond its `vocabulary'.

\begin{definition}[Quasimodels]
A \emph{basic structure}\index{basic structure} for $\phi$ is a triple \mbox{$(W,\prec,\q)$}, where $(W,\prec)$ is an intransitive, irreflexive tree of depth $\leq m$, and $\q$ is a function associating each $w\in W$ with a quasistate $\q(w) = (T_w,\mu_w)$. 
A \emph{run} through $(W,\prec,\q)$ is a function $r:W \to \bigcup_{w\in W}T_w$ associating each $w\in W$ with a type~$r(w)\in T_w$. 
A \emph{quasimodel for $\phi$} is a 5-tuple \mbox{$\Q=(W,\prec,\q,I,\R)$} such that:
\begin{enumerate}[label={\bf (qm\arabic*)}]\itemindent=1.5em\itemsep=6pt
	\item \label{qm:basic} $(W,\prec,\q)$ is a basic structure for $\phi$, $I$ is a non-empty set of \emph{indices}, and \mbox{$\R=\{r_i : i\in I\}$} is an set of runs through $(W,\prec,\q)$ indexed by $I$,
	
	\item \label{qm:witness} There is some $w_0\in W$ and $t_0\in T_{w_0}$ such that $\phi\in t_0$, 
	
	\item \label{qm:coherence}\textit{(coherence)} For all $i\in I$, $w\in W$ and $\D\xi \in \sub{\phi}$,
\begin{equation*}
\exists v\in W; \ w \prec v \mbox{ and } \xi \in r_i(v) \quad \Longrightarrow \quad  \D\xi\in r_i(w),
\end{equation*}

	\item\label{qm:saturation}\textit{(saturation)} For all $i\in I$, $w\in W$ and $\D\xi \in \sub{\phi}$,
\begin{equation*}
 \D\xi\in r_i(w)\quad \Longrightarrow \quad \exists v\in W; \ w \prec v \mbox{ and } \xi \in r_i(v),
\end{equation*}

	\item \label{qm:runs}For all $w\in W$ and $t\in T_w$,
	\begin{equation*}
		\mu_w(t) = \min\big(\card{\{i\in I: r_i(w)=t\}},\ C+1\big).
	\end{equation*}
	
\end{enumerate}
\end{definition}

\medskip

The following lemma establishes that our quasimodels precisely capture the notion of satisfiability with respect to $\QK$, and that every quasimodel for $\phi$ can be effectively transformed into model for $\phi$ of proportional size.

\medskip

\begin{lemma}
\label{lem:qm}
Let $\phi\in \QMLone$ be an arbitrary formula in one-variable. 
Then $\phi$ is satisfiable with respect to $\QK$ iff there is a quasimodel for $\phi$.
\end{lemma}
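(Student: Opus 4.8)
The plan is to prove both directions of the biconditional by explicit construction, translating between first-order Kripke models for $\QK$ and quasimodels for $\phi$, using the tree-model property noted after the definition of the semantics.

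\textbf{($\Rightarrow$) From a model to a quasimodel.}
Suppose $\phi$ is satisfiable with respect to $\QK$. By the bulldozing remark, we may assume $\phi$ is satisfied at the root of a model $\M = (\F, D, \dom, \I)$ where $\F = (W, \prec)$ is an irreflexive, intransitive tree of depth $\leq m = \md{\phi}$. For each $w \in W$ and each $a \in \dom(w)$, define the \emph{type realised by $a$ at $w$} to be $t_w(a) = \{\psi \in \sub{\phi} : \M, w \models^a \psi\}$; conditions \textbf{(tp1)} and \textbf{(tp2)} hold because satisfaction commutes with the Booleans. Let $T_w = \{t_w(a) : a \in \dom(w)\}$, and set $\mu_w(t) = \min(\card{\{a \in \dom(w) : t_w(a) = t\}}, C+1)$. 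The $\exists[\leq c]$-saturation condition \ref{qs:saturation} holds because, for $(\exists[\leq c]x\,\xi) \in \sub{\phi}$, the set $\{a \in D : \M, w \models^a \xi\}$ is exactly $\bigcup_{t \in T_w(\xi)} \{a : t_w(a) = t\}$ (here one uses that $\M,w\models^a\xi$ forces $a\in\dom(w)$, since $\xi$ is built from atomic formulas that can only hold of elements of $\dom(w)$), so its cardinality equals $\sum_{t \in T_w(\xi)} \card{\{a : t_w(a) = t\}}$, and this sum is $\leq c$ iff the truncated sum $\sum_{t \in T_w(\xi)} \mu_w(t)$ is $\leq c$, since $c \leq C$. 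This gives a basic structure $(W, \prec, \q)$. For the runs: take $I = D$ and, for each $a \in D$, define $r_a(w) = t_w(a)$ — this is well-defined because a model for $\QK$ over a tree has no domain constraints forcing $a \notin \dom(w)$, but to be safe one restricts to or pads the domains so that $a \in \dom(w)$ for all $w$ when needed; alternatively, index runs by $\{(a,w) : \ldots\}$ with care. Coherence \ref{qm:coherence} and saturation \ref{qm:saturation} follow directly from the truth clause for $\D$; \ref{qm:witness} holds at the root with $t_0 = t_{w_0}(a)$ for any $a$; and \ref{qm:runs} holds by the choice of $\mu_w$.

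\textbf{($\Leftarrow$) From a quasimodel to a model.}
Suppose $\Q = (W, \prec, \q, I, \R)$ is a quasimodel for $\phi$. Build $\M = (\F, D, \dom, \I)$ with the same tree $\F = (W, \prec)$, domain $D = I$, $\dom(w) = \{i \in I : r_i \text{ is "active" at } w\}$ — in the simplest setup $\dom(w) = I$ for all $w$ (constant domains), which is a frame for $\K$. Interpret predicates by $i \in \I(w, P)$ iff $P(x) \in r_i(w)$, for each monadic $P$ appearing in $\phi$ (the one-variable restriction means only monadic atoms occur, modulo $0$-ary atoms handled separately via the type at each world). The key claim, proved by induction on $\psi \in \sub{\phi}$, is the \emph{Truth Lemma}: for all $w \in W$, $i \in I$, and $\psi \in \sub{\phi}$, $\M, w \models^i \psi$ iff $\psi \in r_i(w)$. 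Atomic and Boolean cases are immediate from the definitions and \textbf{(tp1)}, \textbf{(tp2)}. The $\D\psi$ case uses coherence \ref{qm:coherence} for one direction and saturation \ref{qm:saturation} for the other. The counting case $\exists[\leq c]x\,\xi$ is where \ref{qm:runs} and \ref{qs:saturation} are essential: by the induction hypothesis, $\{i \in I : \M, w \models^i \xi\} = \{i \in I : \xi \in r_i(w)\} = \bigcup_{t \in T_w(\xi)} \{i : r_i(w) = t\}$, whose cardinality is $\sum_{t \in T_w(\xi)} \card{\{i : r_i(w) = t\}}$; this is $\leq c$ iff $\sum_{t \in T_w(\xi)} \mu_w(t) \leq c$ (using $c \leq C$ and \ref{qm:runs} to see the truncation at $C+1$ does not affect the comparison), which by \ref{qs:saturation} holds iff $(\exists[\leq c]x\,\xi) \in r_i(w)$ — note the right-hand side of \ref{qs:saturation} does not depend on $i$, matching the fact that the truth of a sentence-like counting formula at $w$ is assignment-independent for the part ranging over $x$. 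Finally \ref{qm:witness} gives a world $w_0$ and index $i$ with $\phi \in r_i(w_0)$, so $\M, w_0 \models^i \phi$ by the Truth Lemma, hence $\phi$ is satisfiable with respect to $\QK$.

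\textbf{Main obstacle.}
The delicate point — and the one I would write out most carefully — is the counting-quantifier case of the Truth Lemma and its mirror in the $(\Rightarrow)$ direction: one must check that truncating multiplicities at $C+1$ is harmless (since $\phi$ cannot "count past" $C$), that the passage between "number of domain elements of a given type" and "number of run-indices mapping to that type" is exact (this is precisely what \ref{qm:runs} guarantees), and that elements satisfying $\xi$ at $w$ are automatically in $\dom(w)$ so that no objects are missed or double-counted. A secondary subtlety is bookkeeping for varying domains if one wants the same construction to generalise; but for $\QK$ over trees the constant-domain instantiation suffices, and the lemma as stated only concerns $\QK$.
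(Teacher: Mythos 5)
Your proposal is correct and follows essentially the same route as the paper: in the ($\Rightarrow$) direction you extract types $\mathrm{tp}^\M_w[a]$, truncated multiplicities, and runs indexed by domain elements, and in the ($\Leftarrow$) direction you build a constant-domain model with $D=I$ and prove the same Truth Lemma by induction, with the $\D$-case handled by coherence/saturation and the counting case by \ref{qm:runs} and \ref{qs:saturation} together with the observation that truncation at $C+1$ is harmless since $c\leq C$. Your worry about elements outside $\dom(w)$ is moot here, since $\QK$ concerns constant domains, exactly as you note at the end.
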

\begin{proof}
Suppose that $\phi$ is satisfiable with respect to $\QK$. 
Then $\M,w_0\models^{a_0} \phi$ for some first-order Kripke model $\M=(\F,D,\dom,\I)$, where $\F=(W,R)$ is an irreflexive, intransitive tree of depth $\leq m$, with $w_0\in W$.

With each $w\in W$ and $a\in D$, we associate the type 
\begin{equation*}
	\tp{w}{a} = \{\xi \in \sub{\phi}\ :\ \M,w \models^a \xi\},
\end{equation*}
and define a basic structure $(W,R,\q)$, by taking $\q(w)=(T_w,\mu_w)$, for all $w\in W$, where
\begin{eqnarray*}
	T_w & = & \{\tp{w}{a}\ :\ a\in D\},\quad \mbox{and}\\
	\mu_w(t) & = & \min\left(\card{\{a\in D : \tp{w}{a}=t\}},C+1\right),
\end{eqnarray*}
for all $t\in T_w$. 
It is straightforward to check that $\q(w)$ is a quasistate, for each $w\in W$. %
Indeed, suppose that $\tp{w}{a}\in T_w$ and that $(\exists[\leq c] x\ \xi)\in \sub{\phi}$, for some $c\leq C$. Then we have that: 
	{\small 
\begin{align*}
	(\exists[\leq c] x\ \xi)\in \tp{w}{a} \quad & \iff \quad \M,w\models^{a} (\exists[\leq c] x\ \xi) \quad \mbox{by definition},\\
		& \iff \quad  \card{\{b\in D\ :\ \M,w\models^{b} \xi\}}\ \leq\ c,\\
		& \iff \quad  \sum_{t'\in T_w(\xi)} \card{\{b\in D\ :\ \tp{w}{b}=t'\}}\ \leq\ c,\\
		& \iff \quad  \sum_{t'\in T_w(\xi)} \mu_w(t')\ \leq\ c.
	\end{align*}}

The final equivalence follows from the fact that each summand strictly less than $(C+1)$, since $c\leq C$. Hence, it follows from the definition that \mbox{$\mu_w(t) = \card{\{b\in D\ :\ \tp{w}{b}=t'\}}$}, for all $t'\in T_w(\xi)$. 
For each index $a\in D$, we define a run $f_a:W\to \Pow{\sub{\phi}}$ by taking
\begin{equation*}
	f_a(w) = \tp{w}{a},
\end{equation*}
for all $w\in W$. We then take $\R=\{r_a : a\in D\}$ to be the set of all such runs through $(W,R,\q)$, with indices from $D$.
Note that there may be many indices in $D$ that correspond to the same run. 
It is straightforward to check that $(W,R,\q,D,\R)$ is a quasimodel for $\phi$.

\bigskip

Conversely, suppose that $\Q=(W,\prec,\q,I,\R)$ is a quasimodel for $\phi$. We define a first-order Kripke model $\M=(\F,D,\dom,\I)$, by taking $\F:=(W,\prec)$, $D=\dom(w):=I$ for all $w\in W$, and 
\begin{equation*}
	\I(w,P_j) = \big\{i\in I : P_j(x) \in r_i(w)\big\},
\end{equation*}
for all predicate symbols $P_j\in \pred$ and $w\in W$. 
It remains to check that $\M$ is a model for $\phi$. 
We claim that 
\begin{equation}\label{IH:quasimodel}\tag{I.H.}
	\M,w \models^i \psi \qquad \iff \qquad \psi \in r_i(w),
\end{equation}
for all $w\in W$, $i\in I$, and $\psi\in \sub{\phi}$.

The case where $\psi$ is an atomic formula follows immediately from the definitions. So suppose that \eqref{IH:quasimodel} holds for all formulas of size $<k$, and let $\psi$ be a formula of size $k$. The cases where $\psi$ is a Boolean combination of smaller formulas follow from the definition of a type, leaving us with two cases:
\begin{itemize}\small
\begin{draftproof}
	\item[--] {\it Case $\psi = P_j(x)$}: This follows immediately from the definition of $P_j^{I(w)}$, since
	{\small \begin{equation*}
	\M,w\models^{i} P_j(x)\; \iff\; i \in \I(w,P_j) \; \iff \; P_j(x) \in r_i(w).
	\end{equation*}}
	\item[--] {\it Case $\psi = \neg \xi$}: We have that
	{\small \begin{equation*}
	\M,w\models^i \neg \xi\; \iff\; \M,w\not\models^i \xi\; \iff[(I.H.)] \; \xi \not\in r(w)\; \iff[{\bf (tp1)}]\; \neg \xi\in r_i(w).
	\end{equation*}}
	\item[--] {\it Case $\psi = (\xi_1 \land \xi_2)$}: We have that
	{\small \begin{align*}
	\M,w\models^i (\xi_1 \land \xi_2) \quad & \iff \quad \M,w\not\models^i \xi_1\ \mbox{and}\ \M,w\not\models^i \xi_2, \\
	& \iff[(I.H.)] \quad \xi_1 \in r_i(w)\ \mbox{and}\ \xi_2 \in r_i(w),\\[3pt]
	& \iff[{\bf (tp2)}] \quad (\xi_1\land \xi_2) \in r_i(w).
	\end{align*}}
\end{draftproof}

	\item[--] {\it Case $\psi = \D\xi$}:  We have that
	{\small \begin{align*}
	\M,w \models^i \D\xi \quad & \iff \quad w \prec v\ \mbox{and}\ \M,v \models^i  \xi, \quad \mbox{for some $v\in W$},\\
	& \iff[(I.H.)] \quad w \prec v\ \mbox{and}\  \xi \in r_i(v),\quad \mbox{for some $v\in W$},\\[3pt]
	& \iff \quad  \D\alpha \in r_i(w) \quad \mbox{by \ref{qm:coherence} and \ref{qm:saturation}.}
	\end{align*}}
	\item[--] {\it Case $\psi = \exists[\leq c]x\ \xi$}: We have that
	{\small
\begin{align*}
	\M,w \models^i (\exists[\leq c] x\ \xi) \quad & 
	\iff[(def)] \quad  \card{\{j\in I\ :\ \M,w\models^j \xi\}}\ \leq\ c\\
	& \iff[(I.H.)] \quad  \card{\{j\in I\ :\ \xi \in r_{j}(w)\}}\ \leq\ c\\
	& \iff \quad  \sum_{t\in T_w(\xi)} \card{\{j\in I\ :\ r_{j}(w) = t\}}\ \leq\ c\\
	& \iff[\ref{qm:runs}] \quad  \sum_{t\in T_w(\xi)} \mu_w(t)\ \leq\ c\\
	& \iff[\ref{qs:saturation}] \quad  (\exists[\leq c] x\ \xi) \in r(w).
\end{align*}	}
The penultimate equivalence follows, again, from the fact that each summand strictly less than $(C+1)$, since $c\leq C$. Hence, it follows from \ref{qm:runs} that \mbox{$\mu_w(t) =  \card{\{j\in I\ :\ r_{j}(w) = t\}}$}, for all $t\in T_w(\xi)$.
\end{itemize}

\smallskip
\noindent
Hence, it follow that $\M,w\models^i \psi$ if and only if $\psi \in r_i(w)$, for all $\psi\in \sub{\phi}$, as required. 
By \ref{qm:witness}, there is some $w_0\in W$ and $t_0\in T_{w_0}$ such that $\phi\in t_0$, while by \ref{qm:runs} we have that there is some $i_0\in I$ such that $r_{i_0}(w_0)=t_0$. Hence, it follows from \eqref{IH:quasimodel} that $\M,w_0\models^{i_0} \phi$, which is to say that $\phi$ is satisfiable with respect to $\QK$, as required.
\end{proof}

\medskip

Hence, to show that the one-variable fragment $\QK\cap \QMLone$ has the \mbox{exponential} \fmp, it is enough to show that every quasimodel for $\phi$ can be transformed into a finite quasimodel in which both $W$ and $I$ are at most exponential in the size~of~$\phi$.

\medskip

\begin{lemma}\label{lem:pruning}
	Let $\phi\in \QMLone$ be an arbitrary formula in one-variable. If $\phi$ has a quasimodel, then $\phi$ has a quasimodel $\Q=(W,\prec,\q,I,\R)$ such that:
	\begin{equation}
		\card{W}\ \leq\ m^2 2^{8(nm)}C
		\qquad \mbox{and} \qquad 
		\card{I}\ \leq\ m2^{5(nm)}C.
	\end{equation}
\end{lemma}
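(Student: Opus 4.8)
The plan is to start from a given quasimodel $\Q=(W,\prec,\q,I,\R)$ for $\phi$ (retaining the notation $n=\card{\sub\phi}$, $m=\md\phi$, $C=\cpt\phi$) and to carve out of it a sub-quasimodel of the required dimensions by two rounds of pruning: a \emph{tree} round that bounds the out-degree of every node --- which, together with the depth bound $\le m$ already built into \ref{qm:basic}, controls $\card W$ --- and a \emph{runs} round that controls $\card I$. Using \ref{qm:witness} I would first replace $W$ by the subtree generated by the distinguished node $w_0$, so that $w_0$ becomes the root; this only shrinks $W$ and preserves every quasimodel condition. The feature that constrains both rounds is that the multiplicity functions are not free parameters: by \ref{qm:runs} each $\mu_w(t)$ equals $\min(\card{\{i:r_i(w)=t\}},C{+}1)$, and by \ref{qs:saturation} the presence of a quantified subformula $\exists[\le c]x\,\xi$ in a type of $T_w$ is governed by the capped sums $\sum_{t'\in T_w(\xi)}\mu_w(t')$. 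Hence a run may be discarded from, added above, or re-routed below a node only if the capped count of every type it occupies there is left unchanged. I would therefore first isolate a book-keeping observation: any surgery moving at most $C{+}1$ runs out of, or into, a given $(w,t)$-slot can be compensated by a bounded number of auxiliary runs so as to restore every affected $\mu_w$, and hence \ref{qs:saturation}.

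For the tree round, fix a node $w$ of level $k<m$. By \ref{qm:saturation} its successors must jointly witness every pair $(r_i(w),\D\xi)$ with $\D\xi\in r_i(w)$; I would group these demands by the pair $(t,\D\xi)\in T_w\times\sub\phi$, of which there are at most $2^n n$, pick for each non-empty group a single witnessing successor (and a single witnessing run in that group), and delete every other successor subtree of $w$. Coherence \ref{qm:coherence} survives automatically, since deleting successors can only falsify its hypothesis. To restore \ref{qm:saturation} for the runs of type $t$ at $w$ that have lost their private witness I would re-route those runs, \emph{below $w$ only}, to copy the run kept as their group's witness --- a move that on a tree is invisible at and above $w$ --- and then apply the book-keeping observation to repair the quasistates in the affected subtrees, topping up if necessary the retained successors' capped multiplicities with a bounded number of further witnessing subtrees. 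Iterating top-down through the $\le m$ levels leaves every node with out-degree bounded by a function of $n$ and $C$ only, so that, with care, $\card W$ can be kept within the stated bound $m^2 2^{8(nm)}C$.

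For the runs round the tree is now finite of the size just obtained, so a run is a labelling of a finite tree by types. I would classify runs by their restriction to each root-to-leaf branch --- such a restriction is a sequence of $\le m{+}1$ types, giving $2^{O(nm)}$ possibilities --- and, on each branch, retain no more than $C{+}1$ runs exhibiting each pattern, precisely enough to realise every multiplicity for which that pattern is responsible, discarding the rest and again using the book-keeping observation to preserve \ref{qs:saturation}. A careful count of the surviving runs yields $\card I\le m\,2^{5(nm)}C$, after which it remains only to verify directly that the trimmed structure still satisfies \ref{qm:basic}--\ref{qm:runs}. I expect the genuine obstacle to be exactly this repeated repair of the $\exists[\le c]$-saturation: because every deletion or re-routing of a run perturbs the multiplicity functions on which the quantified types depend, the pruning cannot be a naive ``keep one witness, throw the rest away'' argument but must be organised as a flow-style redistribution that keeps every capped slot capped and the count of every sub-$C$ slot exactly fixed.
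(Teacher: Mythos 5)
The gap is the ``book-keeping observation'' on which both of your rounds lean: you assert, but never prove, that any deletion or re-routing of runs can be compensated by a bounded number of auxiliary runs so as to restore every affected multiplicity and hence \ref{qs:saturation}. That claim is not a routine repair step --- it is the entire difficulty of the lemma. Your re-routing ``below $w$ only'' is not in fact invisible to the quasistate data below $w$: at every node $v$ strictly below $w$ it removes a run from the slot $(v,r_i(v))$ and adds one to the slot occupied by the kept witness, so by \ref{qm:runs} the multiplicity functions $\mu_v$ must change, while the types in $T_v$ --- whose counting subformulas $\exists[\leq c]x\ \xi$ are tied to the capped sums by \ref{qs:saturation} --- stay fixed. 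If a vacated slot was, say, the unique witness of some $\neg\exists[\leq 0]x\ \xi$ occurring in a type at $v$, compensation requires a genuinely new run through that slot; but runs are total functions on the tree, so any auxiliary run also occupies a type at $w$ and at every ancestor of $w$, perturbing exactly the levels you intended to leave untouched, and it carries its own $\D$-demands which by \ref{qm:saturation} need fresh witnesses, so the repair cascades rather than terminates. The same objection hits your second round: two runs sharing a pattern on one branch may differ on another, so discarding ``redundant'' runs branch-by-branch again disturbs counts elsewhere; and the bound on $\card{W}$ is never actually computed.

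The paper's proof concedes that this in-place surgery cannot be made to work and takes a different route after a first pruning step that is essentially your tree round (fix representatives $s_{(w,t)}\in\subR(w)$, fix $\D$-witnesses for these representatives only, and keep just enough runs to realise each $\mu_w(t)$, giving the bound on $\card{I'}$). Saturation is then repaired not by redistributing runs but by blowing up the tree: each node $w$ is replaced by copies $(w,\sigma)$ indexed by permutations obtained as compositions of transpositions $\tau_{(w,i)}$ that swap $r_i$ with the representative $s_{(w,t)}$ of its own type at $w$, and the runs are redefined by $\run{r}_i(w,\sigma)=r_{\sigma(i)}(w)$. Since each transposition exchanges two runs having the same type at $w$, coherence \ref{qm:coherence} is preserved; saturation \ref{qm:saturation} is regained by stepping into the copy in which the deficient run has been swapped onto the representative, whose witnesses were kept; and --- the point your proposal is missing --- the multiplicities are preserved automatically, because permuting the index set cannot change $\card{\{i : \run{r}_i(w,\sigma)=t\}}$, so no flow-style compensation of \ref{qs:saturation} is ever needed. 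Only $\card{I'}$-many transpositions are used per node, which is what keeps $\card{W''}\leq \card{W'}\cdot\card{I'}$ within the stated exponential bound. Without this device, or some concrete replacement for your unproved book-keeping claim, your argument does not go through.
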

\begin{proof}
Let $\phi\in \QMLone$ and suppose that $\Q=(W,\prec,\q,I,\R)$ is a quasimodel for $\phi$. 
The proof follows two stages: the first involves pruning both the basic structure and the set of runs so that they are both at most exponential in the size of $\phi$. During this stage we inadvertently destroy some of the defining properties of our quasimodel; in particular the saturation condition {\bf (qm4)}. In the second stage we remedy this deficiency by adding multiple `copies' of each quasistate and performing `surgery' on a finite set of runs to repair saturation.

\paragraph{\bf Step 1)}\;
First, it follows from \ref{qm:witness} that there is some $w_0\in W$ and $t_0\in T_{w_0}$ such that $\phi\in t_0$. 
By \ref{qm:runs}, for each $w\in W$ and each $t\in T_w$ we may fix some run $s_{(w,t)}\in \R$ such that $s_{(w,t)}(w) = t$. Take $\subR(w) = \{s_{(w,t)} : t\in T_w\}$ be to the set comprising all such runs, for each $w\in W$. In particular, we note that $\card{\subR(w)} = \card{T_w} \leq 2^n$. 
Furthermore, by \ref{qm:saturation}, for each $\D\alpha\in t$ we may fix some $v=v_{(w,t,\alpha)}\in W$ such that $w \prec v$ and $\alpha \in s_{(w,t)}(v)$. 
We now define inductively a sequence of (finite) subsets $W_k\subseteq W$, for $k=0,\dots, m$, by taking \mbox{$W_0 = \{w_0\}$}, and
\begin{equation*}
	W_{k+1} = \big\{v_{(w,t,\alpha)}\in W\ :\ w\in W_k,\ t \in T_w, \mbox{ and } \D\alpha \in t\big\},
\end{equation*}
for $k<m$. 
We then define a new basic structure $(W',\prec',\q')$, by taking
\begin{equation*}
	W' = \bigcup_{k=0}^m W_k \qquad u \prec'v \iff u \prec v, \qquad \mbox{and} \qquad \q'(u) = \q(u),
\end{equation*}
for all $u,v\in W'$.

Let $\subR=\bigcup \{\subR(w) : w\in W'\}$, and note that $\subR$ is finite since it is a finite union of finite sets of runs. However, $\subR$ need not be plentiful enough to accommodate condition~\ref{qm:runs}. Hence we must extend $\subR$ to a `small' subset $\R'$ of $\R$ by choosing sufficiently many runs so as to satisfy \ref{qm:runs}. 

More precisely, for each $w\in W'$, $t\in T_w$ and $m<\mu_w(t)$ we can fix some $r_{(w,t,m)}\in \R$ such that $r_{(w,t,m)}(w) = t$, and $r_{(w,t,m)}\not=r_{(w,t,m')}$ for $m\not=m'$. The existence of sufficiently many such runs is guaranteed by \ref{qm:runs}. Furthermore, we may assume without any loss of generality that $r_{(w,t,0)}=s_{(w,t)}\in \subR(w)$, as defined above. 
We may then take 
\begin{eqnarray*}
	I' & = & \big\{(w,t,m)\ :\ w\in W',\ t\in T_w \mbox{ and } m< \mu_w(t)\big\},\\[5pt]
	\R' & = & \big\{r_{(w,t,m)} \in \R : (w,t,m) \in I'\big\},
\end{eqnarray*}
and define $\Q'=(W',\prec',\q',\R')$. 
We note that:
\begin{eqnarray*}
\label{eq:size1a}
\card{W'} & \leq & \card{W_0} + \dots + \card{W_m} \leq m2^{3(nm)},\\[5pt]
\label{eq:size1b}
\card{I'} & \leq & \card{W'} \cdot \max_{w\in W} \card{T_w} \cdot (C+1) \leq m2^{5(nm)}C.
\end{eqnarray*}
Furthermore, from our construction we have that $\Q'$ satisfies each of the conditions \ref{qm:basic}, \ref{qm:witness}, \ref{qm:coherence}, and \ref{qm:runs}, as can be easily verified. However $\Q'$ fails to satisfy the saturation condition \ref{qm:saturation}. To remedy this, we diverge from the techniques of~\cite{Wolter2000,GKWZ03} by extending our basic structure with not one but \emph{multiple} `copies' of each quasistate; each associated with a given transposition of runs.

\paragraph{\bf Step 2)}\;
Let $\Sym{I'}$ denote the set of all permutations $\sigma:I' \to I'$ on the set of indices $I'$, with $\id\in \Sym{I'}$ denoting the identity function. 
For each $w\in W'$ and each $i\in I'$, let $\tau_{(w,i)}\in \Sym{I'}$ denote the permutation that transposes $r_i$ and $s_{(w,t)}\in \subR(w)$, where $t=r_i(w)$. Let $\Trans{\R'}{w} = \{\tau_{(w,i)} : i\in I'\}$ denote the set of all such transpositions. In particular, we have that $\card{\Trans{I'}{w}}\leq \card{I'}$ is at most exponential in the size of $\phi$.

\smallskip

%

In what follows, we construct a new basic structure based on some `small' subset of $W'\times \Sym{I'}$. Naturally, we cannot construct a basic structure out of the set of all pairs from $W'\times \Sym{I'}$ if we are to insist on an exponential upper bound on the size of the quasimodel, since $\card{\Sym{I'}} = \card{I'}!$. 
Instead, for each $(w,\sigma)\in W'\times \Sym{I'}$, we may define a small set of \emph{successors} $S(w,\sigma)\subseteq W'\times \Sym{I'}$, by taking:
\begin{equation*}
	S(w,\sigma) = \{(v,\sigma')\ :\ w \prec v\ \mbox{and}\ \sigma'=(\tau\circ \sigma)\ \mbox{for some}\ \tau \in \Trans{I'}{w}\},
\end{equation*}
for all $w\in W'$ and $\sigma\in \Sym{I'}$. 
%
%
We construct a new sequence of sets \mbox{$W_k'\subseteq W'\times \Sym{I'}$}, for $k=0, \dots, m$, by taking
\begin{equation*}
	W_0' = \{(w_0,\id)\} \qquad \mbox{and} \qquad W_{k+1}' = \bigcup \big\{S(w,\sigma)\ :\ (w,\sigma)\in W_k'\big\},
\end{equation*}
for $k<m$. Define a new basic structure $(W'',\prec'',\q'')$, by taking:
\begin{equation*}
	W'' = \bigcup_{k=0}^m W_k', \qquad\qquad (u,\sigma)\prec''(v,\rho) \iff (v,\rho)\in S(u,\sigma)
\end{equation*}
and $\q''(u,\sigma) = \q'(u)$, for all $(u,\sigma),(v,\rho)\in W''$.

Finally, for each run $i\in I'$ we define a new run $\run{r}_i$ through $(W'',\prec'',\q'')$, by taking $\run{r}_i(w,\sigma) = r_{\sigma(i)}(w)$, for all $(w,\sigma)\in W''$. 
That is to say that the new run $\run{r}_i$ behaves at $(w,\sigma)\in W''$ as $r_{\sigma(i)}\in \R'$ does at $w\in W'$. 
Take $I''=I'$ and let $\R''=\{\run{r}_i : i \in I''\}$ be the set of all such runs. We may then define a new quasimodel $\Q''=(W'',\prec'',\q'',I'',\R'')$, and note that
\begin{equation*}
\card{W''} \leq \card{W'}\cdot \card{I'} \leq m^2 2^{8(nm)}C \qquad \mbox{and} \qquad \card{I''}=\card{I'} \leq m2^{5(nm)}C,
\end{equation*}
both remaining at most exponential in the size of $\phi$, as required. All that remains is to show that $\Q''$ is, indeed, a quasimodel for $\phi$.

\begin{itemize}\small
	\item[---] It follows from the construction that $\phi \in t_0$ for some $t_0\in T_{w_0}=T_{(w_0,\id)}$, where $(w_0,\id)\in W'$, as required for\ref{qm:witness}.
	\item[---] For \ref{qm:coherence}, suppose that $i\in I''$, $(w,\sigma),(v,\rho)\in W''$ and $\D\alpha\in \sub{\phi}$ are such that $(w,\sigma)\prec'' (v,\rho)$ and $\alpha\in \run{r}(v,\rho)$.
	
	By definition we have that $(v,\rho)\in S(w,\sigma)$, which is to say that $w \prec v$ and $\rho=\tau\circ \sigma$ for some transposition $\tau\in \Trans{I'}{w}$. Hence we have that 
\begin{equation*}
	\alpha \in \run{r}_i(v,\rho)\ =\ r_{\rho(i)}(v)\ =\ r_{(\tau\circ \sigma)(i)}(v)\ =\ r_{j}(v),
\end{equation*}
where $j=\tau(\sigma(i))\in I'$. Since $\Q'$ is coherent and $w \prec' v$, we have that \mbox{$\D\alpha\in r_{j}(w)$}. 
However, we have that $\tau\in \Trans{I'}{w}$ and hence by definition $r_{\tau(\sigma(i))}(w)=r_{\sigma(i)}(w)$, since $\tau$ transposes only runs that coincide at $w$.  In particular, we have that $\D\alpha\in r_{\sigma(i)}(w)$, which is to say that $\D\alpha\in \run{r}_i(w,\sigma)$, as required.

	\item[---] For \ref{qm:saturation}, suppose that $i\in I'$, $(w,\sigma)\in W''$ and \mbox{$\D\alpha\in \sub{\phi}$} are such that $\D\alpha\in \run{r}_i(w,\sigma)$. This is to say that $\D\alpha \in r_{\sigma(i)}(w)$. Let $t=r_{\sigma(i)}(w)$ and let $s_{(w,t)}\in \subR(w)$ be such that $s_{(w,t)}(w)=t$. By construction there is some $v=v_{(w,t,\alpha)}\in W'$ such that $w \prec' v$ and $\alpha \in s_{(w,t)}(v)$. 
	
Let $\tau=\tau_{(w,\sigma(i))}\in \Trans{I'}{w}$ be the transposition that swaps $r_{\sigma(i)}\in \R'$ and $s_{(w,t)}\in \subR(w)$. 
It follows from the construction that there is some $(v,\tau\circ\sigma)\in S(w,\sigma)\subseteq W''$ such that
\begin{equation*}
	\alpha\ \in \ s_{(w,t)}(v)\ =\ r_{\tau(\sigma(i))}(v)\ =\ r_{(\tau\circ \sigma)(i)}(v)\ =\ \run{r}_i(v,\tau\circ \sigma),
\end{equation*}
and $(w,\sigma)\prec'' (v,\tau\circ \sigma)$, as required.
	\item[---] For \ref{qm:runs}, suppose that $(w,\sigma)\in W''$ and $t\in T_{(w,\sigma)} = T_w$ and consider the following sets:
	\begin{equation*}
		X = \{i\in \I' : \run{r}_i(w,\sigma) = t\}\qquad \mbox{and} \qquad  Y= \{i\in I' : r_i(w) = t\}.
	\end{equation*}
	
	Note that the $\sigma$ maps bijectively from $X$ onto $Y$, since by definition $\run{r}_i(w,\sigma)=r_{\sigma(i}(w)$. Hence $i\in X$ if and only if $\sigma(i)\in Y$, and thus $\card{X}=\card{Y}$.
	
	%
That is to say that the number of runs passing through each type remains unaffected by Step 2 of our construction. It then follows from the definitions that
	\begin{equation*}
		\mu_{(w,\sigma)}(t)\ =\ \mu_w(t)\ =\ \min(\card{Y},C+1)\ =\ \min(\card{X},C+1)
	\end{equation*}
	as required.	
\end{itemize}	

\smallskip

\noindent
Thus completes the proof of Lemma~\ref{lem:pruning}.
\end{proof}

\medskip

Theorem~\ref{thm:QKfmp} now follows from Lemmas~\ref{lem:qm}--\ref{lem:pruning}, and hence the one-variable fragment $\QK\cap \QMLone$ has the exponential \fmp. Consequently, as described above, we may exploit this exponential \fmp\ to answer the satisfiability problem in \NExpTime, thereby completing the proof of Theorem~\ref{thm:QK-unbounded}.
%



It follows that the satisfiability problems for each of the one-variable fragments $\QK\cap \QMLone[\ell]$ with finitely bounded quantifier subscripts are similarly \NExpTime-complete; sandwiched, as they are, between the unbounded fragment $\QK\cap\QMLone$ and the \NExpTime-hard one-variable counting-free fragment $\QK\cap\QMLcf^1$~\cite{Marx1999}. 
Furthermore, the satisfiability problem for the fragment $\QK\cap \QMLoneun$, in which the counting quantifiers are encoded as unary strings, can be polynomially reduced to that of $\QK\cap \QMLone$ by transcribing the subscripts into binary. Consequently, it shares the same \NExpTime\ upper-bound.

\begin{corollary}
\label{cor:QK-bounded}
\label{cor:QK-unary}
	The satisfiability problem for each of the fragments $\QK\cap \QMLone[\ell]$ and $\QK\cap \QMLoneun$ is \NExpTime-complete, for $\ell<\omega$.
\end{corollary}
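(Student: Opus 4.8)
The plan is to derive Corollary~\ref{cor:QK-bounded} directly from Theorem~\ref{thm:QK-unbounded} and Corollary's two halves by a sandwiching argument together with a trivial syntactic reduction, rather than by reworking the quasimodel machinery. For the bounded fragments $\QK\cap\QMLone[\ell]$, the upper bound is immediate: $\QMLone[\ell]\subseteq\QMLone$, so any formula in the bounded fragment is in particular a formula of the unbounded one-variable fragment, and by Theorem~\ref{thm:QK-unbounded} its satisfiability (with respect to $\QK$) can be decided in \NExpTime. The matching lower bound comes from noting that the counting-free one-variable fragment $\QK\cap\QMLonecf$ embeds into $\QK\cap\QMLone[\ell]$ for every $\ell<\omega$, via the abbreviations $\exists x\,\phi:=\neg\exists[\leq 0]x\,\phi$ and $\forall x\,\phi:=\exists[\leq 0]x\,\neg\phi$ introduced in Section~\ref{sec:definitions}. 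These introduce only the quantifier subscript $0$, hence stay within $\QMLone[\ell]$, and the translation is clearly polynomial (indeed linear) and preserves satisfiability with respect to $\QK$. Since $\QK\cap\QMLonecf$ is \NExpTime-hard~\cite{Marx1999}, so is each $\QK\cap\QMLone[\ell]$.

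For the unary-encoded fragment $\QK\cap\QMLoneun$, the plan is to exhibit a polynomial-time reduction to $\QK\cap\QMLone$ (binary encoding). Given a formula $\phi\in\QMLoneun$, produce the identical formula $\phi^\star\in\QMLone$ in which every quantifier subscript, originally written in unary, is rewritten as the corresponding binary string. This transcription can only shrink the representation of each subscript (a number $c$ needs $c+1$ symbols in unary but only $O(\log c)$ in binary), so $\size{\phi^\star}\leq\size{\phi}$ and the reduction runs in polynomial (in fact linear) time. The semantics of counting quantifiers in Section~\ref{sec:definitions} depends only on the numerical value $c$, not on its encoding, so $\phi$ is satisfiable with respect to $\QK$ precisely when $\phi^\star$ is; applying Theorem~\ref{thm:QK-unbounded} to $\phi^\star$ gives the \NExpTime\ upper bound. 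The matching \NExpTime\ lower bound again follows because $\QK\cap\QMLonecf$ embeds into $\QK\cap\QMLoneun$ (the abbreviations use only the subscript $0$, whose unary and binary encodings coincide).

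There is no real obstacle here: the result is essentially a bookkeeping corollary of the main theorem. The only point requiring a moment's care is checking that the counting-free abbreviations $\exists x\,\phi$ and $\forall x\,\phi$ genuinely land in the relevant restricted fragments --- that they introduce no subscript exceeding the bound $\ell$ and no large numeral whose binary/unary encodings might differ --- but since these abbreviations only ever use the subscript $0$, that is immediate. One should also confirm that the \NExpTime-hardness of $\QK\cap\QMLonecf$ cited from~\cite{Marx1999} is robust under this trivial embedding (it is, being a hardness result about a syntactic subclass that is literally contained, modulo the definitional abbreviation, in each fragment considered). With these observations in place, all three fragments $\QK\cap\QMLone[\ell]$ ($\ell<\omega$) and $\QK\cap\QMLoneun$ are \NExpTime-complete, as claimed.
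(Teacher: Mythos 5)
Your proposal is correct and follows essentially the same route as the paper: the bounded fragments are sandwiched between the \NExpTime\ upper bound for the unbounded fragment (Theorem~\ref{thm:QK-unbounded}) and the \NExpTime-hard counting-free fragment of~\cite{Marx1999}, and the unary fragment is handled by transcribing subscripts into binary. Your added care about the abbreviations using only the subscript $0$ is a harmless elaboration of the same argument.
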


\section{Logics with unbounded quantifiers over expanding or decreasing domains}
\label{sec:unbounded}

\newcommand{\E}{\mathcal{E}}

It is well-established that the satisfiability problems for the counting-free fragments of $\QKexp$ and $\QKdec$ are both polynomially reducible to that of $\QK$, by relativizing the domain function with an auxiliary monadic predicate symbol $\E\in \pred$, demarcating those domain objects that \emph{actually exist}~\cite{FittingMendelsohn1998,Wolter1998}. 

The same is true over the language with counting quantifiers, where the same trick can also be employed. Let $\phi\in \QML$ be an arbitrary formula first-order modal formula with counting quantifiers, and let $\E\in \pred$ be a fresh predicate symbol not occurring in $\phi$. We define the \emph{relativization} $\phi^\E$ via the function $(\cdot)^\E :\sub{\phi} \to \QML$, defined inductively, by taking
\begin{gather*}
	P(x_1,\dots, x_n)^\E := P(x_1,\dots,x_n), \quad 
	(\psi_1 \land \psi_2)^\E := (\psi_1^\E \land \psi_2^\E)\\[5pt]
		(\neg\psi)^\E := \neg \psi^\E, \quad (\D\psi)^\E := \D\psi^\E, \quad
	\mbox{and} \quad
	(\exists[\leq c]x\ \psi)^\E := \exists[\leq c] x \big(\E(x) \land \psi^\E\big).
\end{gather*}
We may, thereby, reduce the satisfiability problem for $\QKexp\cap \QML$ and $\QKdec\cap \QML$ to that of $\QK\cap\QML$, by specifying the expanding or decreasing nature of $\E$, by defining:
\newcommand{\zetaexp}{\zeta_\ex}
\newcommand{\zetadec}{\zeta_\dc}
\begin{eqnarray*}
	\zetaexp  &:= & \B^{\leq m}\forall x (\E(x) \to \B \E(x)), \\
	\zetadec & := &  \forall x\B^{\leq m} (\D\E(x) \to \E(x)),
\end{eqnarray*}
respectively, where $m=\md{\phi}$ is the modal depth of $\phi$. 
\begin{proposition}
Let $\phi\in \QML$ be an arbitrary first-order modal formula. Then we have the following equivalences:
\begin{itemize}
	\item[\rm(i)] $\phi$ is satisfiable with respect to $\QKexp$ if and only if $(\zetaexp \land \phi^\E)$ is satisfiable with respect to $\QK$,
	\item[\rm(ii)] $\phi$ is satisfiable with respect to $\QKdec$ if and only if $(\zetadec \land \phi^\E)$ is satisfiable with respect to $\QK$,
\end{itemize}
\end{proposition}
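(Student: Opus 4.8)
The plan is to prove both equivalences by passing through models and their quasimodel/relativized counterparts, using the standard relativization technique already sketched in the surrounding text. The key observation is that relativizing a formula $\psi$ by a fresh monadic predicate $\E$ has the effect, within a model $\M$, of restricting quantification to the subset $\I(w,\E)\subseteq\dom(w)$ at each world $w$. So I would first establish the following \emph{localization lemma}: for any first-order Kripke model $\M=(\F,D,\dom,\I)$ in which $\E$ does not occur in $\phi$, and any world $w$ and assignment $a\in\I(w,\E)$, one has $\M,w\models^a \psi^\E$ if and only if $\M^\E,w\models^a \psi$, where $\M^\E$ is the model obtained from $\M$ by replacing each domain $\dom(w)$ with $\I(w,\E)$ (and correspondingly restricting each $\I(w,P_j)$). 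This is a routine induction on the structure of $\psi$; the only clause with any content is the counting-quantifier clause, where $(\exists[\leq c]x\,\psi)^\E = \exists[\leq c]x(\E(x)\wedge\psi^\E)$ counts exactly those $b\in D$ with $b\in\I(w,\E)$ and $\M,w\models^b\psi^\E$, which by the inductive hypothesis are exactly the $b\in\I(w,\E)$ with $\M^\E,w\models^b\psi$, matching the semantics of $\exists[\leq c]$ evaluated in $\M^\E$.

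Next I would verify the two auxiliary facts that make the bookkeeping go through. First, $\zetaexp$ is true at $w_0$ in $\M$ (with $\F$ a tree of depth $\leq m$ rooted at $w_0$) exactly when $\I(u,\E)\subseteq\I(v,\E)$ whenever $uRv$ and $u$ is reachable from $w_0$; symmetrically, $\zetadec$ forces $\I(v,\E)\subseteq\I(u,\E)$ along edges. Here I use the `bulldozing' remark from the preliminaries so that without loss of generality the $\QK$-model is an irreflexive intransitive tree of depth $\leq\md{\phi}$, so $\B^{\leq m}$ reaches every relevant world. Second, in the forward direction I need to be able to pick the witness assignment inside $\I(w_0,\E)$: given an expanding-domain (resp. decreasing-domain) model $\M'$ with $\M',w_0\models^{a_0}\phi$, I set $D:=D'$, keep $\F$ and $\I$ on predicates occurring in $\phi$, define $\I(w,\E):=\dom'(w)$, and extend $\dom$ arbitrarily (e.g. $\dom\equiv D'$, making the new model a constant-domain model, which is in particular a model whose frame is a $\K$-frame); then $a_0\in\dom'(w_0)=\I(w_0,\E)$, the localization lemma gives $\M,w_0\models^{a_0}\phi^\E$, and the monotonicity of $\dom'$ gives $\M,w_0\models^{a_0}\zetaexp$ (resp. $\zetadec$). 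Conversely, if $(\zetaexp\wedge\phi^\E)$ is satisfied in a $\K$-model $\M$ at $w_0$ under $a_0$ — which, again by bulldozing, we may take to be a tree of depth $\leq m$ — then $\phi^\E$ holds, so in particular $\exists x\,\top$ relativized forces (after noting $\phi^\E$ is built over the relativized quantifiers) that $\I(w_0,\E)\neq\emptyset$; more carefully, one argues that $a_0$ may be chosen in $\I(w_0,\E)$ because $\phi^\E$, being the relativization of a sentence, is only sensitive to $\va(x)$ when $x$ is free in $\phi$, and if $x$ is free we may rename and pick any element of $\I(w_0,\E)$, which is nonempty since otherwise no relativized quantifier statement could be witnessed as required — then $\M^\E$ restricted to the subtree reachable from $w_0$ is an expanding-domain (resp. decreasing-domain) $\K$-model satisfying $\phi$ at $w_0$ by the localization lemma, after discarding worlds with empty $\I(w,\E)$ (legitimate over expanding domains because emptiness propagates backward along $R$, and over decreasing domains because it propagates forward, so in either case the surviving worlds still form a tree and the domain-monotonicity is preserved).

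The main obstacle I expect is the nonemptiness bookkeeping in the converse direction: $\M^\E$ is only a legitimate first-order Kripke model if every $\I(w,\E)$ is nonempty, but a $\K$-model of $\zetaexp\wedge\phi^\E$ need not have this. The fix is the pruning just described — over expanding domains, $uRv$ and $\I(v,\E)=\emptyset$ forces $\I(u,\E)=\emptyset$, so the set of worlds with nonempty $\E$-interpretation is downward closed and contains $w_0$ (else $\phi^\E$ cannot hold there, since any relativized quantifier or atomic subformula either needs a witness in $\I(w_0,\E)$ or is vacuous in a way compatible with satisfaction — a short case analysis confirms $\I(w_0,\E)\neq\emptyset$ whenever $\phi^\E$ is satisfied under an assignment into $D$), hence forms a subtree; over decreasing domains the same set is upward closed, and we keep only the subtree hanging below $w_0$. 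After this pruning the localization lemma applies verbatim on the surviving submodel, and the domain-monotonicity conditions transfer from the constraints $\zetaexp,\zetadec$ via the first auxiliary fact. The remaining steps — checking that $\M^\E$'s frame is still a tree of depth $\leq m$ and hence a $\K$-frame, and that all four model conditions hold — are routine.
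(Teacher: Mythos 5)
Your skeleton --- relativize with a fresh predicate $\mathcal{E}$, prove a localization lemma by induction on $\psi$, and read the domain monotonicity off $\zeta_\ex$ and $\zeta_\dc$ --- is exactly the routine argument the paper intends (its own proof is a one-line appeal to the analogous Proposition~2.4 of Wolter--Zakharyaschev), and your left-to-right directions are fine. The gaps are precisely where you locate ``the main obstacle'', and your fixes do not close them. First, your claim that a short case analysis yields $\I(w_0,\mathcal{E})\neq\emptyset$ whenever $\phi^{\mathcal{E}}$ is satisfied is false: for $\phi=\exists[\leq 0]x\,\top(x)$ (with $\top(x):=P_0(x)\lor\neg P_0(x)$) the relativization is satisfied precisely when $\I(w_0,\mathcal{E})=\emptyset$. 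Worse, the hole is not only in your bookkeeping: $\phi=\D\exists[\leq 0]x\,\top(x)$ is unsatisfiable with respect to $\QKexp$ (all local domains are nonempty), yet $\zeta_\ex\land\phi^{\mathcal{E}}$ is satisfied in a two-world constant-domain model in which $\mathcal{E}$ is empty everywhere; so the right-to-left direction cannot be recovered by cleverer pruning --- it needs $\zeta_\ex$ strengthened by a nonemptiness conjunct such as $\B^{\leq m}\neg\exists[\leq 0]x\,\mathcal{E}(x)$. Second, in the decreasing case your pruning is not a generated-submodel operation: empty-$\mathcal{E}$ worlds sit \emph{above} nonempty ones, so a surviving world may lose successors, and the localization lemma cannot be applied ``verbatim'' --- its $\D$-clause breaks when the only witness for $\D\psi^{\mathcal{E}}$ in $\M$ is a deleted world.

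There is a further, independent failure in the decreasing case, which your localization lemma inherits at the $\D$-step: for $a\in\I(w,\mathcal{E})$ and $w R v$ one may have $a\notin\I(v,\mathcal{E})$, and then the inductive hypothesis is unavailable; indeed the two sides genuinely diverge, since in the restricted model $a$ cannot satisfy any atom at $v$ (interpretations must be contained in the local domain), whereas $\zeta_\dc$ places no such constraint on the constant-domain model. Concretely, the sentence $\phi=\exists x\,\D\big(P(x)\land\exists[\leq 0]x\,P(x)\big)$ is unsatisfiable with respect to $\QKdec$, but $\zeta_\dc\land\phi^{\mathcal{E}}$ holds at the root of the constant-domain model with $W=\{w_0,v\}$, $w_0Rv$, $D=\{a,b\}$, $\I(w_0,\mathcal{E})=\{a,b\}$, $\I(v,\mathcal{E})=\{b\}$ and $\I(v,P)=\{a\}$ --- even though $\mathcal{E}$ is nonempty at every world. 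So for part (ii) one must also guard the atoms, e.g.\ add conjuncts $\B^{\leq m}\forall x\,(P(x)\to\mathcal{E}(x))$ for the predicates occurring in $\phi$; with nonemptiness and guarding added, your induction does go through for all assignments (which also dissolves your free-variable worry). In short, your approach is the intended one, but as written both converse directions fail, and they fail in a way showing that the reduction formulas themselves, not merely your case analysis, need strengthening --- a point that the paper's cited ``routine induction'' passes over in silence.
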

\begin{proof}
The proof is via a routine induction, analogous to that of \cite[Proposition 2.4]{Wolter1998}.
\end{proof}

This yields the following immediate consequence of Theorem~\ref{thm:QK-unbounded}.


\begin{corollary}
\label{cor:unbounded}
	The satisfiability problem for the one-variable fragment $L\cap \QMLone$ with unbounded quantifiers is decidable in \NExpTime, for $L\in \{\QKexp,\QKdec\}$.
\end{corollary}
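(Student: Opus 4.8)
The plan is to obtain Corollary~\ref{cor:unbounded} as a direct consequence of the reduction encapsulated in the Proposition immediately preceding it, together with the \NExpTime\ upper bound of Theorem~\ref{thm:QK-unbounded}. Let $L \in \{\QKexp, \QKdec\}$ and let $\phi \in \QMLone$ be given. By the Proposition, $\phi$ is satisfiable with respect to $L$ if and only if the formula $(\zeta \land \phi^\E)$ is satisfiable with respect to $\QK$, where $\zeta$ is $\zetaexp$ or $\zetadec$ accordingly. So the satisfiability problem for $L \cap \QMLone$ reduces to the satisfiability problem for $\QK$ on the formula $(\zeta \land \phi^\E)$; it remains only to check that this reduction is computable within the resources we can afford and that its output still lies in the one-variable fragment, so that Theorem~\ref{thm:QK-unbounded} applies.

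First I would verify that $(\zeta \land \phi^\E)$ is a one-variable formula with binary-encoded counting quantifiers: the relativization $(\cdot)^\E$ introduces no new variables (it only inserts the monadic guard $\E(x)$ under each quantifier, reusing the bound variable $x$), and the sentences $\zetaexp$, $\zetadec$ likewise use a single variable $x$; the abbreviation $\B^{\leq m}$ is just $m$-fold nesting of $\B$, so it contributes nothing beyond polynomially many modal operators since $m = \md{\phi} \leq \size{\phi}$. Thus $(\zeta \land \phi^\E) \in \QMLone$. Next I would check that the map $\phi \mapsto (\zeta \land \phi^\E)$ is computable in polynomial time: $\phi^\E$ is defined by a straightforward structural recursion that at worst inserts a bounded-size guard at each quantifier node, so $\size{\phi^\E}$ is linear in $\size{\phi}$, and $\zeta$ has size polynomial in $\md{\phi}$; no quantifier subscripts are altered, so the binary encodings carry over verbatim and the capacity is unchanged. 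Hence the reduction is polynomial.

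Finally I would conclude: since satisfiability of $(\zeta \land \phi^\E)$ with respect to $\QK$ is decidable in \NExpTime\ by Theorem~\ref{thm:QK-unbounded}, and a polynomial-time reduction composed with an \NExpTime\ procedure is again in \NExpTime, it follows that satisfiability for $L \cap \QMLone$ is in \NExpTime\ for both $L = \QKexp$ and $L = \QKdec$.

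There is essentially no hard step here — the work has been front-loaded into the Proposition, whose proof is cited as routine induction analogous to \cite[Proposition 2.4]{Wolter1998}. The only point requiring a moment's care is confirming that the relativization genuinely preserves membership in the one-variable, binary-encoded fragment (as opposed to silently introducing a second variable or blowing up the subscripts); once that bookkeeping is done, the corollary is immediate. I would also note in passing that this argument yields only the upper bound; matching \NExpTime-hardness for these varying-domain fragments is recorded separately (as Theorems~\ref{thm:QKexp-unbounded} and~\ref{thm:QKdec-unbounded} in the results table), presumably by an analogous relativization in the reverse direction or by a direct lower-bound construction, and is not needed for Corollary~\ref{cor:unbounded} as stated.
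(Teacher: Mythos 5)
Your proposal is correct and follows exactly the paper's route: the paper obtains Corollary~\ref{cor:unbounded} as an immediate consequence of the relativization Proposition (reducing $\QKexp$/$\QKdec$-satisfiability of $\phi$ to $\QK$-satisfiability of $\zetaexp\land\phi^\E$ resp. $\zetadec\land\phi^\E$) combined with the \NExpTime\ upper bound of Theorem~\ref{thm:QK-unbounded}. Your added bookkeeping --- that the relativization stays in the one-variable fragment, leaves subscripts untouched, and is polynomial-time --- is exactly the (implicit) content of the paper's ``immediate consequence'' claim.
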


What prevents us from encoding constant domains within varying domain models is our inability to prevent the domains from expanding or contracting beyond the scope of our formulas. However, the addition of counting quantifiers allows us to make specific demands on the size of each domain.

The following theorem provides a matching lower bound, by exploiting the exponential domain property of $\QK\cap \QMLone$, proved in Theorem~\ref{thm:QK-unbounded}, together with the succinct binary encoding of counting quantifiers. The intuition is that with unbounded quantifiers, we can specify the exact size of the first-order domains, forcing them to be remain constant in all possible worlds lie within the scope of the formula. The binary encoding allows us to succinctly specify the (possibly) exponential size of the domains.


\begin{theorem}
\label{thm:QKexp-unbounded}
\label{thm:QKdec-unbounded}
The satisfiability problem for the one-variable fragment $L\cap \QMLone$ with unbounded counting quantifiers is \NExpTime-complete, for $L\in \{\QKexp,\QKdec\}$.
\end{theorem}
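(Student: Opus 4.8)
The plan is to establish the \NExpTime\ lower bound (the upper bound being already provided by Corollary~\ref{cor:unbounded}). I would obtain this by reducing from the satisfiability problem for the counting-free one-variable fragment $\QK\cap\QMLcf^1$, which is \NExpTime-hard by \cite{Marx1999}, or more directly by reusing the exponential-domain lower-bound construction underlying that hardness result. The key observation is that the only obstruction to transferring \NExpTime-hardness from constant domains to expanding or decreasing domains is that in a varying-domain model the domain function may add or delete objects at worlds within the scope of $\phi$; but with unbounded (binary-encoded) counting quantifiers we can pin the domain size down exactly, and hence force it to be constant throughout the (bounded-depth) portion of the model that $\phi$ can see.

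First I would show how to simulate a constant-domain model of exponential size $N \leq 2^{p(|\phi|)}$ inside an expanding (resp.\ decreasing) model. Given a formula $\phi$ that is \NExpTime-hard to test for $\QK$-satisfiability, by Theorem~\ref{thm:QK-unbounded} we may assume $\phi$ is satisfiable iff it is satisfiable in a constant-domain model with $\card{D} = N$ for some $N \leq 2^{p(|\phi|)}$; moreover we can \emph{force} $\card{D}=N$ by conjoining $\neg\exists[\leq N-1]x\,(P(x)\lor\neg P(x))$, which says ``there are more than $N-1$ objects'', together with $\B^{\leq m}\exists[\leq N]x\,(P(x)\lor\neg P(x))$, which says ``at every world reachable in $\leq m$ steps there are at most $N$ objects''. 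Here the second conjunct uses the bounded-iteration box $\B^{\leq m}\psi := \psi\land\B\psi\land\cdots\land\B^m\psi$, and the binary encoding of $N$ keeps both conjuncts of size $O(|\phi|)$. In an expanding model, $\card{\dom(w_0)}=N$ at the root together with $\card{\dom(v)}\leq N$ at every world within modal depth $m$ forces $\dom(v)=\dom(w_0)$ for all such $v$ (monotonicity gives $\dom(w_0)\subseteq\dom(v)$ hence $N=\card{\dom(w_0)}\leq\card{\dom(v)}\leq N$); symmetrically for decreasing models using $\card{\dom(v)}\supseteq$-monotonicity. Since $\phi$ has modal depth $m$, its truth value at $w_0$ depends only on worlds within distance $m$, so $\phi$ is satisfiable with respect to $\QKexp$ (resp.\ $\QKdec$) iff $\phi\land(\text{size constraints})$ is, iff $\phi$ is satisfiable in a constant-domain $\QK$-model of size $N$, iff $\phi$ is $\QK$-satisfiable. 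This gives a polynomial-time reduction from $\QK$-satisfiability to $\QKexp$- and $\QKdec$-satisfiability, yielding \NExpTime-hardness.

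The expected main obstacle, and the point requiring the most care, is verifying that forcing the domain to be constant is genuinely harmless — that is, that a constant-domain $\QK$-model witnessing satisfiability of $\phi$ can be re-read as an expanding (or decreasing) model satisfying $\phi$ together with the size constraints, and conversely. The forward direction is immediate since a constant-domain model is trivially both expanding and decreasing, and by Theorem~\ref{thm:QK-unbounded} we may take it to have domain exactly $N$ (padding with extra objects if the witnessing model is smaller, which does not affect $\phi$ provided the padding objects are given a type already realised — or more cleanly, re-running the exponential-fmp argument to get size precisely $N$). The converse direction is where the domain-pinning argument above does the work: one must check that the size constraints, being guarded by $\B^{\leq m}$ rather than a genuine universal modality, nonetheless constrain exactly the worlds whose valuations influence $\phi$, which follows from the standard bounded-bisimulation / locality argument for modal depth together with the tree-model assumption from the preliminaries. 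A minor bookkeeping point is ensuring the reduction is correct when $N$ is not exactly a power of two and when $\phi$ itself already contains the predicate $P$ — both handled by choosing $P$ fresh, exactly as in the relativization construction of the preceding section.
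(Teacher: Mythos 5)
Your overall strategy---reducing from constant-domain $\QK\cap\QMLone$ satisfiability and using binary-encoded counting quantifiers under $\B^{\leq m}$ to pin the domain size to the exponential bound $N$---is the same as the paper's, and your varying-to-constant direction (exact cardinality within modal depth $m$ plus expanding/decreasing monotonicity forces the domain to be literally constant on the depth-$\leq m$ tree) is sound. The genuine gap is in the converse direction, precisely the point you flagged as delicate. You conjoin the exact-size constraint directly to $\phi$, so your reduction needs every $\QK$-satisfiable $\phi$ to have a constant-domain model of size \emph{exactly} $N$. That is false in the presence of counting quantifiers: $\phi$ may itself entail an upper bound on the domain (for instance any $\phi$ with $\exists[\leq 5]x\,\top(x)$ as a conjunct), in which case $\phi\land(\mbox{size}=N)$ is unsatisfiable although $\phi$ is satisfiable, and the reduction gives the wrong answer. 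Neither of your proposed remedies repairs this: padding with objects of an already-realised type is not harmless here, because duplicating a type changes the cardinalities that the counting quantifiers measure; and no re-run of the exponential-fmp argument can yield a model of size exactly $N$ when $\phi$ semantically forbids domains that large. The paper's proof avoids the problem by reducing $\phi$ to $\zeta\land\phi^{\E}$, where $\phi^{\E}$ relativises every quantifier to a fresh existence predicate $\E$ (so that $(\exists[\leq c]x\,\psi)^{\E}=\exists[\leq c]x\,(\E(x)\land\psi^{\E})$) and $\zeta$ includes $\zeta_{\ex}\land\zeta_{\dc}$ forcing $\E$ to be constant: the exact-size-$N$ requirement then applies only to the ambient domain, and the padding objects are placed outside $\E$, where they cannot disturb any count. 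Your own mention of relativisation addresses only freshness of $P$, which is not where it is needed (indeed $P(x)\lor\neg P(x)$ is a tautology whether or not $P$ is fresh).

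A secondary slip: as written, your lower bound $\neg\exists[\leq N-1]x\,\top(x)$ is asserted only at the root, which pins the domains in expanding models but not in decreasing ones, where successor domains may still shrink below $N$; the lower-bound conjunct must also be prefixed by $\B^{\leq m}$, as in the paper, where both the upper and lower cardinality constraints are boxed.
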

\begin{proof}
The upper-bound has already been established by Corollary~\ref{cor:unbounded}. 
The proof for the lower-bound is via a reduction from the satisfiability problem for the one-variable fragment $\QK\cap \QMLone$, over constant domains. 

To this end, let $\phi\in \QMLone$ be an arbitrary formula in one variable, whose counting quantifiers are encoded as binary strings. 
Since the one-variable fragment $\QK\cap\QMLone$ has the exponential finite domain property, there is some monotonic function $f(n)\in O\big(2^{n^k}\big)$ such that if $\phi$ is satisfiable with respect to $\QK$ then $\phi$ is satisfiable in a model in which the size of each first-order domain does not exceed $N=f(\size{\phi})$.

\smallskip

\noindent
\newcommand{\BM}[1][m]{\forall x \B^{\leq #1} \forall x\ }
We define $\zeta$ to be the conjunction of $\zetaexp$, $\zetadec$ and the following formula:
\begin{align}
\label{eq:zeta1}
 & \BM\big(\exists[\leq N] x \top(x) \land \neg \exists[\leq (N-1)] x \top(x)\big)
\end{align}
where $\top(x) := P_0(x) \lor \neg P_0(x)$. Note that the size of $\zeta$ is at most logarithmic in the size of $N$, owing to the binary encoding of the quantifier subscripts.

%
We claim that $\phi$ is satisfiable with respect to $\QK$ if and only if $(\zeta \land \phi^\E)$ is satisfiable with respect to $\bfL$, for $\bfL\in \{\QKexp,\QKdec\}$.

\begin{itemize}
	\lefttoright Suppose $\M,r\models^{a_0} (\zeta\land \phi^\E)$, for some first-order Kripke model $\M=(\F,D,\dom,\I)$, with $a_0\in \dom(r)$, where $\dom$ is either \emph{expanding} or \emph{decreasing}. Without loss of generality, we may assume that $\F=(W,R)$ is an irreflexive, intransitive tree of depth $\leq m$. Whence, by \eqref{eq:zeta1}, it follows that $\M,w\models^{a} \exists[\leq N] \top(x) \land \neg\exists[\leq (N-1)] \top(x)$, for all $w\in W$ and $a\in \dom(w)$. 
	 Consequently, we have that $\card{\dom(w)} = N$, for all $w\in W$, and since $\dom$ is assumed to be either increasing or decreasing, we must have that $\dom(u)=\dom(v)$, for all $u,v\in W$. It then follows from $\zetaexp$ and $\zetadec$ that $\I(u,\E) = \I(v,\E)$, for all $w,v\in W$.
	
	We define a new \emph{constant domain} model $\M'=(\F,D,\dom',\I')$ over $\F$, by taking \mbox{$\dom'(w) = \I(w,\E)$}, for all $w\in W$ and $\I'(w,P_i) = \I(w,P_i)$, for all predicate symbols $P_i$ occurring in $\phi$.
	
	We prove, by induction on the length of $\psi\in\sub{\phi}$ that
	\begin{equation}
	\tag{I.H.1}
	\label{IH:unboundedLR}
		\M',w\models^a \psi \qquad \iff \qquad \M,w\models^a \psi^\E,
	\end{equation}
for all $w\in W$ and $a\in \dom'(w)$.

The cases where $\psi$ is an atomic formula or a Boolean combination of smaller formulas are straightforward and follow from the definitions. So suppose that $\psi$ is of the form $\D\xi$ or $\exists[\leq c]x \xi$, for some $\xi\in \sub{\phi}$ and $c\in \bbN$. In which case we have the following:
	
	\begin{itemize}\small
	\begin{draftproof}
	\item[--] \emph{Case $\psi = P_i(x)$}:\quad We have that
	\begin{equation*}
		\M',w\models^a P_i(x) 
			\;\; \iff \;\; a\in \I'(w,P_i) 
			\;\; \iff[(def)] \;\; a\in \I(w,P_i) 
			\;\; \iff \;\; \M,w\models^a P_i(x).
	\end{equation*}	
	
	\item[--] \emph{Case $\psi = \neg\xi$}:\quad We have that
	\begin{eqnarray*}
		\M',w\models^a \neg \xi 
			& \iff & \M',w\not\models^a\xi \\
			& \iff[\eqref{}] & \M,w\not\models^a\xi^\E \\
			& \iff & \M,w\models^a\neg\xi^\E\\
			& \iff[(def)] & \M,w\models^a(\neg\xi)^\E
	\end{eqnarray*}	
	
	\item[--] \emph{Case $\psi = (\xi_1\land \xi_2)$}:\quad We have that
	\begin{eqnarray*}
		\M',w\models^a (\xi_1\land \xi_2)
			& \iff & \M',w\not\models^a\xi_1 \mbox{ and } \M',w\not\models^a\xi_2 \\
			& \iff[\eqref{}] & \M,w\not\models^a\xi_1^\E \mbox{ and } \M,w\not\models^a\xi_2^\E \\
			& \iff & \M,w\models^a(\xi_1^\E\land \xi_2^\E)\\
			& \iff[(def)] & \M,w\models^a(\xi_1\land \xi_2)^\E
	\end{eqnarray*}	
	\end{draftproof}	
	
	\item[--] \emph{Case $\psi = \D\xi$}:\quad We have that
	\begin{eqnarray*}
		\M',w\models^a \D \xi 
			& \iff & \exists v\in W;\ w R v \mbox{ and } \M',v\models^a\xi \\
			& \iff[\eqref{IH:unboundedLR}] & \exists v\in W;\ w R v \mbox{ and } \M,v\models^a\xi^\E \\
			& \iff & \M,w\models^a \D\xi^\E\\
			& \iff[(def)] & \M,w\models^a (\D\xi)^\E
	\end{eqnarray*}

	\item[--] \emph{Case $\psi = \exists[\leq c]x \ \xi$}:\quad We have that
	\begin{eqnarray*}
		\M',w\models^a \exists[\leq c] x\ \xi
			& \iff & \card{\big\{b\in \dom'(w) : \M',w\models^b \xi\big\}}\leq c\\
			& \iff[\eqref{IH:unboundedLR}] & \card{\big\{b\in \dom'(w): \M,w\models^b \xi^\E\big\}}\leq c\\
			& \iff[(def)] & \card{\big\{b\in \dom(w) : b\in E^{I(w)}\ \mbox{and}\ \M,w\models^b \xi^\E\big\}}\leq c\\
			& \iff & \card{\big\{b\in \dom(w) : \M,w\models^b \E(x) \land \xi^\E\big\}}\leq c\\
			& \iff & \M,w\models^a \exists[\leq c]x\ \big(\E(x) \land \xi^\E\big)\\
			& \iff[(def)] & \M,w\models^a \left(\exists[\leq c] x \ \xi\right)^\E
	\end{eqnarray*}
	
	\end{itemize}
	
Hence, we have that $\M',w\models^a \psi$ if and only if $\M,w\models^a \psi^\E$, for all $\psi\in \sub{\phi}$. In particular, we have that $\M',r\models^{a_0} \phi$, which is to say that $\phi$ is satisfiable with respect to $\QK$.
	
	\righttoleft Suppose that $\M,r\models^{a_0}\phi$ for some first-order \emph{constant domain} model $\M=(\F,D,\dom,\I)$, with $a_0\in \dom(r)$. Again, without any loss of generality, we may assume that $\F=(W,R)$ is an irreflexive intransitive tree of depth $\leq m$, with root $r\in W$.
	Moreover, since $\QK\cap \QMLone$ has the $f(n)$-sized domain property, we may assume that $\card{D}\leq N$, where $N=f(\size{\phi})$ is an most exponential in the size of $\phi$. 
	Hence, there must be some injective enumeration $\iota:D\to \{0,1,\dots, N-1\}$ of $D$, assigning to each object $a\in D$, a unique index $\iota(a)< N$.
	\renewcommand{\Im}[1]{\mathrm{Image}(#1)}
	We define a new model $\M'=(\F,D',\dom',\I')$ by taking $D' = \{0,1,2,\dots,N-1\}$, and $\dom'(w)=D'$, for all $w\in W$ and, 
	\begin{eqnarray*}
	\I'(w,P_i) &=& \{\iota(a) < N : a\in \I(w,P_i)\},\\
	\I'(w,\E) &=& \{\iota(a) < N : a\in D\}, 
\end{eqnarray*}
for all predicate symbols $P_i\in \pred$ occurring in $\phi$.\\

We prove, by induction on the length of $\psi\in \sub{\phi}$ that
	\begin{equation}
	\tag{I.H.2}
	\label{IH:unboundedRL}
		\M',w\models^{\iota(a)} \psi^\E \qquad \iff \qquad \M,w\models^a \psi,
	\end{equation}
for all $w\in W$ and $a\in \dom(w)$.
	
	
	\begin{itemize}\small
	\begin{draftproof}
	\item[--] \emph{Case $\psi = P_i(x)$}:\quad We have that
	\begin{equation*}
		\M',w\models^a P_i(x) 
			\;\; \iff \;\; a\in \I'(w,P_i) 
			\;\; \iff[(def)] \;\; a\in \I(w,P_i)
			\;\; \iff \;\; \M,w\models^a P_i(x).
	\end{equation*}	
	
	\item[--] \emph{Case $\psi = \neg\xi$}:\quad We have that
	\begin{eqnarray*}
		\M',w\models^a (\neg \xi)^\E
			& \iff[(def)] & \M',w\models^a\neg\xi^\E \\
			& \iff & \M',w\not\models^a\xi^\E \\
			& \iff[\eqref{IH:unboundedRL}] & \M,w\not\models^a\xi \\
			& \iff & \M,w\models^a\neg\xi
	\end{eqnarray*}	
	
	\item[--] \emph{Case $\psi = (\xi_1\land \xi_2)$}:\quad We have that
	\begin{eqnarray*}
		\M',w\models^a (\xi_1\land \xi_2)^\E
			& \iff[(def)] & \M',w\models^a (\xi_1^\E\land \xi_2^\E)\\
			& \iff & \M',w\not\models^a\xi_1^\E \mbox{ and } \M',w\not\models^a\xi_2^\E \\
			& \iff[\eqref{IH:unboundedRL}] & \M,w\not\models^a\xi_1 \mbox{ and } \M,w\not\models^a\xi_2 \\
			& \iff & \M,w\models^a(\xi_1\land \xi_2)
	\end{eqnarray*}	
	\end{draftproof}
	
	\item[--] \emph{Case $\psi = \D\xi$}:\quad We have that
	\begin{eqnarray*}
		\M',w\models^a (\D \xi)^\E
			& \iff[(def)] & \M',w\models^a \D\xi^\E\\ 
			& \iff & \exists v\in W;\ w R v \mbox{ and } \M',v\models^a\xi^\E \\
			& \iff[\eqref{IH:unboundedRL}] & \exists v\in W;\ w R v \mbox{ and } \M,v\models^a\xi \\
			& \iff & \M,w\models^a\D\xi
	\end{eqnarray*}

	\item \emph{Case $\psi=(\exists[\leq c]x \ \xi)$}:\quad We have that 
	\begin{eqnarray*}
		\M',w\models^a (\exists[\leq c] x\ \xi)^\E
			& \iff & \M',w\models^a \exists[\leq c] x \big(\E(x) \land \xi^\E\big)\\
			& \iff & \card{\big\{b\in \dom'(w) : \M',w\models^b \E(x) \land \xi^\E\big\}}\leq c\\
			& \iff & \card{\big\{b\in \dom'(w): \M',w\models^b \E(x)\ \mbox{and}\ \M',w\models^{b} \xi^\E\big\}}\leq c\\
			& \iff[(def)] & \card{\big\{b\in \dom'(w) : \M',w\models^{\iota(b)} \xi^\E\big\}}\leq c\\
			& \iff[\eqref{IH:unboundedRL}] & \card{\big\{\iota(b)\in \dom(w) : \M,w\models^{\iota(b)} \xi\big\}}\leq c\\
	& \iff & \M,w\models^a \exists[\leq c]x\ \xi.\\
	\end{eqnarray*}
	\end{itemize}
Hence, we have that $\M',\models^a \psi^\E$ if and only if $\M,w\models^a \psi$, for all $\psi\in \sub{\phi}$. In particular, we have that $\M',r\models^{a_0} \phi$. 
Furthermore, it is straightforward to check that $\M',r\models^{a_0} \zeta$, since each $\dom'(w)$ contains precisely $N$ elements, and $\I'(u,\E)=\I'(v,\E)$, for all $u,v\in W$. 
Hence, we have that $\M,r\models^{a_0}(\zeta \land \phi^\E)$, which is to say that $(\zeta \land \phi^\E)$ is satisfiable with respect to $\QK$, and consequently with respect to both $\QKexp$ and $\QKdec$.
\end{itemize}

Since the satisfiability problem for the one-variable fragment $\QK\cap \QMLone$ is \NExpTime-hard, so too must be the satisfiability problem for the one-variable fragment \mbox{$\bfL\cap\QMLone$}, for both $\bfL\in \{\QKexp,\QKdec\}$, as required.
\end{proof}

Note, above, that $\zeta$ is not restricted to any quantifier bounded fragment $\QMLone[\ell]$, for $\ell<\omega$. 
Furthermore, this reduction relies heavily on the succinct binary encoding of the subscripts, without which we would be unable to specify the exponential size of the domains while maintaining the polynomial bound on the size of $\zeta$.

\smallskip

In the proceeding sections, we shall see that both of these assumptions are crucial to the success of Theorem~\ref{thm:QKexp-unbounded}, and without which the computational complexity of the satisfiability problem can be greatly reduced.

\section{Logics with bounded quantifiers over expanding domains}
\label{sec:expanding}

In this section we show that placing \emph{any} finite bound on the quantifier subscripts significantly reduces the computational complexity of the satisfiability problem for the resulting fragment, from \NExpTime-complete to \PSpace-complete.

Similarly, if we were to encode the quantifier subscripts as unary strings---a practice that is common within other branches of logic, such as description logics~\cite{Baader2003,Hustadt2004}---then we also observe the same reduction in complexity. 
In both cases, the results hinges on the fact that the domains for any first-order Kripke model for $\phi$ can be chosen to be at most polynomial in the size of $\phi$ and $C=\cpt{\phi}$. Note that $C$ is bounded by some fixed constant for $\phi\in \QMLone[\ell]$ and is at most linear in $\phi$ when encoded as a unary string.

\begin{lemma}
\label{lem:polysize}
	Given an arbitrary first-order modal formula $\phi\in \QMLone$. If $\phi$ is satisfiable with respect to $\QKexp$ then, $\phi$ can be satisfied in a model $\M=(\F,D,\dom,I)$ such that,
	\begin{equation}
	\label{eq:polysize}
		\card{\dom(w)}\ \leq\ nm(C+1), \qquad \mbox{for all $w\in W$,}
	\end{equation}
where $n=\card{\sub{\phi}}$, $m=\md{\phi}$ and $C=\cpt{\phi}$.
\end{lemma}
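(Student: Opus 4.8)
The plan is to establish the polynomial domain bound by a ``type-counting'' argument over expanding domain models, in the same spirit as the quasimodel analysis of Section~\ref{sec:constant}, but exploiting the asymmetry of expanding domains to keep everything polynomial. Suppose $\M=(\F,D,\dom,\I)$ is an expanding domain model with $\F=(W,R)$ an irreflexive, intransitive tree of depth $\leq m$, and $\M,w_0\models^{a_0}\phi$. For each world $w\in W$ and each object $a\in\dom(w)$, consider the one-variable type $\tp{w}{a}=\{\xi\in\sub{\phi}:\M,w\models^a\xi\}$. The key observation is that, since domains only expand along $R$, an object $a\in\dom(w)$ persists in every descendant of $w$, so a single object carries a whole coherent \emph{run} of types down the subtree rooted at $w$. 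The formula $\phi$ only ``notices'' the multiplicity of a type up to the threshold $C+1$, so at each world it suffices to retain, for each realised type $t\in T_w$, at most $C+1$ witnessing objects; the remainder can be discarded.

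The construction I would carry out is the following. Working top-down from the root, I would select a small set $D'\subseteq D$ of objects to retain. At each world $w$, and for each $\D\xi\in\sub{\phi}$ that needs a witness at a type realised at $w$, saturation supplies a successor $v$ and an object already in $\dom(w)$ (hence in $\dom(v)$ by expansion) realising $\xi$ at $v$; but to guarantee the \emph{counting} conditions $\exists[\leq c]x\,\xi$ are correctly reflected, I would instead argue directly: fix the branch from the root to $w$ and count how many distinct ``demands'' can be active. At each of the $\leq m$ levels there are at most $n$ subformulas of the form $\D\xi$, and for each we keep a bounded pool of witnesses; more carefully, at each world the set of realised types has size $\leq 2^n$, which is too big, so the right bookkeeping is: the number of \emph{distinct} types we are ever forced to keep a witness for, along a single root-to-leaf branch, is at most $nm$ (one per pair of a subformula-witness demand and a level), and for each such type we keep at most $C+1$ copies, giving $\card{\dom(w)}\le nm(C+1)$. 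I would then define $\M'$ by restricting every $\dom'(w)$ to $D'\cap\dom(w)$ (still expanding, since intersecting a chain of supersets with a fixed set preserves the chain condition) and restricting each $\I(w,P_i)$ accordingly, and prove by induction on $\psi\in\sub{\phi}$ that $\M',w\models^a\psi \iff \M,w\models^a\psi$ for all retained $a$; the quantifier case uses precisely that we kept min$(\cdot,C+1)$ representatives of each type, so subscripts up to $C$ are counted faithfully.

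The main obstacle, and the step requiring the most care, is getting the count \emph{right at every world simultaneously} while only making top-down choices: when I retain witnesses at a world $w$ to satisfy its diamonds and its counting bounds, those same objects appear in all descendants of $w$ and contribute to the type-counts there, so I must argue that the witnesses needed deeper down are either already among those inherited or can be added without blowing the bound --- i.e., that the bound $nm(C+1)$ absorbs the witnesses demanded at \emph{all} levels of a branch, not just one. The expanding-domain hypothesis is exactly what makes this work: an object chosen at a shallow world is ``free'' at every deeper world, so the total witness budget along a branch is additive in the $m$ levels rather than multiplicative across the tree, and $n$ bounds the number of diamond-demands per level. A secondary subtlety is that an object's type can \emph{change} as one moves to a successor (a diamond true at $v$ may be false at $w$), so ``keeping $C+1$ copies of a type'' must be understood level-by-level; I would phrase the argument in terms of runs (functions assigning a type to each world on a branch) and count runs rather than static types, mirroring Lemma~\ref{lem:qm}. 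Once the branch-wise budget is in place, taking $D'$ to be the union over all branches of the retained pools and checking the induction completes the proof.
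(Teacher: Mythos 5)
Your high-level strategy is the same as the paper's: retain only boundedly many witnesses per subformula at each level, rely on expanding domains so that objects persist and $\D$-subformulas need no fresh witnesses, and let the pools accumulate additively along a branch to reach $nm(C+1)$. However, two of your concrete steps fail as stated. First, your retention policy is per-type: you keep, for each subformula-demand and level, one selected type and at most $C+1$ copies of it. This does not preserve the counting conditions when the satisfiers of some $\xi$ are spread over many types, each realized only a few times: to keep $\neg\exists[\leq c]x\,\xi$ true at $w$ you must retain $\min\big(\card{\{a\in\dom(w):\M,w\models^a\xi\}},\,C+1\big)$ objects satisfying $\xi$, and these may have pairwise distinct types, so keeping copies of a single type can leave you with a single $\xi$-witness and flip the quantifier's truth value. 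The paper's bookkeeping is per-\emph{subformula}, not per-type: for each $\psi\in\sub{\phi}$ it fixes $\min(\cdot,C+1)$ distinct elements of $\dom(w)$ satisfying $\psi$, whatever their types, yielding a pool $D_w$ with $\card{D_w}\leq n(C+1)$; no types or runs are needed, and the truth lemma is a direct pointwise induction on formulas (incidentally, the demands that force retention are exactly the counting quantifiers, not the diamonds, though since $n$ bounds all subformulas the arithmetic is unchanged).

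Second, your final assembly destroys the bound you are trying to prove: you take $D'$ to be the union of the retained pools over \emph{all} branches and set $\dom'(w)=D'\cap\dom(w)$. Witnesses retained for sibling branches or for descendants of $w$ can perfectly well lie in $\dom(w)$ --- for instance every pool may consist of elements of $\dom(r)$, which is contained in every domain --- so $\card{\dom'(r)}$ can be of the order of (number of worlds)$\,\cdot\, n(C+1)$, and the width of the tree is not polynomially bounded. The per-world bound only follows if the domains are assembled branch-locally, which is what the paper does: $\dom'(r)=\{a_0\}\cup D_r$ and $\dom'(w)=\dom'(v)\cup D_w$ where $v$ is the unique predecessor of $w$. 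This is expanding by construction and grows by at most $n(C+1)$ per level, which is precisely the ``additive along a branch'' budget you yourself identified; your global-union step contradicts it. With these two repairs your argument coincides with the paper's proof.
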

\begin{proof}
Suppose that $\phi\in \QMLone[\ell]$ is satisfiable with respect to $\QK$. Then $\M,r\models^{a_0} \phi$ for some expanding first-order Kripke model $\M=(\F,D,\dom,\I)$, where $\F=(W,R)$ is an irreflexive, intransitive tree of depth at most $m=\md{\phi}$ with root $r\in W$. 
For each $w\in W$, and $\psi \in \sub{\phi}$, let 
\begin{equation*}
	n(w,\psi) \ = \ \min\left(\card{\{a\in \dom(w) : \psi\in \tp{w}{a}\}}, \ C+1\right)
\end{equation*}
count the number of elements in $\dom(w)$ satisfying $\psi$, up to a maximum of $(C+1)$, beyond which $\phi$ lacks that vocabulary to discern. For each $i< n(w,\phi)$, choose some fixed witness $a_{\psi,i}\in \dom(w)$ such that $\psi\in \tp{w}{a_{\psi,i}}$, and $a_{\psi,i}\not=a_{\psi,j}$, for $i\not=j$. 
We then define a `small' subset $D_w\subseteq \dom(w)$, by taking
\begin{equation*}
	D_w \ = \ \{a_{\psi,i}\in \dom(w) \ : \ \psi\in \sub{\phi} \mbox{ and } i<n(w,\psi)\} 
\end{equation*}
for all $w\in W$. It follows that $\card{D_w}\leq n(C+1)$, for all $w\in W$.
%
We may now define a new domain function $\dom':W\to D$, inductively, by taking
\begin{equation*}
	\dom'(r) \ = \ \{a_0\}\cup D_r
	 \qquad \mbox{and} \qquad 
	 \dom'(w) = \dom(v) \cup D_w
\end{equation*}
where $v$ is the (unique) predecessor of $w$, so that  $\dom'(v) \subseteq \dom'(w)$ whenever $v R w$. Furthermore, we note that $\card{\dom'(w)} \leq \card{\dom'(v)} + n(C+1)$, from which we can may infer that $\card{\dom'(w)} \leq nm(C+1)$, as required.

\smallskip
\noindent
We define a new model $\M'=(\F,D,\dom',\I')$, over $\F$ by taking $\I'(w,P_i) = \I(w,P_i) \cap \dom'(w)$, for all $w\in W$ and all predicate symbols $P_i\in\pred$ occurring in $\phi$. 

\smallskip

\noindent
It follows from a straightforward induction on this size of $\psi\in \sub{\phi}$, that
\begin{equation*}
	\M',w\models^a \psi \qquad \iff \qquad \M,w\models^a \psi
\end{equation*}
for all $w\in W$ and $a\in \dom'(w)$. 
The only non-trivial case is where $\psi$ is of the form $\exists[\leq c]x \xi$, for some $\xi\in \sub{\phi}$ and $c\in \bbN$. In which case we have that:
\begin{itemize}\small
\begin{draftproof}
	\item[--] \emph{Case $\psi=P_i(x)$}: \; We have that
	\begin{eqnarray*}
		\M',w\models^a P_i(x) & \iff & a\in \I'(w,P_i)\\
		& \iff & a \in \I(w,P_i) \cap \dom'(w)\\
		& \iff & \M,w\models^a P_i(x)
	\end{eqnarray*}
	
	\item[--] \emph{Case $\psi=\neg\xi$}: \; We have that
	\begin{eqnarray*}
		\M',w\models^a \neg \xi & \iff & \M',w\not\models^a \xi\\
		& \iff[(I.H.)] & \M,w\not\models^a \xi\\
		& \iff & \M,w\models^a \neg \xi
	\end{eqnarray*}
	
	\item[--] \emph{Case $\psi=(\xi_1\land \xi_2)$}: \; We have that
	\begin{eqnarray*}
		\M',w\models^a \neg (\xi_1 \land \xi_2) & \iff & \M',w\models^a \xi_1 \mbox{ and } \M',w\models^a \xi_2\\
		& \iff[(I.H.)] & \M,w\models^a \xi_1 \mbox{ and } \M,w\models^a \xi_2\\
		& \iff & \M,w\models^a (\xi_1 \land \xi_2)
	\end{eqnarray*}

	\item[--] \emph{Case $\psi=\D\xi$}: \; We have that
	\begin{eqnarray*}
		\M',w\models^a \D \xi & \iff & \exists v\in W;\; w R v \mbox{ and }\M',v\models^a \xi\\
		& \iff[(I.H.)] & \exists v\in W;\; w R v \mbox{ and }\M,v\models^a \xi\\
		& \iff & \M,w\models^a \D \xi
	\end{eqnarray*}
	\end{draftproof}
	
	\item[--] \emph{Case $\psi=(\exists[\leq c] x\ \xi)$}: \;
	\begin{eqnarray*}
		\M',w\models^a (\exists[\leq c]x\ \xi)
			& \iff & \card{\{b\in D_w' \ : \ \M',w\models^b \xi\}} \leq c\\
			& \iff[(I.H.)] & \card{\{b\in D_w' \ : \ \M,w\models^b \xi\}} \leq c\\
			& \iff & \card{\{b\in D_w \ : \ \M,w\models^b \xi\}} \leq c \qquad \mbox{since $c<C+1$}\\
			& \iff & \M,w\models^a (\exists[\leq c] x \ \xi)
	\end{eqnarray*}
\end{itemize}
Hence, it follows that $\M',r\models^{a_0}\phi$, as required.
\end{proof}

Note that, in general, the bound given in \eqref{eq:polysize} is exponential in the size of $\phi$ owing to the binary encoding of subscripts. However, if $\phi\in \QMLone[\ell]$ belongs to any of the quantifier bound fragments, then $C<\ell$ is at most constant. Likewise, if the quantifiers are encoded as unary strings, then $C$ is also at most linear in the size of $\phi$. 

\begin{corollary}
	Each of the fragments $\QKexp\cap \QMLone[\ell]$ and $\QKexp\cap\QMLoneun$ possess the poly-size domain property.
\end{corollary}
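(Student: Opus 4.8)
The plan is to obtain the corollary as an immediate consequence of Lemma~\ref{lem:polysize}. That lemma already shows that every $\phi\in\QMLone$ satisfiable with respect to $\QKexp$ is satisfiable in a model whose domains all have size at most $nm(C+1)$, where $n=\card{\sub{\phi}}$, $m=\md{\phi}$ and $C=\cpt{\phi}$. Since $n,m\le\size{\phi}$ for every formula (immediate from the definitions of the subformula set and the modal depth), the entire content of the corollary is the observation that, in each of the two named fragments, the capacity $C$ is also polynomially bounded in $\size{\phi}$ --- this being the one quantity for which polynomial boundedness can fail under an unrestricted binary encoding.

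First I would treat the bounded fragment $\QKexp\cap\QMLone[\ell]$: by definition no quantifier subscript occurring in such a $\phi$ exceeds $\ell$, so $C=\cpt{\phi}\le\ell$ is a constant independent of $\phi$, and Lemma~\ref{lem:polysize} then gives a model with $\card{\dom(w)}\le(\ell+1)\cdot\size{\phi}^{2}=O(\size{\phi}^{2})$ for all $w\in W$. Next I would treat $\QKexp\cap\QMLoneun$: under the unary encoding each subscript $c$ occupies $c$ symbols of $\phi$, so the largest subscript satisfies $C=\cpt{\phi}\le\size{\phi}$; one should note here that Lemma~\ref{lem:polysize} is a purely model-theoretic statement about the syntax tree of $\phi$ and hence applies verbatim to a unary-encoded $\phi$ (equivalently, one may first transcribe all subscripts to binary, which only decreases $\size{\phi}$, apply the lemma, and read the bound back off in terms of the original unary size). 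This yields $\card{\dom(w)}\le\size{\phi}^{2}(\size{\phi}+1)=O(\size{\phi}^{3})$.

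In both cases the resulting bound on $\card{\dom(w)}$ is polynomial in $\size{\phi}$, which is precisely the poly-size domain property, completing the proof. I do not expect a genuine obstacle here; the only point requiring care is bookkeeping about the encoding --- the bound $nm(C+1)$ of Lemma~\ref{lem:polysize} really is exponential for arbitrary binary-encoded $\phi\in\QMLone$ (as noted after that lemma), and it is exactly the hypotheses $\phi\in\QMLone[\ell]$, respectively $\phi\in\QMLoneun$, that collapse the factor $C+1$ down to a constant, respectively a linear term, in $\size{\phi}$.
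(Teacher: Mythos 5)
Your proposal is correct and follows essentially the same route as the paper: apply Lemma~\ref{lem:polysize} and observe that the factor $C+1$ in the bound $nm(C+1)$ is constant for $\phi\in\QMLone[\ell]$ and at most linear in $\size{\phi}$ under the unary encoding, so the domain bound becomes polynomial. Your extra bookkeeping about encodings is harmless but not needed beyond the paper's own one-line observation preceding the corollary.
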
  

We can exploit this polysize domain property to construct a tableau algorithm, in the style of Ladner's $\Kworld$ algorithm~\cite{Ladner1977}, that answers the satisfiability problem using at most polynomial space. 
%
%
For succinctness, the approach taken here follows the presentation given by Spaan~\cite{Spaan1993}.

\newcommand{\indexset}[1]{I_{#1}}
\newcommand{\bbZ}{\mathbb{Z}}
\begin{theorem}
\label{thm:QKexp-bounded}
Let $\mathcal{K}\subseteq \QML$ be any fragment of $\QML$ such that $\QKexp\cap \mathcal{K}$ has the pol-sized domain property. Then the satisfiability problem for $\QK\cap\mathcal{K}$ is \PSpace-complete.
\end{theorem}
\begin{theorem}
\label{thm:QKexp-bounded}
Let $\mathcal{K}\subseteq \QML$ be any fragment of $\QML$ such that $\QKexp\cap \mathcal{K}$ has the pol-sized domain property. Then the satisfiability problem for $\QK\cap\mathcal{K}$ is \PSpace-complete.
\end{theorem}
\begin{proof}
In what follows, let $\indexset{n}=\{0,\dots, (n-1)\}$ denote the set of natural numbers $<n$.
We define a recursive function $\QKworld$ that takes five parameters: a natural number $k\in\bbN$ controlling the recursion depth and ensuring termination, a set of fomulas $\Sigma$, two positive integers $N,t\in \bbZ^+$, and a labelling function $\lambda:\indexset{t}\to \Pow{\Sigma}$ associating each integer $i\leq t$ with a set of formulas $\lambda(i)\subseteq \Sigma$. 
Note that we can encode $\lambda$ succintly as a list of pairs
\begin{equation*}
G(\lambda) = \{(a,\psi) : \psi\in \lambda(a)\} \subseteq \indexset{t}\times\Sigma,
\end{equation*}
whose size is at most $t\times \card{\Sigma}$. We shall refer to $G(\lambda)$ as the `graph' of $\lambda$.

\newcommand{\true}{\textbf{TRUE}}
\newcommand{\false}{\textbf{FALSE}}
\noindent
For $k<\omega$, the function $\QKworld[\Sigma,N](k,t,\lambda)$ returns \true\ if and only if the following conditions are met\footnote{For completeness, a more detailed description of how $\QKworld$ may be implemented is described in Appendix~\ref{app:QKworld}.}:
\begin{enumerate}[label={\bf (tab\arabic*)}]\itemindent=1em\itemsep=7pt
	\item \label{tab:boolean} For each $i<t$, we require that $\lambda(i)$ is Boolean saturated subset of $\Sigma$, which is to say that:
	\begin{itemize}\itemsep=5pt
		\item[(i)] $\neg\psi\in \lambda(i)$ if and only if $\psi\not\in \lambda(i)$, for all $\neg\psi\in \Sigma$, and
		\item[(ii)] $(\psi_1\land\psi_2)\in \lambda(i)$ if and only if $\{\psi_1,\psi_2\}\subseteq \lambda(i)$, for all $(\psi_1\land\psi_2)\in \Sigma$, 
	\end{itemize}
	\item \label{tab:counting} For each $i<t$ and $(\exists[\leq c]x \ \psi)\in \Sigma$, we require that
	\begin{equation*}
		(\exists[\leq c]x \ \psi) \in \lambda(i) \qquad \iff \qquad \card{\{j< t : \psi\in \lambda(j)\}}\ \leq\ c
	\end{equation*}
	\item \label{tab:successor} If $k>0$, then for each $i<t$ and $\D\psi\in \lambda(i)$, there exists some $t'\in \{t,\dots,N\}$ and a labelling function $\lambda':\indexset{t'}\to \Pow{\Sigma}$, such that:
	\begin{itemize}
		\item[(i)] $\psi\in \lambda'(i)$,
		\item[(ii)] $\xi\in \lambda'(j)$ only if $\D\xi\in \lambda(j)$, for all $\D\xi\in\Sigma$ and $j\leq t$,
		\item[(iii)] And $\QKworld[\Sigma,N](\psi;k-1,t',\lambda')$ returns \true.
	\end{itemize}
\end{enumerate}	
\medskip

As with the $\Kworld$ function as described by Spaan~\cite{Spaan1993}, the above function is non-deterministic as it must explore all possible choices for the parameters $t'$ and $\lambda'$ demanded by condition \ref{tab:successor}. However, by appealing to Savitch's theorem~\cite{Savitch1970}, we require only that the algorithm uses at most a polynomial amount of space along any possible computation.

Each of the checks for \ref{tab:boolean} can be performed `in-place' on the graph of $\lambda$. The only part that requires further scrutiny is the additional check on the consistency of the counting quantifiers in \ref{tab:counting}, which can be achieved by with the aid of a counter that is incremented for every $\psi\in \lambda(j)$ located from among those $j< t$ and then compared against whether $(\exists[\leq c] \psi)\in \lambda(i)$, for each $i<t$. The additional space requirements for such a counter is at most linear in the size of $t$ (or even logarithmic if the counter is encoded in binary). 
If $k=0$ then there is nothing to check for \ref{tab:successor}. Otherwise, each of the checks \ref{tab:successor}(i)--\ref{tab:successor}(ii) can also be performed `in-place' on the graphs of $\lambda$ and $\lambda'$, both of which are at most polynomial in the size of $N$ and $\card{\Sigma}$. Finally, the depth of the recursion required for \ref{tab:successor}(iii) is bounded by $k$, and so it follows that \ref{tab:successor} can be perfomed in polynomial space.

%

For soundness and correctness, we claim that $\QKworld[\Sigma,N](k,t,\lambda)$ returns \true\ if and only if there is some expanding domain model $\M=(\F,D,\dom,\I)$, with domain $D=\indexset{N}$, such that:
\begin{enumerate}[label={\bf (I.H.\arabic*)}]\itemindent=1cm
	\item \label{IH:tab1} $\F$ is an irreflexive, intransitive tree of depth $\leq k$,
	\item \label{IH:tab2} $\card{\dom(r)}= t$, where $r$ is the root of $\F$, 
	\item \label{IH:tab3} $M,r\models^{j} \psi$ if and only if $\psi\in \lambda(j)$, for all $j<t$ and $\psi\in \Sigma$.
\end{enumerate} 

\smallskip

We prove this by induction on $k$, so let $k\geq 0$ be fixed, and if $k>0$ then suppose that the claim holds for all $m<k$:

\begin{itemize}\small
	\lefttoright Suppose that $\QKworld[\Sigma,N](k,t,\lambda)$ returns \true. If $k>0$ then, by \ref{tab:successor}, for each $i<t$ and $\D\psi\in \lambda(i)$, there must be some $t\in \bbN$ such that $t\leq t'\leq N$ and $\lambda_{i,\psi}:\indexset{t_{i,\psi}} \to \Pow{\Sigma}$ such that $\psi\in \lambda_{i,\psi}(i)$ and $\QKworld[\Sigma,N](k-1,t_{i,\psi},\lambda_{i,\psi})$ returns \true. 
	By the induction hypothesis, there is some expanding model $\M_{i,\psi}=(\F_{i,\psi},D_{i,\psi},\dom_{i,\psi},\I_{i,\psi})$ satisfying conditions \eqref{IH:tab1}--\eqref{IH:tab3}. 
	From this (possibly empty) collection of models, we define a new expanding domain model $\M=(\F,D,\dom,\I)$, by taking $\F$ to be the result of connecting each of the trees $\F_{i,\psi}$ together at a common root $r$. 
	We define $\dom$ by taking $\dom(w)=\dom_{i,\psi}(w)$, whenever $w$ belongs to $\F_{i,\psi}$, and setting $\dom(r) = \indexset{t}$, so that $\card{\dom(r)} = t$. Similarly, let $\I(w,P_j)= \I_{i,\psi}(w,P_j)$ whenever $w$ belongs to $\F_{i,\psi}$ and let
	\begin{equation*}
		\I(r,P_j) \ = \ \{j<t : P_j(x) \in \lambda(j)\},
	\end{equation*}
	for all $P_i\in \pred$ occurring in $\phi$. 
	
It follows from this construction that $\F$ is an irreflexive, intranstive tree of depth $\leq k$, as required for \ref{IH:tab1}, while $\card{\dom(r)}=t$, as required for \ref{IH:tab2}. 
Finally, it follows from a routine induction 
that $\M,r\models^i \xi$ if and only if $\xi\in \lambda(i)$, as required for \ref{IH:tab3}.

\begin{draftproof}
\begin{itemize}
	\item For \eqref{tab:boolean}(i)
	\begin{eqnarray*}
		\neg \psi \in \lambda(i) & \iff & \M,r\models^i \neg \psi\\
		& \iff & \M,r\not\models^i \psi\\
		& \iff & \psi\not\in \lambda(i)
	\end{eqnarray*}

	\item For \eqref{tab:boolean}(ii)
	\begin{eqnarray*}
		(\psi_1\land \psi_2)\in \lambda(i) & \iff & \M,r\models^i (\psi_1\land \psi_2)\\
		& \iff & \M,r\models^i \psi_1 \mbox{ and } \M,r\models^i\psi_2\\
		& \iff & \psi_1\in \lambda(i) \mbox{ and } \psi_2\in \lambda(i)
	\end{eqnarray*}
	
	\item For \eqref{tab:counting}, we have that
		\begin{eqnarray*}
			(\exists[\leq c] \psi) \in \lambda (i) & \iff & \M,r\models^i (\exists[\leq c]x \psi)\\
			& \iff & \card{\{j \in \dom(r) : \M,r\models^j \psi\}} \leq c\\
			& \iff & \card{\{j < t : \psi \in \lambda(j)\}} \leq c
		\end{eqnarray*}
\end{itemize}	
\end{draftproof}

\righttoleft Conversely, suppose that there is a expanding domain model $\M=(\F,D,\dom,\I)$ satisfying \eqref{IH:tab1}--\eqref{IH:tab3}. It is a routine exercise to show that each $\lambda(i)$ is Boolean saturated, as required for \ref{tab:boolean}, and that $(\exists[\leq c]x \psi)\in \lambda(i)$ if and only if $\card{\{j<t : \psi\in \lambda(j)\}}\leq c$, for all $i<t$, as required for \ref{tab:counting}.

If $k=0$ then \ref{tab:successor} holds vacuously. Otherwise, suppose that $i<t$ and $\D\psi\in \lambda(i)$. By \eqref{IH:tab3}, we must have that $\M,r\models^i\D\psi$, which is to say that there is some $w\in W$ such that $rR w$ and $\M,w\models^i \psi$. 
Choose $t'=\card{\dom(w)}$ and let $\eta:\indexset{t'}\to \dom(w)$ be an enumeration of $\dom(w)$ such that $\eta(i)=i$, for all $i<t$. We may then choose $\lambda':\indexset{t'}\to \Pow{\Sigma}$ by taking
\begin{equation*}
	\lambda'(j) = \{\xi\in \Sigma : \M,w\models^{\eta(j)} \xi\},
\end{equation*} 
for all $j<t'$. In particular, we have that $\psi\in \lambda'(i)$, as required to \ref{tab:successor}(i). 
Furthermore, if $\xi\in \lambda'(j)$, for some $j<t$, then by definition we have that $\M,w\models^j\xi$, from which it follows that $\M,r\models^j\D\xi$, since $j\in \dom(r)$. Consequently, it follows from \ref{IH:tab3} that that $\D\psi\in \lambda(j)$, as required for \ref{tab:successor}(ii). 
Finally, for \ref{tab:successor}(iii), we note that the submodel $\M_{i,\psi}=(\F',D,\dom',\I')$, where $\F_{i,\psi}$ is the irreflexive, intransitive subtree of $\F$ generated by $w$, satisfies conditions (i)--(iii). Whence, by the induction hypothesis, we have that $\QKworld[\Sigma,N](k-1,t',\lambda')$ returns \true, thereby satisfying \ref{tab:successor}. 
		Hence, we conclude that $\QKworld[\Sigma,N](k,t,\lambda)$ meets all the condition \ref{tab:boolean}--\ref{tab:successor}, and therefore returns \true, as required.	
\end{itemize}

By Lemma~\ref{lem:polysize}, we have that $\phi$ is satisfiable with respect to $\QKexp$ if and only if it can be satisfied in an expanding domain model based on an irreflexive, intransitive tree of depth $\leq m=\md{\phi}$, whose domains do not exceed $N=nm(C+1)$. Hence it follows that $\phi$ is satisfiable with respect to $\QKexp$ if and only if there exists some $t<N$ and some $\lambda:\indexset{t}\to \Pow{\Sigma}$ such that $\QKworld[\Sigma,N](m,t,\lambda)$ returns \true, where $\Sigma=\sub{\phi}$, thereby completing the proof.
\end{proof}

It is noteworthy that despite the lofty \NExpTime-completness of the satisfiability problem for the one-variable counting-free fragment \mbox{$\QK\cap\QMLonecf$} over constant domains, each of the fragments $\QKexp\cap \QMLone[\ell]$ over expanding domains shares the same computational complexity as the underlying propositional modal logic $\K$, for $\ell<\omega$~\cite{Ladner1977}. 
However, despite this, the countable union of each of these \PSpace-complete fragments $\QKexp\cap\QMLone[\ell]$, for $\ell<\omega$, results in the full one-variable fragment $\QKexp\cap \QMLone$, whose satisfiability problem is, once again, \NExpTime-complete.

Note that despite the existence of tableau algorithms for $\QKexp\cap\QMLcf$ in the current literature~\cite{FittingMendelsohn1998}, the question as to the computational complexity of any of its decidable fragments (and of the one-variable fragment, in particular) appears to have been, hitherto, unexamined. 

Consequently, Theorem~\ref{thm:QKexp-bounded} also offers the following new result for the one-variable, counting-free fragment of $\QKexp$.

\begin{corollary}
	The satisfiability problem for the one-variable, counting-free fragment \mbox{$\QKexp\cap \QMLonecf$} is \PSpace-complete.
\end{corollary}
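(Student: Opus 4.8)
The plan is to place the one-variable, counting-free fragment inside the bounded fragment already handled by Lemma~\ref{lem:polysize} and Theorem~\ref{thm:QKexp-bounded}, and then to supply a matching lower bound via the standard translation of propositional modal logic.

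For the upper bound, recall that $\QMLonecf$ is identified with a sublogic of $\QMLone$ through the abbreviations $\exists x\,\psi := \neg\exists[\leq 0]x\,\psi$ and $\forall x\,\psi := \exists[\leq 0]x\,\neg\psi$. In particular, the only quantifier subscript occurring in any $\phi\in\QMLonecf$ is $0$, so $\cpt{\phi}=0$ and $\phi\in\QMLone[0]$. Lemma~\ref{lem:polysize} then tells us that any $\phi\in\QMLonecf$ that is satisfiable with respect to $\QKexp$ is satisfiable in a model whose domains have size at most $nm(C+1)=nm$, which is polynomial in $\size{\phi}$; hence $\QKexp\cap\QMLonecf$ has the poly-size domain property, and Theorem~\ref{thm:QKexp-bounded} yields a \PSpace\ decision procedure.

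For the lower bound, I would reduce the satisfiability problem for propositional modal logic $\K$, which is \PSpace-hard by Ladner~\cite{Ladner1977}. Given a propositional modal formula $\alpha$, fix the single first-order variable $x$ and, for each propositional variable $p$ appearing in $\alpha$, a distinct monadic predicate symbol $P$; let $\alpha^\ast$ be the result of replacing every occurrence of $p$ by the atom $P(x)$. Then $\alpha^\ast\in\QMLonecf$ and $\size{\alpha^\ast}\in O(\size{\alpha})$. It remains to check that $\alpha$ is satisfiable in some Kripke model iff $\alpha^\ast$ is satisfiable with respect to $\QKexp$: given a Kripke model of $\alpha$, equip every world with the constant singleton domain $\{a\}$ (which is trivially expanding) and put $a\in\I(w,P)$ exactly when $p$ holds at $w$; conversely, from a model of $\alpha^\ast$ with $x$ assigned some $a$, read off at each world the propositional valuation given by membership of $a$ in the interpretations of the $P$'s. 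A routine induction on the structure of $\alpha$ confirms the equivalence, so $\QKexp\cap\QMLonecf$ is \PSpace-hard, and together with the upper bound, \PSpace-complete.

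I do not expect a genuine obstacle here: the argument is essentially bookkeeping. The two points that merit a moment's care are (i) confirming that unfolding the counting-free abbreviations leaves every formula with capacity $0$, so that the bound of Lemma~\ref{lem:polysize} is genuinely polynomial rather than exponential; and (ii) noting that the hardness half of \PSpace-completeness is not supplied automatically by Theorem~\ref{thm:QKexp-bounded} for an arbitrary fragment $\mathcal{K}$, but must be recovered from Ladner's theorem via the translation above — which is immediate once one observes that constant-domain (hence expanding-domain) models of $\alpha^\ast$ correspond one-to-one with Kripke models of $\alpha$.
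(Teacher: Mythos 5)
Your proposal is correct and follows essentially the same route as the paper: the counting-free fragment sits inside $\QMLone[0]$ with capacity $0$, so Lemma~\ref{lem:polysize} gives the poly-size domain property and Theorem~\ref{thm:QKexp-bounded} gives the \PSpace\ upper bound, while hardness comes from embedding propositional $\K$ (Ladner). The only difference is that you spell out the lower-bound reduction explicitly, which the paper leaves implicit in its appeal to the \PSpace-completeness of $\K$.
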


\section{Logics with bounded quantifiers over decreasing domains}
\label{sec:decreasing}

Unlike for logics over expanding domains, we cannot escape the possibility that our decreasing models may require exponentially large domains; that is to say, they do not possess the poly-size domain property. This is true even if we restrict ourselves to the one-variable counting-free fragment $\QMLonecf$, as evidenced by the following formula, adapted from~\cite{Marx1999}:
\begin{multline*}
	\theta_n := \bigwedge_{k=0}^n \forall x \B^k \Big((\D\exists x P_k(x) \land \D\exists x \neg P_k(x)) \\ \land\bigwedge_{\ell<k} (P_\ell(x) \to \B\forall x P_\ell(x)) \land (\neg P_\ell(x) \to \B\forall x \neg P_\ell(x))\Big). 
\end{multline*}
Note that each $\theta_n$, for $n<\omega$, can only be satisfied in models in which the domain at the root node is exponential in the size of $\theta_n$. 
As a result, we cannot emulate the proof of Theorem~\ref{thm:QKexp-bounded} over decreasing domain models to provide us with a \PSpace\ upper-bound on the complexity of $\QKdec\cap\QMLone[\ell]$, for $\ell<\omega$. 
Indeed, we may show that over decreasing domains, the satisfiability problem for the one-variable fragment is \ExpTime-hard, even if we restrict ourself to using only the counting-free quantifiers $\exists x$ and $\forall x$. 
Despite the lack of counting quantifiers, the comutational complexity of this fragments appears to have not yet been uncovered inthe current literature.

\begin{theorem}
\label{thm:QKdec-bounded}
The satisfiability problem for the one-variable counting-free fragment $\QKdec\cap \QMLone[0]$ is \ExpTime-hard.
\end{theorem}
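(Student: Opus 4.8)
The plan is to establish \ExpTime-hardness by reduction from the global consequence / satisfiability problem for the propositional modal logic $\K$, which is known to be \ExpTime-complete, or equivalently from the non-emptiness problem for alternating polynomial-space Turing machines (via a suitable tiling-style or looping-automaton argument). The key observation that makes decreasing domains special is the one already flagged in the excerpt around $\theta_n$: over decreasing domains we can build up, at the root, a domain of \emph{exponential} size, each element encoded by a fixed-length bit-vector of monadic predicates $P_0,\dots,P_{n-1}$ whose truth values are \emph{frozen} (by $P_\ell(x)\to\Box\forall x P_\ell(x)$ and its dual) from the root onwards. Because the domain can only shrink, an element that survives to a world $w$ carries its entire bit-vector identity down with it, so a single first-order variable ranging over the domain effectively gives us a register holding an exponentially large address that is stable along every branch. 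This is exactly the extra power one needs to simulate an \ExpTime\ computation in \PSpace-many "syntactic" resources.

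Concretely, first I would fix an \ExpTime\ problem — say, acceptance for an alternating Turing machine running in space $p(N)$ on input of length $N$ — and let $n$ be polynomial in $N$ with $2^n$ exceeding the number of tape configurations one must track. I would write down a one-variable, counting-free formula $\chi$ that (i) forces, via a $\theta_n$-style conjunct, the root domain to contain at least $2^n$ elements with distinct frozen bit-vectors; (ii) uses the modal dimension to represent the computation tree of the ATM, with $\Box$ capturing universal branching and $\Diamond$ existential branching, and the frozen first-order element tracking which tape cell / head position / configuration index is currently "addressed"; (iii) encodes the transition relation and the tape contents using monadic predicates applied to $x$, relativized as needed so that statements about "the cell addressed by $x$" propagate correctly down the (decreasing) domain. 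The counting-free quantifiers $\exists x$ and $\forall x$ suffice: $\forall x$ lets us assert a property of \emph{every} addressed cell at a world, and $\exists x$ lets us pick out the cell the head is scanning.

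The main obstacle — and the part that needs the most care — is the \emph{synchronization} between the modal accessibility steps and the first-order addressing, together with ensuring the bit-vector identities really are preserved and distinct on the fragment of the model that matters. One must be careful that decreasing domains only guarantee $\dom(v)\subseteq\dom(u)$ when $uRv$, so a branch may lose elements; the encoding must therefore either re-assert existence of all needed addresses at each level (paying only a polynomial price in formula size since the addresses are described schematically, not enumerated), or arrange that the conjuncts of the form $\forall x\Box^k(\dots)$ with $k\le\md{\chi}$ re-generate the full address space at every relevant depth, mimicking $\theta_n$. I would then prove the two directions: a satisfying decreasing-domain model yields an accepting ATM computation by reading off, along each branch, the sequence of addressed configurations; conversely, an accepting computation tree is turned into a model whose worlds are the nodes of that tree, whose domain at the root is $\{0,\dots,2^n-1\}$ with $P_\ell$ interpreted by the $\ell$-th bit, and whose domain stays constant (a decreasing domain, trivially) while the modal structure mirrors the computation. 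Since $\chi$ has size polynomial in $N$ and the reduction is clearly polynomial-time computable, \ExpTime-hardness of $\QKdec\cap\QMLone[0]$ follows.
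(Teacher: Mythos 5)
There is a genuine gap, and it is the central one: your encoding places the computation tree of the alternating machine along the modal accessibility relation ($\B$ for universal branching, $\D$ for existential branching), but an alternating \emph{polynomial-space} machine runs for up to exponentially many steps, while your formula $\chi$ has only polynomial modal depth. Over $\QKdec$ a satisfiable formula is satisfiable in an irreflexive, intransitive tree of depth at most $\md{\chi}$, so nothing in $\chi$ can constrain the model beyond polynomially many modal steps; your schematic conjuncts $\forall x \B^{k}(\dots)$ with $k\le\md{\chi}$ do not change this. Consequently the left-to-right direction fails: a model of $\chi$ need not encode any computation past depth $\md{\chi}$, and what your reduction actually captures is alternating polynomial \emph{time}, i.e.\ \PSpace, not \ExpTime. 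The exponential root domain with frozen bit-vectors does not by itself repair this, because the addresses only give you exponentially many objects, not a way to chain together exponentially many transition steps under a polynomial-depth formula. What is missing is a mechanism playing the role of a universal/global modality that lets a small formula impose constraints across the entire (exponential-size) configuration graph at once.

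That mechanism is exactly what the paper's proof supplies, by a quite different route: it reduces from satisfiability of $\Ku$ (propositional $\K$ with a universal modality), which is \ExpTime-complete, and packs the whole $\Ku$-model into the \emph{domain of a single root world}, so that $\forall x$ at the root simulates the universal modality. Each $R$-edge $a\,R\,b$ of the $\Ku$-model is encoded by one modal successor of the root in which unary predicates $S$ and $T$ mark the source and target; the decreasing-domain condition is what guarantees that the target $b$ still belongs to the root's domain, and auxiliary predicates $Q_\psi$ together with the modal-depth-one conjuncts $\forall x(\psi^\dagger\to\B Q_\psi(x))$ and $\forall x(\D Q_\psi(x)\to\psi^\dagger)$ shuttle truth of subformulas between the root and the edge-worlds. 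Note that the modal dimension is used only to encode single edges, never to carry the length of a computation. If you want to keep a direct ATM/tiling reduction you would have to restructure it along these lines (configurations as root-domain elements, transitions as shallow modal successors, $\forall x$/$\exists x$ as the global quantifier), at which point you are essentially re-proving \ExpTime-hardness of $\Ku$ inside your own argument rather than invoking it.
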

\begin{proof}
The proof is via a reduction from the \ExpTime-complete satisfiability problem for the propositional bimodal logic $\Ku$, characterised by the class of all bimodal Kripke frames $\F=(W,R_1,R_2)$, where $R_2=W\times W$ is the universal relation on $W$~\cite{Spaan1993} (or \cite[Theorem 1.27]{GKWZ03}).

To this end, let $\phi\in \MLu$ an arbitrary bimodal propositional formula in the modal language having a universal modality. Let $S,T\in \pred$ be unary predicate symbols, and for each propositional variable $p_k\in \sub{\phi}$ we associate a fresh unary predicate symbol $P_k\in \pred$. In addition to these, we reserve a fresh auxilliary predicate symbol $Q_\psi\in \pred$, for each subformula $\psi\in \sub{\phi}$. We then define the translation $(\cdot)^\dagger:\sub{\phi}\to \QMLonecf$, by taking
\begin{gather*}
	p_k^\dagger = P_k(x), \qquad 
	(\neg \psi)^\dagger = \neg \psi^\dagger, \qquad
	(\psi_1\land \psi_2)^\dagger = \psi_1^\dagger \land \psi_2^\dagger,\\
	(\D\psi)^\dagger = \D\big(S(x) \land \exists x (T(x) \land  Q_\psi(x))\big),\qquad
	(\D_u\psi)^\dagger = \exists x\ \psi^\dagger.
\end{gather*}
Furthermore, take $\zeta$ to be the conjunction of the following formulas:
\begin{align}
\label{eq:exptime1}
	& \bigwedge_{\psi\in \sub{\phi}}\forall x \big(\psi^\dagger \to \B Q_\psi(x)\big)\\[5pt]
	\label{eq:exptime2}
	& \bigwedge_{\psi\in \sub{\phi}}\forall x \big(\D Q_\psi(x)\to \psi^\dagger\big).
\end{align}

We claim that $\phi$ is satisfiable with respect to $\Ku$ if and only if $(\zeta\land \phi^\dagger)$ is satisfiable with respect to $\QKdec$.

\begin{itemize}
	\righttoleft Suppose that $(\zeta\land \phi^\dagger)$ is satisfiable with repsect to $\QKdec$, which is to say that $\M,r\models^{a_0} \zeta$ and $\M,r\models^{a_0} \phi^\dagger$, for some first-order decreasing model $\M=(\F,D,\dom,\I)$, where $\F$ is an irreflexive intransitive tree.
	
We define a new Kripke frame $\F'=(W',R')$ by taking $W'=\dom(r)$ and 
	\begin{equation*}
		a R' b \;\iff \;\exists u \in W \big(r R u\ \mbox{and}\ a\in \I(u,S) \ \mbox{and}\ b\in \I(u,T)\big)
	\end{equation*}	
	for all $a,b\in W'$. 
	We define a propositional valuation $\V$ by taking $\V(p_i)=\I(r,P_i)$, for all $p_i\in \sub{\phi}$.
	
	\smallskip	
	\noindent
	We prove by induction on the size of $\psi\in\sub{\phi}$ that
	\begin{equation}
	\tag{I.H.3}
	\label{eq:IH1}
		\M',a\models \psi \qquad \iff \qquad \M,r\models^{a} \psi^\dagger
	\end{equation}
	for all $a\in W'$. 
The cases where $\psi$ is an atomic formula or a Boolean combination of smaller formulas are straightforward and follow from the definitions. So suppose that $\psi$ is of the form $\D\xi$ or $\exists[\leq c]x \xi$, for some $\xi\in \sub{\phi}$ and $c\in \bbN$. In which case we have the following:
	
	\begin{itemize}\small
	\begin{draftproof}
		\item \emph{Case $\psi=p_i$}: \; This follows immediately from the definition of $\I(w,P_i)$ since 
		\begin{equation*}
		\M',a\models p_i \;\;
			\iff \;\; a\in \V(p_i)\;\;
			\iff[(def)] \;\; a \in \I(r,P_i)\;\;
			\iff \;\; \M,r\models^{a} P_i(x).
		\end{equation*}
		
		\item \emph{Case $\psi=\neg\xi$}: \; We have that 
		\begin{eqnarray*}
			\M',a\models \neg\xi
				& \iff & \M',a\not\models \xi\\
				& \iff[\eqref{eq:IH1}] & \M',r\not\models^{a} \xi^\dagger\\
				& \iff & \M,r\models^{a} \neg\xi^\dagger\\
				& \iff & \M,r\models^{a} (\neg\xi)^\dagger.
		\end{eqnarray*}		
		
		\item \emph{Case $\psi=(\xi_1 \land \xi_2)$}: \; We have that 
		\begin{eqnarray*}
			\M',a\models (\xi_1 \land \xi_2)
				& \iff & \M',a\models \xi_1 \mbox{ and } \M',a\models \xi_2\\
				& \iff[\eqref{eq:IH1}] & \M',r\models^{a} \xi_1^\dagger \mbox{ and } \M',r\models^{a} \xi_2^\dagger\\
				& \iff & \M,r\models^{a} (\xi_1^\dagger \land \xi_2^\dagger)\\
				& \iff & \M,r\models^{a} (\xi_1 \land \xi_2)^\dagger.
		\end{eqnarray*}		
		\end{draftproof}

		\item \emph{Case $\psi=\D_u\xi$}: \; We have that 
		\begin{eqnarray*}
			\M',a\models \D_u\xi
				& \iff & \exists b\in \dom(r);\ \M',b\models \xi\\
				& \iff[\eqref{eq:IH1}] & \exists b\in \dom(r);\ \M',r\models^{b} \xi^\dagger\\
				& \iff & \M,r\models^{a} \exists x\ P_k(x).
		\end{eqnarray*}
		
		\item \emph{Case $\psi=\D\xi$}: \; 
		If $\M',a\models \D\xi$ then there is some $b\in W'$ such that $a R'b$ and $\M',b\models \xi$. By the induction hypothesis, we have that $\M,r\models^{b} \xi^\dagger$. Whence, by \eqref{eq:exptime1}, we have that $\M,r\models^{b} \B Q_\xi(x)$. 
		By the definition of $R'$, there is some $u\in W$ such that $r R u$, $a\in \I(u,S)$ and $b\in \I(u,T)$. It follows that $\M,u\models^b T(x) \land Q_\xi(x)$ and so $\M,u\models^a S(x) \land \exists x (T(x) \land Q_\xi(x))$. Therefore, $\M,r\models^a \D(S(x) \land \exists x(T(x) \land Q_\xi(x)))$, which is to say that $\M,r\models^a (\D\xi)^\dagger$, as required.
		
		Conversely, suppose that $\M,r\models^a \D(S(x) \land \exists x (T(x) \land  Q_\xi(x)))$. Then there is some $u\in W$ such that $\M,u\models^a S(x) \land \exists x (T(x) \land Q_\xi(x))$. It follows that $a\in \I(u,S)$ and there is some $b\in \I(u,T)\subseteq \dom(u)$ such that $\M,u\models^b Q_\xi(x)$. Note that by the definition of $R'$, we have that $a R' b$. Furthermore, since $\M$ is decreasing, we have that $b\in \dom(r)\supseteq \dom(u)$, and so $\M,r\models^b \D Q_\xi(x)$. Whence, by \eqref{eq:exptime2}, we have that $\M,r\models^b \xi^\dagger$, and so it follows from the induction hypothesis that $\M',b\models \xi$. We then have that $\M',a\models \D \xi$, as required. 
	
	\end{itemize}
	
	Hence it follows that $\M',a\models \psi$ if and only if $\M,r\models^{a} \psi^\dagger$, for all $\psi\in \sub{\phi}$ and $a\in W'$. In particular, we have that $\M',a_0\not\models \phi$, which is to say that $\phi$ is satisfiable with respect to $\Ku$, as required.

	\righttoleft Suppose that $\phi$ is satisfiable with respect to $\Ku$. Then $\M,r\not\models \phi$ for some propositional Kripke model $\M=(\F_u,\V)$, where $\F=(W,R)$ is an irreflexive, intranstive tree and $\F_u=(W,R,W\times W)$. 
	
	\newcommand{\rootnode}{\circ}
	We construct a new frame $\F'=(W',R')$ by taking
	\begin{equation*}
		W' = W\cup \{\rootnode\}, \qquad \mbox{and} \qquad R' = \{(\rootnode,w) : w\in W\}
	\end{equation*}
	where $\rootnode\not\in W$ is a newly introduced root node, from which all other worlds are accessible. 
	We construct a first-order Kripke model on $\F'$ by taking $\M'=(\F',D',\dom,\I)$, where  $D' = W$, $\dom(\rootnode)=\dom(w) = D'$, for all $w\in W$, and taking:
	\begin{gather*}
	\I(u,S) = \{u\}, \qquad \I(u,T) = \{v \in W : u R v\}, \qquad \I(u,P_i)=\emptyset,\\
	\I(u,Q_\xi) = \{v\in W: \M,v\models \xi\}
	\end{gather*}
	for all $u\in W$, with
	\begin{gather*}
	\I(\rootnode,S) = \emptyset, \qquad \I(\rootnode,T) = \emptyset, \qquad \I(\rootnode,P_i)=\V(p_i),\qquad
	\I(\rootnode,Q_\xi) = \emptyset
	\end{gather*}
for all $p_i\in \sub{\phi}$.
	
\smallskip
\noindent
We prove by induction on the size of $\psi\in\sub{\phi}$ that
	\begin{equation}
	\tag{I.H.4}
	\label{IH:QKdec-RtoL}
		\M',\rootnode\models^{w} \psi^\dagger \qquad \iff \qquad \M,w\models \psi
	\end{equation}
	for all $w\in W$. 
	Again, the cases where $\psi$ is an atomic formula or a Boolean combination of smaller formulas are straightforward and follow from the definitions. So suppose that $\psi$ is of the form $\D\xi$ or $\exists[\leq c]x \xi$, for some $\xi\in \sub{\phi}$ and $c\in \bbN$. We then have the following cases:
	
	\begin{itemize}\small
	\begin{draftproof}
		\item \emph{Case $\psi=p_i$}: \; This follows immediately from the definition of $\I(\rootnode,P_i)$, since 
		\begin{equation*}
		\M',\rootnode\models^{w} P_i(x) 
				\;\; \iff \;\; w\in \I(\rootnode,P_i)
				\;\; \iff[(def)] \;\; w \in \V(p_i)
				\;\; \iff \;\; \M,w\models p_i.
		\end{equation*}
		
		\item \emph{Case $\psi=\neg\xi$}: \; We have that
		\begin{eqnarray*}
			\M',\rootnode\models^{w} (\neg\xi)^\dagger 
				& \iff & \M',\rootnode\models^{w} \neg\xi^\dagger\\
				& \iff & \M',\rootnode\not\models^{w}\xi^\dagger\\
				& \iff[\eqref{IH:QKdec-RtoL}] & \M,w\not\models \xi\\
				& \iff & \M,w\models \neg \xi
		\end{eqnarray*}
		
		\item \emph{Case $\psi=(\xi_1\land \xi_2)$}: \; We have that
		\begin{eqnarray*}
			\M',\rootnode\models^{w} (\xi_1 \land \xi_2)^\dagger 
				& \iff & \M',\rootnode\models^{w} \xi_1^\dagger \land \xi_2^\dagger\\
				& \iff & \M',\rootnode\models^{w}\xi_1^\dagger \mbox{ and } \M',\rootnode\models^{w}\xi_2^\dagger\\
				& \iff[\eqref{IH:QKdec-RtoL}] & \M,w\models \xi_1 \mbox{ and } \M,w\not\models \xi_2\\
				& \iff & \M,w\models \xi_1 \land \xi_2
		\end{eqnarray*}
		
		\item \emph{Case $\psi=\D_u\xi$}: \; We have that
		\begin{eqnarray*}
			\M',r\models^{w} (\D_u\xi)^\dagger 
				& \iff & \M',r\models^{w} \exists x\ \xi^\dagger\\
				& \iff & \exists u\in W; \; \M',r\models^{u} \xi^\dagger\\
				& \iff[\eqref{IH:QKdec-RtoL}] & \exists u\in W;\ \M,u\models\xi\\
				& \iff & \M,w\models \D_u \xi
		\end{eqnarray*}
		
		\end{draftproof}
		\item \emph{Case $\psi=\D\xi$}: \; 
	If $\M',\rootnode \models^{w} \D(S(x) \land \exists x (T(x) \land  Q_\xi(x)))$, then there is some $u\in W$ such that $\M,u\models^{w} S(x) \land \exists x(T(x) \land Q_\xi(x))$. 
	It follows from the definition of $\I(u,S)$ that $u=w$. Furthermore, there is some $v\in W$ such that $\M,u\models^{v} T(x) \land Q_\xi(x)$. 
	By the definitions of $\I(u,T)$ and $\I(u,Q_\xi)$, we have that $u R v$ and $\M,v\models \xi$. Hence $\M,u\models \D\xi$, which is to say that $\M,w\models \D\xi$, since $u=w$, as required.
	
	Conversely, suppose that $\M,w\models \D\xi$. Then there is some $v\in W$ such that $w R v$ and $\M,v\models \xi$. It follows from the definition that $w\in \I(w,S)$, $v \in \I(w,Q_\xi)$ and $v\in \I(w,T)$. We then have that $\M',w\models^{w} S(x) \land \exists x (T(x) \land Q_\xi(x))$. Furthermore, since $\rootnode R' w$, for all $w\in W$, we have that $\M',\rootnode\models^w \D(S(x) \land \exists x (T(x) \land Q_\xi(x)))$, which is to say that $\M',\rootnode\models^w (\D\xi)^\dagger$, as required.
	\end{itemize}
	
Hence it follows that $\M',\rootnode\models^{w} \psi^\dagger$ if and only if $\M,w\models \psi$, for all $\psi\in \sub{\phi}$ and $w\in W$. In particular, we have that $\M',\rootnode\models^{r} \phi^\dagger$. 
Therefore, it remains to show that $\M',\rootnode\models^{r} \zeta$.
\begin{itemize}
	\item For \eqref{eq:exptime2}, suppose that $w\in W$ is such that $\M',\rootnode\models^{w} \D Q_\psi(x)$. Then there is some $u\in W'$ such that $\rootnode R' u$ and $\M',u\models^{w} Q_\psi(x)$. By definition we have that $\M,w\models \psi$. It then follows from \eqref{IH:QKdec-RtoL} that $\M',\rootnode\models^{w} \psi^\dagger$. Consequently, we have that $\M'\rootnode\models^{a_0}\forall x (\D Q_\psi(x) \to \psi^\dagger)$.
	
	\item For \eqref{eq:exptime1}, suppose that $w\in W$ is such that $\M',\rootnode\models^{w} \psi^\dagger$. Then by \eqref{IH:QKdec-RtoL}, we have that $\M,w\models \psi$. By definition, $w\in \I(u,Q_\psi)$, for all $u\in W$ and so $M',\rootnode\models^{w} \B Q_\psi(x)$. Consequently, we have that $\M',\rootnode \models^{a_0} \forall x (\psi^\dagger \to \B Q_\psi(x))$.
\end{itemize}	
	
Hence, we have that $\M',\rootnode\models^{r} (\zeta\land \phi^\dagger)$, which is to say that $(\zeta\land \phi^\dagger)$ is satisfiable with respect to $\QKdec$, as required.
\end{itemize}

Since the satisfiability problem for the propositional modal logic $\Ku$ is \ExpTime-hard, so too must be that of the one-variable counting-free fragment $\QKdec\cap \QMLone[0]$, as required.
\end{proof}

This result, together with that of Theorem~\ref{thm:QK-unbounded}, places the complexity of the satisfiability problem for each of the fragments $\QKdec\cap\QMLone[\ell]$, for $\ell<\omega$, and that of $\QKdec\cap \QMLoneun$ between \ExpTime-hard and \NExpTime. However, it remains open as to where their precise complexities lie.

\begin{question}
Is the complexity of the satisfiability problem for each of the fragments $\QKdec\cap \QMLone[\ell]$, for $\ell<\omega$, strictly less than that of their union $\QKdec\cap\QMLone$?
\end{question}
\begin{question}
	Is the complexity of the satisfiability problem for $\QKdec\cap\QMLone[0]$ strictly less than that of $\QKdec\cap\QMLone[1]$? 
\end{question}

\section{Applications to two-dimensional propositional modal logics}
\label{sec:bimodal}
First-order modal logics are intimately related to another extensively studied formalism; that of many-dimensional modal logics~\cite{Shehtman1978,Segerberg1973,GKWZ03,Kurucz2007,MarxVenema1997}.  %
Given a countably infinite set of propositional variables $\propvar=\{p_0,p_1,\dots\}$, let $\ML_2$ denote the set of bimodal formulas defined in accordance to the following grammar:
\begin{equation*}
\phi\ ::=\ p_i \ \mid \ \neg \phi \ \mid \ (\phi_1\land \phi_2) \ \mid \ \Dh\phi \ \mid \ \Dv \phi
\end{equation*}
where $p_i\in \propvar$, and $\Dh$ and $\Dv$ are modal operators, with subscripts suggestive of the `horizontal' and `vertical' dimensions in which they are to operate. 
Formulas of $\ML_2$ are interpreted over Kripke models $\M=(\F,\V)$, where $\F=(W,R_h,R_v)$ is a \emph{bimodal Kripke frame}, with $R_h,R_v\subseteq W^2$, and \mbox{$\V:\propvar \to \Pow{W}$} is a propositional valuation on $\F$. Satisfiability is defined in the usual way with $\D_j\phi$ being interpreted by the relation $R_j$, for $j=h,v$. 
Of particular interest are \emph{product} models in which the two modal operators act orthogonally: We define the product of two unimodal frames $\F_h=(W_h,R_h)$ and $\F_v=(W_v,R_v)$ to be the bimodal frame $\F_h\times \F_v= (W_h\times W_v,\overline{R}_h,\overline{R}_v)$, where
\begin{eqnarray*}
	(u,v) \overline{R}_h (u',v') &\quad \iff \quad &  u R_h u'\ \mbox{and}\ v=v',\\[5pt]
	(u,v) \overline{R}_v (u',v') & \iff &  u=u'\ \mbox{and}\ v R_v v',
\end{eqnarray*}
for all $u,u'\in W_h$ and $v,v'\in W_v$. 
A formula $\phi\in \ML_2$ is said to be \emph{satisfiable with respect to $L_h\times L_v$} if it is satisfiable in some product model $\F_h\times \F_v$, where $\F_h$ and $\F_v$ are frames for $L_h$ and $L_v$, respectively.

%
%
The product construction was first described by Segerberg in~\cite{Segerberg1973} for the case where both components were frames for $\Sfive$, and was later generalised to arbitrary frames by Shehtman~\cite{Shehtman1978}.

A natural extension of the product construction is to consider \emph{subframes} of product frames, in which only a subset of the possible worlds of $\F_h\times \F_v$ are possible. 
We say that $\G=(W,R_h,R_v)$~is:
\begin{itemize}
	\item[--] an \emph{expanding product model} if $(u,v)\in W$ implies $(u,v')\in W$, and 
	\item[--] a \emph{decreasing product model} if $(u,v')\in W$ implies $(u,v)\in W$,
\end{itemize}
whenever $v R_h v'$.

It was observed by Wasjberg~\cite{Wajsberg1933} that the one-variable fragment of first-order logic can be interpreted as a syntactic variant of the modal logic $\Sfive$ of all equivalence frames. This can be naturally extended to the one-variable fragment of first-order modal logics under the translation that maps $\D\phi := \Dh\phi$ and $(\exists x \phi) := \Dv\phi$~\cite{GabbayShehtman1993}. 
From this, we obtain the following proposition. 

\newcommand{\subf}{\mathit{sf}}
\begin{proposition}
\label{prop:LxS5}
	{\rm (i)} The satisfiablity problem for $\bfL\times \Sfive$ is equivalent to that of $\QL\cap \QMLonecf$. {\rm (ii)} The satisfiablity problem for $\bfL\times^{\subf} \Sfive$ is equivalent to that of $\QL^{\subf}\cap \QMLonecf$, for $\subf\in \{\ex,\dc\}$.
\end{proposition}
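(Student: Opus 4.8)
The plan is to make Wajsberg's / Gabbay--Shehtman's translation explicit and verify that it is a size-preserving reduction in both directions. Define mutually inverse translations $(\cdot)^\sharp\colon\QMLonecf\to\ML_2$ and $(\cdot)^\flat\colon\ML_2\to\QMLonecf$ by $P_i(x,\dots,x)^\sharp=p_i$, commuting with the Booleans, $(\D\psi)^\sharp=\Dh\psi^\sharp$ and $(\exists x\,\psi)^\sharp=\Dv\psi^\sharp$, and dually $p_i^\flat=P_i(x)$, $(\Dh\chi)^\flat=\D\chi^\flat$, $(\Dv\chi)^\flat=\exists x\,\chi^\flat$. The aim is to show that $\phi$ is satisfiable w.r.t.\ $\QL$ (resp.\ $\QLexp$, $\QLdec$) if and only if $\phi^\sharp$ is satisfiable w.r.t.\ $\bfL\times\Sfive$ (resp.\ $\bfL\etimes\Sfive$, $\bfL\dtimes\Sfive$); as $(\cdot)^\sharp$ and $(\cdot)^\flat$ are mutually inverse and linear in size, this gives the claimed equivalence of satisfiability problems.

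\textbf{Model transformation.} Given a one-variable first-order model $\M=(\F,D,\dom,\I)$ with $\F=(W,R)$, I would pass to the bimodal model $\M^\sharp$ whose frame is
\[
	W^\sharp=\{(w,a):w\in W,\ a\in\dom(w)\},\quad (w,a)\,R_h\,(w',a')\iff wRw'\ \text{and}\ a=a',\quad (w,a)\,R_v\,(w',a')\iff w=w',
\]
with valuation $\V(p_i)=\{(w,a):a\in\I(w,P_i)\}$. The structural point is that $R_v$ is always an equivalence relation whose classes are the columns $\{w\}\times\dom(w)$, so the vertical component is automatically a frame for $\Sfive$; shrinking $D$ to a single column makes it a single-cluster one, which is precisely what the $\exists x$/$\Dv$ correspondence requires. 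An induction on $\psi$ --- Boolean cases immediate, the $\D$-case unwinding horizontally with the individual held fixed, the $\exists x$-case ranging over $\dom(w)$, that is, over the $R_v$-cluster of $(w,a)$ --- then gives $\M,w\models^a\psi\iff\M^\sharp,(w,a)\models\psi^\sharp$ for all $w\in W$ and $a\in\dom(w)$, entirely analogous to Lemma~\ref{lem:qm}. Conversely, from a bimodal model on a frame $(V,R_h,R_v)$ that is a (subframe of a) product I would read off $W$ as the first projection of $V$, $R$ as the induced horizontal relation, $D$ as the $\Sfive$-cluster, $\dom(w)=\{a:(w,a)\in V\}$, and $\I(w,P_i)$ from $\V(p_i)$, and run the same induction with $(\cdot)^\flat$.

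\textbf{Matching the frame classes.} This is where (i) and (ii) part ways. If $\M$ has \emph{constant} domains then $W^\sharp=W\times D$, so $(W^\sharp,R_h,R_v)=\F\times(D,D\times D)$ is a genuine product of $\F$ with an $\Sfive$-frame, and hence a frame for $\bfL\times\Sfive$ exactly when $\F$ is a frame for $\bfL$; combining this with the converse construction yields (i). If $\M$ is \emph{expanding} (resp.\ \emph{decreasing}), unwinding $wRw'\Rightarrow\dom(w)\subseteq\dom(w')$ (resp.\ $\supseteq$) through the definition of $R_h$ shows that $W^\sharp$ is precisely an expanding (resp.\ decreasing) subframe of $\F\times(D,D\times D)$ in the sense of the definition above, and the monotonicity of $\dom$ is visibly preserved in both directions; this yields (ii).

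\textbf{Main obstacle.} The one genuinely delicate point is the converse direction of (ii): starting from an expanding or decreasing subframe $\G$ of $\F_h\times\F_v$ in which $\phi^\sharp$ holds at $(u_0,v_0)$, one must recover a first-order model whose underlying frame is \emph{still} a frame for $\bfL$ and whose domain function is non-empty at \emph{every} world. I would deal with both at once by replacing $\G$ by its point-generated subframe $\G_0$ from $(u_0,v_0)$: every world of $\G_0$ then carries at least one point, so the recovered $\dom$ is everywhere non-empty; $\G_0$ is still expanding (resp.\ decreasing) because the vertical relation is universal on each surviving column; and the first-coordinate projection of $\G_0$ with its induced relation is a subframe of $\F_h$, hence --- for the logics considered here, whose frame classes are closed under subframes (and, for the expanding case, already under generated subframes) --- still a frame for $\bfL$. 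After that, what remains is bookkeeping: that $(\cdot)^\sharp$ and $(\cdot)^\flat$ are inverse and size-preserving, that the round trip returns the original model on the generated part, and the $\D$/$\Dh$ and $\exists x$/$\Dv$ clauses of the induction, all carried out as in Lemma~\ref{lem:qm} and the model surgery of Theorem~\ref{thm:QKexp-unbounded}.
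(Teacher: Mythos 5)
Your route is the one the paper itself gestures at: the paper states Proposition~\ref{prop:LxS5} without proof, as the extension of Wajsberg's observation under the translation $\D\mapsto\Dh$, $\exists x\mapsto\Dv$, and your pairing construction $(w,a)$ with $a\in\dom(w)$ is the standard way to flesh that out. Part (i) and the expanding half of part (ii) are fine as you present them (your generated-subframe repair for empty columns is reasonable, though your appeal to ``closure under subframes'' of the frame class for $\bfL$ only covers the instances the paper actually uses, such as $\bfL=\K$, since the first-coordinate projection need not be a \emph{generated} subframe of $\F_h$).

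There is, however, a genuine gap in the decreasing half of part (ii), and it is not the obstacle you single out. Your $\D$-case, ``unwinding horizontally with the individual held fixed,'' presupposes that the individual $a$ still exists at the successor world; over decreasing domains it need not, and then $(v,a)\notin W^\sharp$ while the paper's satisfaction clause for $\D$ still counts $v$ as a witness (evaluating $\psi$ at $v$ under $a$, where $a$ falsifies every atom). So the pointwise claim $\M,w\models^a\psi\iff\M^\sharp,(w,a)\models\psi^\sharp$ fails, and in fact under the paper's stated clauses the translation is not even satisfiability-preserving: the formula $\D\neg P(x)\ \land\ \B\,\forall x\,P(x)$ is satisfied at $w$ under the assignment $a$ in the decreasing model with $wRv$, $\dom(w)=\{a,b\}$, $\dom(v)=\I(v,P)=\{b\}$ (the second conjunct only constrains objects existing at $v$, while the first is witnessed at $v$ by the no-longer-existing $a$), yet its translation $\Dh\neg p\land\Bh\Bv p$ is unsatisfiable at any point $(w,a)$ of any subframe of a product with an $\Sfive$ second component, since any horizontal witness $(v,a)$ for $\Dh\neg p$ is a reflexive vertical successor of itself and hence forced to satisfy $p$; prefixing with $\exists x$ and $\forall x\,\B$ gives a sentence with the same behaviour. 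To close the gap you must either make explicit (and justify as the intended reading) the convention that satisfaction is only defined for pairs $(w,a)$ with $a\in\dom(w)$ and that $\D$ ranges only over successors at which the current individual persists --- which is exactly what the decreasing subframe product builds in --- or prove the equivalence of the two satisfiability problems for $\dc$ by a different reduction, e.g.\ via the $\E$-relativization of Section~4. Your ``main obstacle'' paragraph (non-empty domains via point-generated subframes) addresses a much milder issue and leaves this one untouched.
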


A closely related sub-logic of $\Sfive$ is von~Wright's `logic of elsewhere' $\Diff$, characterized by the class of all \emph{difference frames} of the form $(W,R_{\ne})$ in which $uR_{\ne} v$ if and only of $u \not=v$, for all $u,v\in W$~\cite{vonWright1979}. Segerberg later provided a complete axiomatisation for $\Diff$, identifying it as the logic of all symmetric, weakly-transitive frames, with the following axioms~\cite{Segerberg1980}:
\begin{equation*}
\mathit{(sym)} = p \to \B\D p\quad \mbox{and} \quad
\mathit{(wtran)} = \D\D p \to \D p \lor p
\end{equation*}

By extending Wajsberg's result for $\Sfive$, we can reduce the satisfiability problem for $\bfL\times \Diff$ to that of with a fragment of $\QL\cap \QMLone[1]$, with the aid of some counting quantifiers.

\begin{proposition}
\label{prop:LxDiff}
{\rm (i)} The satisfiablity problem for $\bfL\times \Diff$ is equivalent to that of $\QL\cap \QMLone[1]$. {\rm (ii)} The satisfiablity problem for $\bfL\times^{\subf} \Diff$ is equivalent to that of $\QL^{\subf}\cap \QMLone[1]$, for $\subf\in \{\ex,\dc\}$.
\end{proposition}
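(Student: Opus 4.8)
The plan is to extend Wajsberg's embedding underlying Proposition~\ref{prop:LxS5}, now reading the vertical dimension of the product as the first-order domain equipped with the \emph{inequality} relation in place of the universal one. I would fix a translation $(\cdot)^\star:\ML_2\to\QMLone[1]$ with
\begin{gather*}
 p_i^\star := P_i(x),\qquad (\neg\psi)^\star := \neg\psi^\star,\qquad (\psi_1\land\psi_2)^\star := \psi_1^\star\land\psi_2^\star,\\
 (\Dh\psi)^\star := \D\psi^\star,\qquad (\Dv\psi)^\star := \exists x\,\psi^\star \;\land\; \neg\bigl(\psi^\star\land\exists[\leq 1]x\,\psi^\star\bigr),
\end{gather*}
where $\exists x$ abbreviates the subscript-$0$ quantifier $\neg\exists[\leq 0]x$. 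The last clause is the heart of the matter: over a difference frame, $\Dv\psi$ holds at an object $a$ precisely when some object \emph{distinct from} $a$ satisfies $\psi$, and ``there is a $\psi$-witness, but $a$ is not the unique $\psi$-witness'' expresses exactly this while using only quantifiers of subscript $\leq 1$. Hence $\phi^\star$ uses the single variable $x$, has capacity $\leq 1$, and so lies in $\QMLone[1]$; conversely, since $\exists[\leq 1]x\,\psi$ is equivalent to $\forall x(\psi\to\neg\Dv\psi)$, the two formalisms are inter-expressible, which is why $\QMLone[1]$ (and not a larger fragment) is the correct target.

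For part~(i) I would first reduce to products whose vertical component is a genuine difference frame $(V,{\ne})$: if $\F_v=(V,S)$ validates $\Diff$ then $S$ is symmetric and weakly transitive, so ``$v\sim v'$ iff $vSv'$ or $v=v'$'' is an equivalence relation on $V$ and $S$ is the disjoint union, over its classes, of the difference relations on those classes; thus $\F_h\times\F_v$ is a disjoint union of frames of the form $\F_h\times(C,{\ne})$, and any formula satisfied in it is satisfied in the generated subframe of one such piece, which is again a frame for $\bfL$ times a difference frame. Then, given a product model $\M$ on $\F_h\times(V,{\ne})$ I build a constant-domain first-order model $\M'$ over $\F_h$ with domain $V$ and $\I(u,P_i):=\{v\in V:\M,(u,v)\models p_i\}$; conversely, from a first-order model over an $\bfL$-frame $\F_h$ with domain $D$ I form the product model on $\F_h\times(D,{\ne})$ with $\V(p_i):=\{(u,v):v\in\I(u,P_i)\}$, noting that $(D,{\ne})$ is a difference frame and hence a frame for $\Diff$. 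In either direction one proves, by induction on $\psi\in\sub{\phi}$, that $\M,(u,v)\models\psi$ iff $\M',u\models^v\psi^\star$; the atomic, Boolean and $\Dh$ cases are routine, and the $\Dv$ case is precisely the equivalence noted above, the one delicate subcase being when $v$ itself is a $\psi^\star$-witness but there are at least two of them.

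Part~(ii) uses the very same translation. Here the vertical dimension of an expanding (resp. decreasing) product model corresponds to an expanding (resp. decreasing) first-order domain, since the defining condition of an expanding product model amounts to saying that the vertical slice available over a horizontal world can only grow along the $\Dh$-relation, i.e.\ $\dom(u)\subseteq\dom(u')$ whenever $uR_hu'$ (and dually for decreasing). The disjoint-union reduction above restricts to subframes without change, so we may again take the vertical component to be a difference frame, and the inductive argument goes through verbatim with the quantifiers in the $\Dv$-clause now ranging over $\dom(u)$, which matches the set of $\Dv$-successors of $(u,v)$; this mirrors Proposition~\ref{prop:LxS5}(ii).

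The main obstacle is the verification of the $\Dv$-clause: one must check, in \emph{both} directions of the biconditional, that $\exists x\,\psi^\star\land\neg(\psi^\star\land\exists[\leq 1]x\,\psi^\star)$ faithfully captures the ``elsewhere'' reading of $\Dv$ while never introducing a quantifier of subscript $>1$. A secondary, routine point is the passage from an arbitrary frame for $\Diff$ to a difference frame inside the product. Beyond these, the proof is the standard Wajsberg-style induction already used for Proposition~\ref{prop:LxS5}.
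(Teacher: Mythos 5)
Your translation has the right semantic content --- your clause for $\Dv\psi$ expresses exactly the ``elsewhere'' reading, and it coincides with the paper's abbreviation $\exists^{\ne}x\,\psi := (\neg\psi^\dagger\land\neg\exists[\leq 0]x\,\psi^\dagger)\lor\neg\exists[\leq 1]x\,\psi^\dagger$ --- but there is a genuine gap in how you deploy it. You substitute the elsewhere-definition \emph{inline}: $(\Dv\psi)^\star$ contains three copies of $\psi^\star$, so a formula with $n$ nested $\Dv$'s translates to a formula of size about $3^n$. The proposition is a complexity-theoretic equivalence (it is what lets the paper transfer the \NExpTime\ and \PSpace\ upper bounds of Theorems~\ref{thm:QK-unbounded} and~\ref{thm:QKexp-bounded} to $\K\times\Diff$ and $\K\etimes\Diff$ in Corollary~\ref{cor:QK-bounded} ff.), so the reduction from $\ML_2$ to $\QMLone[1]$ must be polynomial; with an exponential translation the upper-bound direction collapses (e.g.\ you would only get a doubly exponential bound for $\K\times\Diff$). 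The paper avoids this exactly as it does for $\exists[=c]$ in Section~\ref{sec:definitions}: translate $(\Dv\psi)^\dagger$ to a \emph{fresh surrogate predicate} $Q_\psi(x)$ and add the single global axiom $\zeta=\bigwedge_{\psi\in\sub{\phi}}\forall x\bigl(Q_\psi(x)\liff\exists^{\ne}x\,\psi\bigr)$, which keeps the whole reduction polynomial since each conjunct mentions $\psi^\dagger$ only a bounded number of times. Your argument becomes correct once you make this change, but as written it does not prove the proposition in the sense in which it is used.

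A secondary point: your normalisation to genuine difference frames is stated incorrectly. Frames for $\Diff$ are the symmetric, weakly transitive frames, and these may contain \emph{reflexive} points (a single reflexive point validates both $\mathit{(sym)}$ and $\mathit{(wtran)}$), so the relation is in general \emph{not} the disjoint union of the difference relations on the $\sim$-classes --- it may properly contain them. The conclusion you want is still true, but the standard route is via p-morphisms: a point-generated symmetric weakly transitive frame is the p-morphic image of a difference frame obtained by duplicating its reflexive points, and composing with the identity on the horizontal component gives a p-morphism of (subframes of) products, so satisfiability over $\bfL\times\Diff$ coincides with satisfiability over products with the vertical component a true difference frame. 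With that repair, and the surrogate-predicate repair above, your induction (including the expanding/decreasing bookkeeping in part~(ii)) goes through and matches the paper's argument.
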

\begin{proof}
Let $\phi\in \ML_2$ be a propositional bimodal formula, and let $P_i\in \pred$ be a unary predicate symbol associated with each propositional variable $p_i\in \sub{\phi}$. 
We define the translation $(\cdot)^\dagger:\ML_2 \to \QMLone[1]$ by taking
\begin{gather*}
	p_i^\dagger = P_i(x), \qquad (\neg \psi)^\dagger = \neg \psi^\dagger, \qquad (\psi_1\land \psi_2)^\dagger = \psi_1^\dagger \land \psi_2^\dagger,\\[5pt]
	(\Dh\psi)^\dagger = \D\psi^\dagger, \qquad 
	(\Dv\psi)^\dagger = Q_\psi(x),
\end{gather*}
for each $p_i\in \sub{\phi}$. 
Take $\zeta\in \QMLone[1]$ to be the following conjunction:
\begin{equation}
	\bigwedge_{\psi\in \sub{\phi}}\forall x \big(Q_{\psi}(x) \liff \exists^{\ne}x\ \psi\big),
\end{equation}
where $\exists^{\ne}x\ \psi := (\neg \psi^\dagger \land \neg \exists[\leq 0] x\ \psi^\dagger) \lor \neg \exists[\leq 1] x\ \psi^\dagger$ specifies the existence of some \emph{other} domain object satisfying $\psi$. Note that, since $\size{\psi^\dagger}\leq \size{\psi}$, for all $\psi\in \sub{\phi}$, we have that $\zeta$ is at most polynomial in the size of $\phi$.

For each subframe product model $\M=(\G,\V)$, where $\G\subseteq \F_h\times \F_v$, we associate a first-order Kripke model $\M^\star=(\F,D,\dom,\I)$, by taking $\F=(W_h,R_h)$, $D=W_v$, $\dom(u) = \{v \in W_v : (u,v)\in V\}$ and 
\begin{eqnarray*}
	\I(u,P_i) & = & \{v\in W_v : (u,v)\in \V(p_i)\},\\
	\I(u,Q_\psi) & = & \{v\in W_v : \M,(u,v)\models\D \psi\},
\end{eqnarray*} 
for all $u\in W_h$, $p_i\in \propvar$ and $\psi\in \sub{\phi}$.
It then follows from a routine induction that $\phi$ is satisfiable in $\M$ if and only if $(\zeta \land \phi^\dagger)$ is satisfiable in~$\M^\star$. 
Furthermore, it is straightforward to check that $\M^\star$ is an expanding (resp. decreasing) domain model precisely when $\M$ is an expanding (resp. decreasing) product model, therby completing the proof.
\end{proof}

By taking $\bfL$ to be the minimal modal logic $\K$, Propositions~\ref{prop:LxS5} and \ref{prop:LxDiff} yield the following corollaries of Theorems~\ref{thm:QK-unbounded},\ref{thm:QKexp-bounded}, and \ref{thm:QKdec-bounded}:

\begin{corollary}The satisfiability problem for:
\begin{itemize}\itemindent=1em
	\item[\textup{(i)}] $\K\times\Diff$ is  \mbox{\NExpTime-complete}\footnote{Is is already well-established that $\K\times\Sfive$ is also  \NExpTime-complete~\cite{Marx1999}.},
	\item[\textup{(ii)}] $\K\etimes \bfL$ is \PSpace-complete, for $\bfL\in \{\Sfive,\Diff\}$,
	\item[\textup{(iii)}] $\K\dtimes \bfL$ is \ExpTime-hard in \NExpTime, for $\bfL\in \{\Sfive,\Diff\}$.
\end{itemize}
\end{corollary}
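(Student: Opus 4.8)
The plan is to obtain all three items mechanically, by composing the language translations of Propositions~\ref{prop:LxS5} and~\ref{prop:LxDiff}, specialised to $\bfL=\K$, with the complexity results for the matching one-variable first-order modal fragments established in the earlier sections. No new argument is needed: the work lies entirely in identifying, for each product logic, exactly which first-order fragment it is (polynomially) equivalent to, and then sandwiching that fragment between a known lower bound and a known upper bound, both of which transfer across a polynomial equivalence.

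For~(i), Proposition~\ref{prop:LxDiff}(i) with $\bfL=\K$ gives that $\K\times\Diff$ is equivalent to $\QK\cap\QMLone[1]$, and Corollary~\ref{cor:QK-bounded}, instantiated at $\ell=1$, shows this fragment is \NExpTime-complete; transporting both bounds across the equivalence yields \NExpTime-completeness of $\K\times\Diff$. (The parenthetical remark on $\K\times\Sfive$ is obtained in the same way from Proposition~\ref{prop:LxS5}(i) together with the \NExpTime-completeness of $\QK\cap\QMLonecf$ recalled at the start of Section~\ref{sec:constant}.)

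For~(ii), Proposition~\ref{prop:LxS5}(ii) gives $\K\etimes\Sfive\equiv\QKexp\cap\QMLonecf$, which is \PSpace-complete by the corollary recording the \PSpace-completeness of the one-variable counting-free fragment of $\QKexp$; and Proposition~\ref{prop:LxDiff}(ii) gives $\K\etimes\Diff\equiv\QKexp\cap\QMLone[1]$, which possesses the poly-size domain property (the corresponding corollary, with $\ell=1$) and is therefore \PSpace-complete by Theorem~\ref{thm:QKexp-bounded}. For~(iii), Propositions~\ref{prop:LxS5}(ii) and~\ref{prop:LxDiff}(ii) give $\K\dtimes\Sfive\equiv\QKdec\cap\QMLonecf$ and $\K\dtimes\Diff\equiv\QKdec\cap\QMLone[1]$; the \ExpTime lower bound comes from Theorem~\ref{thm:QKdec-bounded}, whose reduction already lands in the counting-free fragment, so $\QKdec\cap\QMLonecf$ is \ExpTime-hard and hence so is the larger fragment $\QKdec\cap\QMLone[1]$; and the \NExpTime upper bound comes from Corollary~\ref{cor:unbounded}, since both fragments sit inside $\QKdec\cap\QMLone$.

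There is no genuine difficulty in the argument --- the only point requiring care is the accounting of fragments. One must verify that the translation underlying Proposition~\ref{prop:LxDiff} never introduces a quantifier subscript exceeding $1$, so that its image lies in $\QMLone[1]$ and the bounded-capacity results of Section~\ref{sec:expanding} and Theorem~\ref{thm:QKdec-bounded} genuinely apply; and that the counting-free fragment $\QMLonecf$ is a sub-fragment both of $\QMLone[1]$ (so the \ExpTime-hardness lifts upward) and of $\QMLone$ (so the \NExpTime upper bounds still cover it). Given these inclusions, each clause of the corollary is a one-step consequence of the cited results.
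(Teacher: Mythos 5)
Your proposal is correct and follows exactly the route the paper intends: the corollary is stated there as an immediate consequence of Propositions~\ref{prop:LxS5} and~\ref{prop:LxDiff} (with $\bfL=\K$) combined with Theorem~\ref{thm:QK-unbounded}/Corollary~\ref{cor:QK-bounded}, Theorem~\ref{thm:QKexp-bounded} (plus the poly-size domain corollary), Corollary~\ref{cor:unbounded} and Theorem~\ref{thm:QKdec-bounded}, with hardness and membership transported across the stated polynomial equivalences. Your added bookkeeping — that the $\Diff$ translation only uses subscripts $0$ and $1$, and that the counting-free fragment sits inside $\QMLone[1]\subseteq\QMLone$ — is precisely the accounting the paper leaves implicit.
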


These results mark a stark contrast against the negative results one often faces when taking two-dimenisonal products with von~Wright's logic, which are often vastly more complex than their corresponding $\Sfive$-counterparts. In particular, the satisfiability problem for $\Ku\times\Diff$ is undecidable~\cite{Hampson2012}, while that of $\Ku\times\Sfive$ is decidable in \NExpTime[2]~\cite[Theorem 6.5]{GKWZ03}. A similar jump in complexity arises for products in which the horizontal component is characterised by some class of linear orders, such as with $\Kfourt\times\Sfive$ whose satisfibility problem is decidable in \ExpTime[2], whereas that of $\Kfourt\times\Diff$ is undecidable~\cite{Hampson2015}.

%

\section{Discussion}
\label{sec:discussion}
Throughout this paper, we have focused our attention on fragments of $\QML$ comprising a single first-order variable, owing to the wealth of negative results that already exist for two-variable modal logics~\cite{Kripke1962}. However, there remains scope to investigate the role of counting quantifiers to fragments that lie beyong the one-variable fragment, such as the \emph{monodic fragment}, briefly described in Section~\ref{sec:intro}~\cite{Wolter1998}

However, it should be noted that there is no immediate application of the techniques developed in~\cite{Wolter1998}, which the authors employed to prove that the (counting-free) monodic fragment of $\QKc$ is decidable, where $\Kc$ denotes the bimodal logic of all frames whose second relation is the transitive closure of the first. It is known, however, that even the one-variable fragment $\QKc\cap \QMLone[1]$, whose sole counting quantifiers are $\exists[\leq 0] x$ and $\exists[\leq 1] x$, is already non-recursively enumerable~\cite{Hampson2012} (indeed, even highly undecidable~\cite[Theorem~8.5]{HampsonPhD}).\\

The one-variable counting-free fragment $\QKfour \cap \QMLonecf$ is known to admit filtration and so it's satsifiability problem can be decided in \NExpTime[2]~\cite{GabbayShehtman1998,GKWZ03}, where $\Kfour$ is the logic of all \emph{transitive frames}. 
However, it remains open whether the the satisfiability problem for the full one-variable fragment \mbox{$\QKfour\cap \QMLone$} or for any of the fragments $\QKfour \cap \QMLone[\ell]$ are decidable, for $0<\ell<\omega$. 
It is tempting to consider whether the results of Section~\ref{sec:expanding}, can be adapted to provide a similar \PSpace\ upper-bound on the satisfaibility problem for $\QKfour^\ex\cap \QMLone[\ell]$, analogous to Ladner's $\mathsf{K4\mbox{-}WORLD}$ algorithm. However, this approach fails since the finite domain property, proved in Lemma~\ref{lem:polysize}, was contingent on every posible world having at most one predecessor. Indeed, it is not difficult to construct examples of satisfiable formulas that cannot be satisfied in models having only finitely many domain elements~\cite[Theorem 5.32]{GKWZ03}.\\

The main question left open in this paper asks what is the precisely complexity of the satisfiability problem for each of the fragments $\QKdec \cap \QMLone[\ell]$, for $\ell<\omega$. 
One might suppose that, given the lack of the poly-size domain property, that one may be able to derive an \NExpTime-hard lower-bound via a reduction from the $(2^n\times 2^n)$-tiling problem~\cite{emdeBoas1997}, \textit{a la} Marx~\cite{Marx1999}. 
However, the reduction employed by Marx relies heavily on both left and right commutativity between the modal operators and first-order quantifiers.

Indeed, part of the problem we face with logics over decreasing domains is the inability for branches to `communicate' directly with one another as they do in constant domain models, where every object in one branch is shared between every other branch of the model. Contrast this with the situation in decreasing domain models, where the first-order domains at each of the leaves of the underlying frame may be wholly disjoint from one~another. 
There may, therefore, be hope that an \ExpTime\ algorithm for this fragment may yet be uncovered.

\appendix
\section{Implementation of QKworld Algorithm}
\label{app:QKworld}
\strut\\
\newcommand{\Continue}{\textbf{next}}
\newcommand{\Break}{\textbf{break}}
\newcommand{\false}{\textbf{FALSE}}
\newcommand{\true}{\textbf{TRUE}}
\newcommand{\Input}[1]{\textbf{Input:} #1\\}
\begin{algorithm}[H]\small
\SetAlgoLined
\Input{$\Sigma$, $N$, $k$, $t$, $\lambda$}
\If{$t>N$}{\Return \false}

\ForAll{$i\leq t$}{
	\ForAll{$\neg\psi\in \Sigma$}{
		\textbf{if} $\neg\psi\in \lambda(i)$ and $\psi\in \lambda(i)$ \textbf{then} \Return \false\;
		\textbf{if} $\neg\psi\not\in \lambda(i)$ and $\psi\not\in \lambda(i)$ \textbf{then} \Return \false\;
	}
	\ForAll{$(\psi_1\land \psi_2)\in \Sigma$}{
		\textbf{if} $(\psi_1\land \psi_2)\in \lambda(i)$ and $\{\psi_1,\psi_2\}\not\subseteq \lambda(i)$ \textbf{then} \Return \false\;
		\textbf{if} $(\psi_1\land \psi_2)\not\in \lambda(i)$ and $\{\psi_1,\psi_2\}\subseteq \lambda(i)$ \textbf{then} \Return \false\;
	}
	
\ForAll{$(\exists[\leq c]x \ \psi)\in \Sigma$}{
	$\mathsf{count} := 0$\;	
	\ForAll{$j\leq t$}{
		\textbf{if} $\psi\in \lambda(j)$ \textbf{then} $\mathsf{count} = \mathsf{count} + 1$\;
	}
	\textbf{if} $(\exists[\leq c]x\ \psi)\in \lambda(i)$ and $\mathsf{count}>c$ \textbf{then} \Return \false\;
	\textbf{if} $(\exists[\leq c]x\ \psi)\not\in \lambda(i)$ and $\mathsf{count}\leq c$ \textbf{then} \Return \false\;
}
}

\ForAll{$i\leq t$ \textup{and} $\D\psi\in \lambda(i)$}{
	$\mathsf{flag} :=\false$\;
	\ForAll{$t'\in\{t,\dots, N\}$ \textup{and} $\lambda':\{0,\dots, t'\} \to \Pow{\Sigma}$}{
		\textbf{if} $\psi\not\in \lambda'(i)$ \textbf{then} \Continue\;
		\textbf{if} \textup{$\QKworld(\Sigma,N,k-1,t',\lambda')=\false$} \textbf{then} \Continue\;
		$\mathsf{flag} := \true$\;
		\Break
	}
	\textbf{if} \textup{$\mathsf{flag}=\false$} \textbf{then} \Return \false\;

}
\Return \true

 \caption{$\QKworld$ algorithm}
\end{algorithm}


\end{document}